\title{Envy-free matchings with cost-controlled quotas}
\author{Girija Limaye \and 
Meghana Nasre} 
\institute{Indian Institute of Technology Madras, India\\
\email{\{girija,meghana\}@cse.iitm.ac.in}}
\authorrunning{G. Limaye, M. Nasre}
\newcolumntype{P}[1]{>{\centering\arraybackslash}p{#1}}
\newcommand{\AAA}{\mathcal{A}}
\newcommand{\BBB}{\mathcal{P}}
\newcommand{\mpref}{\succ}
\newcommand{\lpref}{\prec}
\newcommand{\HR}{\sf HR}
\newcommand{\HRLQ}{\sf HRLQ}
\newcommand{\SMFQ}{\sf CCQ}
\newcommand{\SMFQUSET}{\sf MINSUM}
\newcommand{\CCQCC}{{\sf CCQ_{c1,c2}}}
\newcommand{\minmax}{\sf {MINMAX}}
\newcommand{\s}{a}
\newcommand{\cc}{p}
\newcommand{\NP}{\sf NP}
\newcommand{\Poly}{\sf P}
\newcommand{\OPT}{{\sf OPT}}
\newcommand{\SC}{\sf{SC}}
\newcommand{\MVC}{\sf{MVC}}
\newlist{strangenumerate}{enumerate}{2}
\setlist[strangenumerate]{noitemsep}
\setlist[strangenumerate,1]{label={\arabic*}}
\setlist[strangenumerate,2]{label={\arabic{strangenumeratei}.\arabic*}}
\newcounter{savedctr}
\begin{document}

\maketitle
\begin{abstract}
	We consider the problem of assigning agents to programs in the presence of two-sided
preferences, commonly known as the Hospital Residents problem. In the standard setting each
program has a rigid upper-quota which cannot be violated. Motivated by applications where
quotas are governed by resource availability, we propose and study the problem of computing optimal matchings with
cost-controlled quotas -- denoted as the ${\SMFQ}$ setting. 

In the $\SMFQ$ setting we have a cost associated with every program which denotes the cost of
	matching a single agent to the program. 
Our goal is to compute a matching that matches {\em all} agents,
	respects the preference lists of agents and programs and is optimal with respect to the cost criteria.
	We consider {\em envy-freeness} as a notion of optimality and study two optimization problems with respect to the costs -- minimize the total cost ($\SMFQUSET$) and minimize the maximum cost at a program ($\minmax$). We show that there is a sharp contrast
in the complexity status of these two problems -- $\minmax$ is polynomial time solvable whereas $\SMFQUSET$ is $\NP$-hard and hard to approximate within a constant factor unless $\Poly=\NP$ even under severe restrictions. On the positive side, we present approximation algorithms for the $\SMFQUSET$ for the general case and a special hard case. 
We achieve the approximation guarantee for the special case via a technically involved linear programming (LP) based algorithm. We remark
that our LP is for the general case of the problem.

\keywords{Matchings under two-sided preferences, Envy-free matchings, Stable extensions, School choice, Minimum cost matchings, Hardness of approximation, Approximation algorithms, Linear Programming}
\end{abstract}

\section{Introduction}
The problem of computing optimal matchings in the many-to-one setting and two-sided preferences
has been {extensively} investigated since it is applicable in several real-world applications like
assigning students to schools~\cite{AAST03} or elective courses~\cite{DBLP:conf/atal/OthmanSB10},
under-graduate students to university programs~\cite{BCCKP19}, medical interns (resident doctors)
to hospitals~\cite{Roth} and many more.
This setting, commonly called as the Hospital Residents ($\HR$) setting, is modelled as a bipartite graph $G = (\AAA \cup \BBB, E)$
where $\AAA$ and $\BBB$ denote the set of agents and programs respectively and 
$(a,p) \in E$ if and only if $a\in\AAA$ and $p\in\BBB$ are mutually acceptable to each other.
For a vertex $v$, let $\mathcal{N}({v})$ denote the vertices adjacent to $v$.
Each vertex $v \in \AAA \cup \BBB$ ranks the elements in $\mathcal{N}(v)$ in a  strict order and 
this ranking is called the {\em preference list} of that vertex.
If $x$ prefers $y$ over $z$, we denote it by $y \mpref_x z$.
In the standard setting a program $p$ has a positive
upper-quota $q(p)$. 

A {\em matching} $M \subseteq E$ in the $\HR$ instance $G$ is an assignment of agents to programs such that
each agent is matched to at most one program and a program is matched to at most $q(p)$ many agents.
Let $M(a)$ denote the program that agent $a$ is matched to in $M$ ($M(a) = \bot$ if $a$ is unmatched) and $M(p)$ denote the set of agents
matched to program $p$ in matching $M$. An agent $a$ prefers being matched to one of  the programs in $\mathcal{N}(a)$ over remaining unmatched.
A program $p$ is called {\em under-subscribed} in $M$ if $\mid\hspace{-0.1cm}M(p)\hspace{-0.1cm}\mid$ $< q(p)$.
{\em Stability} is a well-accepted notion of optimality in this setting and  is defined as follows.
\begin{definition}[Stable matchings in $\HR$ setting]
A pair $(a, p) \in E \setminus M$ is a blocking pair w.r.t. the matching $M$ if $p \mpref_a M(a)$ and $p$ is either under-subscribed in $M$ or there exists at least one agent $a' \in M(p)$ such that $a \mpref_p a'$. A matching $M$ is {\em stable} if there is no blocking pair w.r.t. $M$.
\end{definition}

It is well known that every $\HR$ instance 
admits a stable matching and all stable matchings are of the same size~\cite{GS62}.
Size of a matching plays a very important role in real-world applications where leaving agents unmatched is undesirable and sometimes even
unacceptable. 
In applications like school choice~\cite{AAST03} every child
must find a school. In case of matching sailors to billets in the US Navy~\cite{yang2003two,robards2001applying},
{ every} sailor must be assigned to some billet, apart from some additional constraints.
Relaxing stability in order to enable larger size matchings has been investigated in literature~\cite{huang2011popular,DBLP:journals/tcs/BiroMM10}. 
In all such works, there is an inherent assumption that the quotas of programs  
are {\em rigid} and cannot be compromised.

However, there are real-world applications where quotas 
are determined by logistic considerations like resource availability, classroom size and these may be flexible.
For instance, every semester elective allocation for under-graduate students at an educational institute
happens via an automated procedure and once the preferences 
are available to the academic office, course instructors are consulted to adjust class capacities if appropriate.
A recent work by Gajulapalli~et~al.~\cite{Vazirani} studies the
school choice problem in a two round-setting -- in the first round the quotas given as input are considered rigid. In 
 the second round though, some schools may increase their quota as suggested by the mechanism 
 in order to match {\em all} the students in a specific set in a {\em stability preserving} manner. 

Motivated by applications where $\AAA$-perfectness (matching {\em every} agent) is mandated  but is impossible to achieve in the presence of rigid quotas,
we introduce and study the setting where costs control {the number of agents matched to a program.
We denote this as the Cost Controlled Quota  ($\SMFQ$) setting.
An instance $H = (\AAA \cup \BBB, E)$ in the $\SMFQ$ setting is similar to an $\HR$ instance 
except that programs do not have input quotas, 
instead 
a program $p$ has a finite, non-negative integer cost $c(p)$ denoting
the cost of matching an agent to $p$.
A matching $M \subseteq E$ in the $\SMFQ$ setting is 
an assignment of agents to programs
where a program $p$ can be matched to as many agents as possible in $\mathcal{N}(p)$ 
and the number of agents assigned is controlled by costs.
If a program is not assigned any agents in a matching, we call the program as being {\em closed}.
Throughout, our goal is to compute an $\AAA$-perfect matching which is optimal
with respect to preferences as well as  costs.
We remark that a very recent work by Santhini~et~al.~\cite{Santhini} studies the
cost-based setting for matchings with one-sided preferences. However, the
problems considered in~\cite{Santhini} are very different from ours.

We first define the notion of optimality 
in the $\SMFQ$ setting w.r.t. the preferences and subsequently with respect to the costs.
Stability is inherently defined using the input quotas.
Envy-freeness~\cite{WR18}, a relaxation of stability, is 
a natural substitute since it is defined {\em without} input quotas. 
In absence of input quotas, envy-freeness is equivalent to stability.

\begin{definition}[Envy-free matchings]\label{def:efm}
Given a matching $M$, an agent $a$ has a {\em justified envy} (here onwards called envy) towards a matched agent $a'$, where $M(a') = p$ 
and $(a,p) \in E$ if $p \mpref_a M(a)$ and $a \mpref_p a'$.
The pair $(a, a')$ is an envy-pair w.r.t. $M$.  A matching $M$ is envy-free if there is no envy-pair w.r.t. $M$.
\end{definition}

Given a $\SMFQ$ instance, our goal is to
compute an $\AAA$-perfect envy-free matching 
subject to the following two optimization criteria with respect to costs.
\begin{enumerate}
	\item {\bf Minimize the total cost:} The total cost of a matching $M$ is defined as $\sum_{p \in \BBB} (\mid\hspace{-0.1cm}M(p)\hspace{-0.1cm}\mid$$\cdot c(p))$. Our goal is to compute an $\AAA$-perfect envy-free matching that minimizes the total cost -- we denote this as the $\SMFQUSET$ problem.
\item {\bf Minimize the maximum cost:} The maximum cost spent at a program for a matching $M$ is defined as $\max_{p \in \BBB} \{\mid\hspace{-0.1cm}M(p)\hspace{-0.1cm}\mid$$\cdot c(p)\}$.
Our goal is to compute an $\AAA$-perfect envy-free matching that minimizes the maximum cost -- we denote this  as the $\minmax$ problem.
\end{enumerate}
\begin{figure}[!ht]
\begin{minipage}{0.2\textwidth}
	\begin{align*}
		\s_1 &: \cc_1 \mpref \cc_0\\
		\s_2 &: \cc_1 \mpref \cc_0\\
		\s_3 &: \cc_1 \mpref \cc_0\\
		\s_4 &: \cc_1 \mpref \cc_2 \mpref \cc_0\\
		\s_5 &: \cc_2 \mpref \cc_3
	\end{align*}
\end{minipage}\hspace{1cm}
	\begin{minipage}{0.3\textwidth}
		\begin{align*}
			(0)\ \cc_0 &: \s_1 \mpref \s_2 \mpref \s_3 \mpref \s_4\\
			(1)\ \cc_1 &: \s_1 \mpref \s_2 \mpref \s_3 \mpref \s_4\\
			(6)\ \cc_2 &: \s_4 \mpref \s_5\\
			(11)\ \cc_3 &: \s_5
		\end{align*}
	\end{minipage}\hfill
	\begin{minipage}{0.3\textwidth}
		\begin{align*}
			M = \\ \{(a_1,p_1), (a_2,p_1),\\ (a_3,p_1), (a_4,p_1), \\(a_5,p_2)\}
		\end{align*}
	\end{minipage}
	\vspace{0.3cm}

	\caption{A $\SMFQ$ instance}
	\label{fig:challenges_1}
\end{figure}

Consider the  $\SMFQ$ instance $H$ in Fig.~\ref{fig:challenges_1} with five agents
and four programs. The preferences of the agents and the programs are as given in the figure. The number in the bracket beside a program denotes the cost of matching an agent to the respective program.
The matching $M$ shown in the figure is $\AAA$-perfect, as well as envy-free.
The total cost of $M$ is $10$ and the max-cost of $M$ is $6$. 
It is easy to verify that in the instance $H$,
the matching $M$ is an optimal solution for both the $\SMFQUSET$ as well as the $\minmax$ problem;
this need not be true in general.
The $\SMFQUSET$ can be viewed as a global objective whereas the $\minmax$ is a
local objective function.

\vspace{0.1in}
\noindent {\bf Our Contributions:} 
This is the first work that investigates the cost-controlled quotas
under two-sided preferences. We present an efficient algorithm for the
$\minmax$ problem whereas in a sharp contrast, the $\SMFQUSET$ turns out to be $\NP$-hard under
severe restrictions. To address the hardness, we present a novel  linear program for the $\SMFQUSET$ problem
and give primal dual approximation algorithm for a special case.
We state our results formally below.

\begin{theorem}
\label{thm:minmaxInP}
	The $\minmax$ problem is solvable in $O(m\log{m})$ time where $m = \mid\hspace{-0.1cm}E\hspace{-0.1cm}\mid$.
\end{theorem}
\begin{theorem}\label{thm:smfq_hardness}
The following hardness results hold for the  $\SMFQUSET$ problem.

\begin{enumerate}
	\item The $\SMFQUSET$ problem is strongly $\NP$-hard even when every agent has a preference list of length 
	exactly $f \geq 2$, there is a master list
	ordering on agents and programs and the instance has two distinct costs. \label{hpart1}
	\item The $\SMFQUSET$ problem cannot be approximated within a factor $\frac{7}{6}-\epsilon, \epsilon > 0$
	even when the instance has three distinct costs, unless $\Poly = \NP$.\label{hpart2}
	\end{enumerate}
\end{theorem}

We say that the instance has a {\em master list}~\cite{IRVING20082959} on agents if there exists a fixed ordering of agents such that the preference lists of every program obeys that ordering. A similar definition holds for a master list on programs.
In the $\SMFQ$ setting since $\AAA$-perfectness is guaranteed, it is natural for agents to submit  short preference lists. 
However, since there is a guarantee that every agent is matched, the central authority
is likely to impose a minimum requirement on the length of the preference list~\cite{SFAS}.
We further note that it is easy to compute the optimal solution for the $\SMFQUSET$  problem 
under the two extreme scenarios
-- when the preference lists are unit length or  when the preference lists are complete.

We complement our hardness results with the following approximation algorithms.
Let $\CCQCC$ denote the $\SMFQ$ instance with two distinct costs $c_1$ and $c_2$
such that $0 \leq c_1 < c_2$.
Let $\ell_a$ (respectively $\ell_p$) denote the length of the longest preference
list of an agent (respectively a program).

\begin{theorem}\label{thm:c1c2}
The $\SMFQUSET$ problem admits an $\ell_a$-approximation algorithm on $\CCQCC$ instances.
\end{theorem}
We achieve our approximation guarantee via a technically involved linear programming (LP) based algorithm and show that our analysis is tight. 
We remark that our LP is for the general $\SMFQUSET$ problem but the approximation guarantee is for the restricted setting. 
In many real-world applications, $\ell_a$ is typically small, in many cases a constant~\cite{IRVING2009213,10.1007/3-540-68530-8_32}. 

Finally, we present two simple approximation algorithms for $\SMFQUSET$ on general instances.
\begin{theorem} The $\SMFQUSET$ problem admits the following  approximation algorithms.
        \begin{enumerate} 
			\label{thm:smfq_approx}
                \item a linear time $\ell_p$-approximation algorithm. 
                        \label{part3}

		\item a $\mid\hspace{-0.1cm}\BBB\hspace{-0.1cm}\mid$-approximation algorithm\label{part2}.
        \end{enumerate}
\end{theorem}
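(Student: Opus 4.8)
The plan is to handle both parts through a single structural observation about stable (equivalently, envy‑free) matchings in the $\SMFQ$ setting. Call a set $S\subseteq\B$ a \emph{cover} if every agent ranks at least one program of $S$, and for such an $S$ let $M_S$ be the matching that assigns each agent $a$ to the most preferred program on $a$'s list that lies in $S$. I claim $M_S$ is $\A$‑perfect and stable. It is $\A$‑perfect because every agent has an acceptable program in $S$. It is stable because a blocking pair $(a,q)$ would need some $a'\in M_S(q)$, so $M_S(q)\neq\emptyset$, so $q\in S$ (the only programs that get agents under $M_S$ are those in $S$); together with $(a,q)\in E$ this gives $q\in S$ and $q$ acceptable to $a$, so $M_S(a)\succeq_a q$, contradicting $q>_a M_S(a)$. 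Finally, since matched pairs are mutually acceptable, $M_S(p)$ is contained in the set of agents on $p$'s preference list, hence $|M_S(p)|\le \ell_p$ for every $p$.

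For part~\ref{part3} I would, for each agent $a$, let $p_a$ be a minimum‑cost program on $a$'s preference list (ties broken arbitrarily), set $S=\{p_a: a\in\A\}$, and output $M_S$. This is linear time: one scan of the preference lists to compute all $p_a$, a second scan (with a boolean marker for $S$) to compute each $M_S(a)$. For the guarantee, note the lower bound $\OPT\ge \sum_{a\in\A} c(p_a)$, because in any feasible matching each agent $a$ pays at least $c(p_a)=\min\{c(q): q \text{ on } a\text{'s list}\}$. Then, writing $\mathrm{cost}(M)=\sum_{p\in\B}|M(p)|\,c(p)$,
$\mathrm{cost}(M_S)=\sum_{p\in S}|M_S(p)|\,c(p)\le \ell_p\sum_{p\in S}c(p)\le \ell_p\sum_{a\in\A}c(p_a)\le \ell_p\cdot\OPT$,
where the middle inequality uses that $S$ is a set (so its total cost is at most the cost summed over agents with multiplicity) and that costs are non‑negative.

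For part~\ref{part2} the plan is to invoke Theorem~\ref{thm:minmaxInP}: in polynomial time compute a stable $\A$‑perfect matching $M$ minimizing the maximum cost $\max_{p\in\B}\{|M(p)|\,c(p)\}$, call this optimal value $C^*$, and output $M$. The $\SMFQUSET$‑optimal matching is a stable $\A$‑perfect matching, so its maximum cost is at least $C^*$; its maximum cost is also at most its total cost, hence $C^*\le\OPT$. Since $M$ has at most $|\B|$ open programs, each contributing at most $C^*$,
$\mathrm{cost}(M)=\sum_{p:\,M(p)\neq\emptyset}|M(p)|\,c(p)\le |\B|\cdot C^*\le |\B|\cdot\OPT$.

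The routine steps (the linear‑time accounting, the two lower bounds) are short; the step that actually carries the argument is the structural observation in the first paragraph — specifically, that in $\SMFQ$ one may \emph{close} every program outside an arbitrary cover $S$ without ever creating a blocking pair, which is exactly what fails under rigid quotas and is what lets the cheap cover $\{p_a\}$ and the $\minmax$ solution be turned into clean $\ell_p$‑ and $|\B|$‑approximations. The only mild care needed is well‑definedness of $p_a$ under ties and verifying that $M_S$ is computable within a linear number of operations.
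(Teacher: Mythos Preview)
Your proposal is correct and follows essentially the same approach as the paper: your cover construction with $S=\{p_a:a\in\A\}$ and the matching $M_S$ is precisely the paper's second $\ell_p$-approximation algorithm (denoted {\sf ALG} in Section~\ref{sec:lpapprox}), with the same stability argument and the same lower bound $\OPT\ge\sum_a c(p_a)$; and your part~\ref{part2} argument via the $\minmax$ optimum is identical to the paper's Section~\ref{sec:h-approx}. The only cosmetic difference is that you bound the cost via $|M_S(p)|\le\ell_p$ and $\sum_{p\in S}c(p)\le\sum_a c(p_a)$ directly, whereas the paper phrases the same inequality as a charging argument partitioning agents into those matched to their own $p^*_a$ versus someone else's.
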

The analysis of our $\ell_p$-approximation algorithm uses a natural lower-bound on the $\SMFQUSET$ problem. We also present a family of instances which shows that $\ell_p$ is the best guarantee that can be achieved using the particular lower-bound.
We establish 
that an optimal solution of the $\minmax$ problem also serves as a lower-bound for the $\SMFQUSET$ problem on the same instance and this gives us the $\mid\hspace{-0.1cm}\BBB\hspace{-0.1cm}\mid$-approximation algorithm.

\noindent{\bf Empirical  Evaluation.}
We implement our $\ell_p$-approximation and $\mid\hspace{-0.12cm}\BBB\hspace{-0.12cm}\mid$-approximation algorithms
presented in our paper for the general $\SMFQ$ instances for the $\SMFQUSET$ problem
and evaluate them on the $\SMFQ$ instances derived from real and synthetically generated $\HR$ instances.
For every algorithm, we report its execution time and the approximation guarantee obtained.
Additionally, we also measure the parameters that indicate the quality of the output matching.
Finally we report values for parameters that allows us to compare the $\SMFQ$ model and the $\HR$ model.

\vspace{0.1in}
\noindent{\bf Relation to other models. }
There are alternate ways to formulate an optimization problem in the $\SMFQ$ setting:
(i) given a total budget $\mathcal{B}$,
compute a largest envy-free matching with cost at most $\mathcal{B}$. 
(ii) given an $\HR$ instance and a cost for every program, {\em augment} the input quotas to compute
an $\AAA$-perfect envy-free matching with minimum total cost. 
The $\NP$-hardness for both these problems can be proved
by easily modifying the $\NP$-hardness reduction for $\SMFQUSET$.
As mentioned earlier, Gajulapalli~et~al.~\cite{Vazirani} consider the school choice problem
in a two-round setting. 
In the second round, their goal is to match {\em all} agents
in a particular set derived from the matching in the first round and they need to match them in an envy-free manner (called stability preserving in their work).
The $\SMFQ$ setting generalizes the matching problem in round-2 as follows:
If $c(p)$ denotes the cost of matching an agent to a program in round-2 then $\SMFQUSET$
computes an $\AAA$-perfect envy-free matching where the total cost of matching agents in round-2 is minimized.
Let $d(p)$ denote the deviation of program $p$, that is, the additional number of agents
matched to $p$ in round-2 beyond its input quota $q(p)$.
Then by setting $c(p) = 1$ for every program, the $\minmax$ problem computes an $\AAA$-perfect envy-free
matching in round-2 such that maximum deviation of a program is minimized.
We remark that in~\cite{Vazirani} the authors state that a variant of $\SMFQUSET$ problem (Problem~33, Section~7)
is $\NP$-hard. However, they do not investigate the problem in detail.

\noindent{\bf Other Related Work. }
For a review of results related to stable matchings in the $\HR$ setting,
we refer the reader to~\cite{GS62,GI89}.
Envy-freeness (also called as fairness) is a well-studied notion of optimality.
Structural properties of envy-free matchings in the $\HR$ setting are investigated in~\cite{WR18}. 
In the $\HRLQ$ setting ($\HR$ setting wherein hospitals have {\em lower-quotas})
envy-free matchings are studied in~\cite{Yokoi20,DBLP:SAGT20,FITUY15}.

We review the work where quotas are {\em flexible} or replaced by other constraints.
Flexible quotas in the college admission setting are studied in~\cite{smfqs}.
In their setting, no costs are involved but colleges may have ties in the preference lists and 
flexible quotas are used for tie-breaking at the last matched rank. 
In the student-project allocation setting, the problem of minimizing the maximum and total deviation from the initial target
is studied in~\cite{DBLP:phd/ethos/Cooper20}. 
A setting where courses make monetary transfers to students and have budget constraints is studied in~\cite{AAAI1817032}
in which they propose a new notion of optimality, namely {\em approximate stability}.
Funding constraints are 
studied in~\cite{aziz2020summer} in the context of allocating student interns to the projects funded by supervisors.

As mentioned earlier Santhini~et~al.~\cite{Santhini} consider a cost-based quota setting which is same as ours. However, their problems are in the one-sided 
preference list setting and they show an efficient algorithm to compute a min-cost matching of a desired {\em signature}. The signature allows to encode requirements about the number of agents matched to a particular rank. 
This result is in contrast to the hardness results we show for similar optimization problems
in this paper under the two-sided preference setting.

\vspace{0.2cm}
\noindent{\em Organization of the paper: }
In Section~\ref{sec:c1c2} we present a linear programming formulation for $\SMFQUSET$ problem. 
In Section~\ref{sec:lpalgo} we give an  LP based approximation algorithm for $\SMFQUSET$ on $\CCQCC$ instances. 
In Section~\ref{sec:ucalgo}, we present algorithmic results for $\minmax$ and $\SMFQUSET$ for general instances.
We present hardness 
results for the $\SMFQUSET$ problem
in Section~\ref{sec:uchardness}.
In Section~\ref{sec:exp} we present empirical results for our algorithms.
{ We discuss open questions in Section~\ref{sec:disc}  and conclude.}

\section{A Linear programming formulation for $\SMFQUSET$}\label{sec:c1c2}
In this section we present a linear program (LP) for the general $\SMFQUSET$ problem and 
discuss the challenges involved in designing a primal dual algorithm.

\subsection{Linear Program and its dual}
\begin{figure}[!ht]
{\footnotesize
\setlength\columnsep{55pt}
\noindent{\bf Primal: minimize}
\begin{equation}\label{eq:lp1}
	\sum\limits_{\cc\in \BBB}{c(\cc)\cdot\sum\limits_{(\s,\cc)\in E}{x_{\s,\cc}}}
\end{equation}
\noindent{\bf subject to}
\begin{equation}\label{eq:lp2}
		\sum_{\substack{p': \\p' = p\ \text{or} \\ \cc' \mpref_{\s'} \cc}}{x_{\s',\cc'}} \geq x_{\s,\cc}, \  \forall (\s',\cc)\in E, \s \lpref_{\cc} \s'
\end{equation}
\begin{equation}\label{eq:lp3}
	\sum\limits_{(\s,\cc)\in E}{x_{\s,\cc}} = 1,\ \ \ \forall \s \in \AAA
\end{equation}
\begin{equation}\label{eq:lp4}
	x_{\s,\cc} \geq 0,\ \ \ \forall (\s,\cc) \in E
\end{equation}
\\
\noindent{\bf Dual: maximize}
	\begin{equation}\label{eq:dualobj}
	\sum\limits_{a\in \AAA}{y_a}
\end{equation}
\noindent{\bf subject to}
\begin{equation}
\label{eq:lp5}
		y_a + \sum_{\substack{p': \\p' = p\ \text{or} \\ p' \lpref_{a} p}}\ \ {\sum\limits_{\substack{a':\\a' \lpref_{p'} a}}{z_{a,p',a'}}} - \sum\limits_{\substack{a':\\ a \lpref_p a'}}{z_{a',p,a}} \\\leq c(p),\ \  \ \ \forall (a,p)\in E
\end{equation}
	\begin{equation}\label{eq:lp7}
	z_{a',p,a} \geq 0,\ \ \ \forall (a',p) \in E, a \lpref_p a'
\end{equation}
}
\caption{Linear Program and its dual for the $\SMFQUSET$ problem}
\label{fig:lp-dual}
\end{figure}

Fig.~\ref{fig:lp-dual} shows the LP for the $\SMFQUSET$ problem. Let $H = (\AAA \cup \BBB, E)$ be a $\SMFQ$ instance.
Let $x_{a,p}$ be a primal variable for the edge $(a, p) \in E$: $x_{a, p}$ is $1$ if $a$ is matched to $p$, $0$ otherwise. The objective of the primal LP
(Eq.~\ref{eq:lp1}) is to minimize the total cost of all matched edges.
 Eq.~\ref{eq:lp2} encodes the envy-freeness constraint: if agent $a$  is matched to $p$ then every agent $a' \mpref_p a$ must be matched to either $p$ or a higher-preferred program than $p$, otherwise $a'$ envies $a$. In the primal LP, the envy-freeness constraint is present for a triplet $(a', p, a)$ where
 $a'\mpref_p a $. We call such a triplet a {\em valid} triplet. Eq.~\ref{eq:lp3} encodes $\AAA$-perfectness constraint.

\begin{figure}[!ht]
\centering
\includegraphics[scale=0.9]{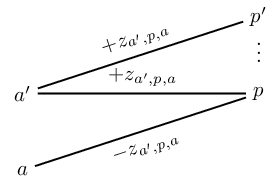}
	\vspace{0.3cm}
	\caption{Let $(a',p,a)$ be a valid triplet and
	$p' \mpref_{a'} p$. The edges shown in the figure are those whose dual constraint contains the variable $z_{a',p,a}$ in either positive or negative form.}
        \label{fig:envy_dual}
\end{figure}

In the dual LP, we have two kinds of variables, the $y$ variables which correspond to every agent
and the $z$ variables that correspond to every valid triplet in the primal program. The dual constraint
(Eq.~\ref{eq:lp5})
is for every edge $(a, p)$ in $E$.
The $y_a$ variable corresponding to an agent $a$
appears in the dual constraint corresponding to every edge incident on $a$.
The value $y_a$
can be interpreted as the cost paid by agent $a$ for matching $a$ to one of the programs in $\mathcal{N}(a)$.
For an edge $(a, p)$ and an agent $a' \mpref_p a$, the dual variable $z_{a', p, a}$ appears in negative form in exactly one constraint
and it is for the edge $(a, p)$. The same 
dual variable $z_{a',p,a}$ appears in positive form in the constraint for every edge $(a',p')$ such that $p'=p$ or $p' \mpref_{a'} p$ (refer Fig.~\ref{fig:envy_dual}).
	The value of $z_{a', p, a}$ can be interpreted as the cost paid by agent $a$ in matching $a'$ to a program $p'$
such that $p'=p$ or $p' \mpref_{a'} p$ to resolve potential envy-pair $(a',a)$ if $a$ gets matched to $p$.
Following are the useful facts about the linear program.

\noindent{\bf Fact 1.}
Let $a$ be a fixed agent.
If $y_a$ is incremented by a positive value $\Delta$ then it increments the left-hand side (lhs) of the 
dual constraint of {\em every} edge $(a,p)$ by $\Delta$ and it does not affect
the dual constraint of any edge incident on agent $a' \neq a$.\qed

\noindent{\bf Fact 2.}
Let $(a',p,a)$ be a fixed valid triplet.
If $z_{a',p,a}$ is incremented by a positive value $\Delta$
then it increments the lhs of the dual constraint of
{\em every} edge $(a',p')$ by $\Delta$ such that $p' = p$ or $p' \mpref_{a'} p$,
reduces the lhs of the dual constraint of {\em exactly one} edge $(a,p)$ by $\Delta$
and does not affect the dual constraint of any edge incident on agent $a'' \notin \{a,a'\}$.\qed

The following notation is used in illustrating the challenges
and in our $\ell_a$-approximation algorithm in the next section.
For a given dual setting and an edge,
if Eq.~\ref{eq:lp5} is satisfied with equality then we call such an edge as a {\em tight edge},
otherwise it is a {\em slack edge}. For an edge $(a,p)$, $slack(a,p)$ denotes its slack.
When referring to a $z$ variable, when a specific agent or program occurring in it does not matter, we use $\times$ in its place.
\begin{definition}[Threshold agent]
Let $M$ be a matching in the instance $H$. For every program $p$, $thresh(p)$ is the most-preferred agent $a$, if it exists,
such that $p \mpref_a M(a)$, otherwise $thresh(p)$ is $\bot$.
\end{definition}
The definition of threshold agent is similar to the threshold resident defined in \cite{MNNR18} and a barrier (vertex) defined in \cite{Vazirani}. 
We remark that the threshold agent depends on the matching $M$, hence when $M$ gets modified, the threshold agents for programs may change. 
\begin{definition}[Matchable edge]
For an envy-free matching $M$, and an agent $a$ (matched or unmatched),
we say that an edge $(a, p) \notin M$ is matchable if the dual constraint on $(a, p)$ is tight and $a = thresh(p)$,
	otherwise the edge is non-matchable.
\end{definition}
It is straightforward to verify that for an envy-free matching $M$, if we match agent $a$ along a matchable edge then the resultant matching
remains envy-free.

\subsection{Challenges}

A standard primal-dual approach for the $\SMFQUSET$ problem would be to begin with
 a dual feasible solution. The algorithm then repeatedly updates the dual till we obtain a primal feasible solution using the tight edges w.r.t. to
the dual setting.
We illustrate the challenges in using such an approach for the general $\SMFQUSET$ problem. 
Consider the $\SMFQUSET$ instance in Fig.~\ref{fig:challenges_1}.
Assume that we begin with an initial dual setting where all dual variables are set to $0$. 
The matching $M  = \{(a_1, p_0), (a_2, p_0), (a_3, p_0), (a_4,p_0)\}$ obtained on the tight edges
is envy-free but does not match agent $a_5$ and hence is not primal feasible.
Since no edge incident on $a_5$ is tight (slack on $(a_5, p_2)$ and $(a_5, p_3)$ is 6 and 11 respectively) we 
can set $y_5$ to 6 while maintaining dual feasibility. We observe that while this update makes the edge $(a_5, p_2)$
tight, adding the edge to the matching $M$ introduces an envy pair -- namely $a_4$ envying $a_5$. 
We note that this is the our first difficulty,
that is, while there are tight edges incident on an unmatched
agent, none of them may be matchable. 

The second difficulty stems from the following: in order to match $a_5$ along the (non-matchable) tight edge 
$(a_5,p_2)$ we must first resolve
the potential envy pair $(a_4, a_5)$, we must
{\em promote} agent $a_4$. With the current dual setting, $y_4$ cannot be increased hence a natural way is to
update a $z$ variable. This can indeed be achieved by setting $z_{{a_4}, {p_2}, {a_5}} = 1$, thus making 
$(a_4, p_1)$ tight. However, as encountered earlier, this edge is not 
matchable, since matching $a_4$ to
$p_1$ introduces several other envy pairs. Note that this chain of potential envy resolutions is triggered by the unamatched agent $a_5$. 
		Since, this chain can be arbitrarily long, several $z$ updates may be required.
		It is not immediate if
			these updates in $z$ variables can be charged to an update in some $y$ variable, thereby achieving a reasonable
		approximation ratio.

\section{An $\ell_a$-approximation algorithm for $\SMFQUSET$ on $\CCQCC$}\label{sec:lpalgo}
In this section we show that  when the $\SMFQ$ instance has only two distinct costs $c_1$ and $c_2$,
we are able to circumvent the challenges and obtain an $\ell_a$-approximation algorithm for the $\SMFQUSET$ problem.
We recall from Theorem~\ref{thm:smfq_hardness} that even in this restricted setting,
the problem remains {\sf NP}-hard.

\noindent {\bf High-level idea of the algorithm. }
Our LP based algorithm begins with an initial
feasible dual setting and an envy-free matching $M$ which need not be $\AAA$-perfect.
As long as $M$ is not $\AAA$-perfect, we pick an unmatched agent $a$ and increase the dual variable $y_a$.
We show that for an unmatched agent such an increase is possible and {\em all} edges incident on $a$ become tight due to the update.
However, none of the edges incident on $a$ may be matchable (since for every $p \in \mathcal{N}(a)$, $thresh(p) \neq a$). 
Under the restricted setting of two distinct costs we ensure that after a bounded number of updates to the 
$z$ variables,
at least one edge incident on $a$ is matchable.
Throughout we maintain the following invariants with respect to the matching $M$. 
\begin{itemize}
\item $M$ is envy-free, not necessarily $\AAA$-perfect and every matched edge is tight.
\item For an agent $a$ (matched or unmatched), for every $p \mpref_{a} M(a)$, 
	either (i) $(a,p)$
		is tight and $thresh(p) \neq a$ or (ii) $slack(a,p) = c_2-c_1$.
\end{itemize}
We remark that when the matching is modified, thresholds may change,
due to which a tight, non-matchable edge may become matchable.
As long as there exists such an edge, we match it.
This is achieved by the {\em free-promotions} routine. 
The free-promotions routine checks if there exists a matchable edge $(a,p)$.
If there is no such edge, the routine terminates.
Otherwise, it matches $(a,p)$, re-computes the threshold agents and repeats the search.
Checking for a matchable edge and computing threshold agents takes
$O(m)$ time where $m$ is the number of edges in the underlying graph. Since, no agent is demoted in this process, 
the free-promotions routine runs in $O(m^2)$ time.

\noindent{\bf Description of the algorithm. }
Algorithm~\ref{algo:dualalgo_2} gives the pseudo-code.
In Fig.~\ref{fig:exinst} we give an illustrative example 
which depicts the 
key steps of the algorithm on a $\CCQCC$ instance. 
We begin with an empty matching $M$ and by setting all $y$ variables to $c_1$ and all $z$ variables to $0$ (line~\ref{line:init1}). Following this, for every
agent $a$ with a cost $c_1$ program in $\mathcal{N}(a)$
we match the agent to its most-preferred program with cost $c_1$ ({\bf for} loop at line~\ref{line:forloop}).
Next, we compute the threshold agent for every program w.r.t. $M$.
As long as $M$ is not $\AAA$-perfect, we pick an arbitrary unmatched agent $a$ and update the dual variables as follows.
	
\begin{algorithm}[!ht]
	\begin{algorithmic}[1]
		\State let $M = \emptyset$, all $y$ variables are set to $c_1$ and all $z$ variables are set to $0$\label{line:init1}
		\For {every agent $a \in \AAA$ s.t. $\exists p \in \mathcal{N}(a)$ such that $c(p) = c_1$}\label{line:forloop}
		    \State let $p$ be the most-preferred program in $\mathcal{N}(a)$ s.t. $c(p) = c_1$ and let
		    $M = M \cup \{(a,p)\}$\label{line:c1p}
		\EndFor
		\State compute $thresh(p)$ for every program $p\in \BBB$\label{line:beforeloop}
		\While {$M$ is not $\AAA$-perfect}\label{line:loop1a2}
			\State let $a$ be an unmatched agent\label{line:picka}
			\While {$a$ is unmatched}\label{line:loop_a}
				\State set $y_a = y_a + c_2 - c_1$ \label{line:yra2}
				\If {there exists a matchable edge incident on $a$}
					\State $M = M \cup \{(a,p) \mid (a,p)$ is the most-preferred matchable edge for  $a\}$\label{line:lineif}
						\State perform free-promotions routine and re-compute thresholds
				\Else
					\State $\BBB(a) = \{ p \in \mathcal{N}(a) \mid p \mpref_a M(a), (a,p)$
						is tight  and $thresh(p) \neq a\}$\label{line:ba}
					\While {$\BBB(a) \neq \emptyset$}\label{line:loop2a2}
						\State let $a'$ be the threshold agent of some program in $\BBB(a)$  \label{line:rprimea2}
						\State let $\BBB(a,a')$ denote the set of programs in $\BBB(a)$ whose threshold agent is $a'$\label{line:paaprime}
						\State let $p$ be the least-preferred program for $a'$ in $\BBB(a, a')$  \label{line:pickp}
						\State set $z_{a',p,a} = c_2-c_1$ \label{line:za2}
						\State let $(a',p')$ be the most-preferred matchable edge incident on $a'$.
						Unmatch $a'$ if matched and
						let $M = M \cup \{(a', p')\}$ \label{line:pro1a2}
	    			    		\State execute free-promotions routine, re-compute thresholds and the set $\BBB(a)$\label{line:bar}
		        		\EndWhile
				\EndIf
			\EndWhile
		\EndWhile
		\State return $M$
	\end{algorithmic}
	\caption{Algorithm to compute an $\ell_a$-approximation of $\SMFQUSET$ on $\CCQCC$}
	\label{algo:dualalgo_2}
\end{algorithm}

\begin{strangenumerate}
    \item For the agent $a$, we increase $y_a$ by $c_2-c_1$. We ensure that
	    the dual setting is feasible 
		and all edges incident on $a$ become tight for the dual constraint in Eq.~\ref{eq:lp5}. 
		\label{step:atight}
 Although this step makes all edges incident on $a$ tight, they may not be necessarily matchable. Recall
that a tight edge $(a, p)$ is matchable if $thresh(p) = a$. 
    \item If there is a program $p$ such that $(a, p)$ is matchable, then $a$ is immediately matched to the most-preferred such program $p$ (line~\ref{line:lineif})
    and we are done with matching agent $a$.
		   Since the matching is modified, we execute free-promotions routine.
    \item In case there is no such program for which $a$ is the threshold agent,
	    we update carefully selected $z$ variables in order to either promote the threshold agent (if matched) or
		match the (unmatched) threshold agent via the following steps.
		\begin{strangenumerate}
		    \item We compute the set $\BBB(a)$ of programs $p \in \mathcal{N}(a)$
			    such that the dual constraint on edge $(a,p)$ is tight and
			    $thresh(p) \neq a$ and $p \mpref_a M(a)$ (line~\ref{line:ba}).
			    In other words, $\BBB(a)$ is the set of programs in the neighbourhood of $a$
				such that $p$ is higher-preferred over $M(a)$ and edge $(a,p)$ is tight but not matchable.
			    \label{step:ba}
		    \item By the definition of $\BBB(a)$, for every $p_j \in \BBB(a)$,
			    there exists $thresh(p_j) = a' \neq a$.
				We pick an arbitrary agent $a'$ that is a threshold of some program in $\BBB(a)$ (line~\ref{line:rprimea2}).
			    Note that the agent $a'$ can be the threshold agent of more than one programs in $\BBB(a)$,
				and we let $\BBB(a,a')$ denote the set of programs in $\BBB(a)$ for whom $a'$ is the threshold.
				Let $p$ be the least-preferred program for $a'$ in $\BBB(a,a')$ (line~\ref{line:pickp}).
				\label{step:pickp}
			\item 	Our goal is to match $a'$ to a program $p'$ such that $p' = p$ or $p' \mpref_{a'} p$.
				By the choice of $a, a'$ and $p$ and from the primal LP, 
				$(a',p,a)$ is a valid triplet and therefore there exists a 
				dual variable $z_{a',p,a}$ (refer Fig.~\ref{fig:envy_dual}).
				We set $z_{a',p,a}$ to $c_2-c_1$ (line~\ref{line:za2}).
				We ensure that this update maintains dual feasibility.
			    \label{step:zupdt}
		    \item 
			    Recall that the variable $z_{a',p,a}$ appears in the positive form in the dual
			    constraint of every edge $(a',p')$ such that $p' = p$ or $p' \mpref_{a'} p$.
		    We ensure that this update results in making all edges $(a',p')$ tight
				and at least one of these 
				becomes matchable.
				We match $a'$ along the most-preferred matchable edge (line~\ref{line:pro1a2}).
				Recall that $z_{a',p,a}$ variable appears in negative form in the dual constraint of edge $(a,p)$,
				hence edge $(a,p)$ becomes slack after this update.
			    \label{step:matchaprime}
		    \item Since $M$ is modified, we execute free-promotions routine.
			    If a tight edge incident on $a$ becomes matchable, then $a$ is matched inside the free-promotions routine. \label{step:fp}
		    \item We remark that the set $\BBB(a)$ computed in line~\ref{line:ba} is dependent on the matching $M$, specifically $M(a)$ and the threshold agents w.r.t. $M$. In order to maintain a specific slack value on the edges that is useful in maintaining dual feasibility and ensuring progress, we re-compute the set $\BBB(a)$ (line~\ref{line:bar}) and re-enter the loop in line~\ref{line:loop2a2} if $\BBB(a) \neq \emptyset$. \label{step:recompBa}
		\end{strangenumerate}
\end{strangenumerate}
\begin{figure}[!ht]
		\begin{center}
    \begin{minipage}{0.2\textwidth}
    \begin{align*}
          a_1 &: p_1\mpref p_2\mpref p_0\\
          a_2 &: p_2\mpref p_3\mpref p_0\\
          a_3 &: p_1\mpref p_2\mpref p_3\\
	  \\[-10pt]
	  \hline
	  \\[-10pt]
                  (0)\ p_0 &: a_1\mpref a_2\\
                  (1)\ p_1 &: a_1\mpref a_3\\
                  (1)\ p_2 &: a_1\mpref a_2\mpref a_3\\
                  (1)\ p_3 &: a_2\mpref a_3
                  \end{align*}
	  \end{minipage}\hfill
          \begin{minipage}{0.65\textwidth}
		  \vspace{0.5cm}
		\begin{itemize}
            \item $M = \{(a_1,p_0), (a_2,p_0)\}$
			\item \textcolor{blue}{(\ref{step:atight})} $a = a_3$, $y_{a_3} = 1$, tight edges on $a_3$ are $\{(a_3,p_1)$, $(a_3,p_2)$, $(a_3,p_3)\}$, $thresh(p_1) = thresh(p_2) = a_1$ and $thresh(p_3) = a_2$
			\item \textcolor{blue}{(\ref{step:ba})} $\BBB(a_3) = \{p_1, p_2, p_3\}$
			\item \textcolor{blue}{(\ref{step:pickp},\ref{step:zupdt})} let $a' = a_1$, then $p = p_2$, $z_{a_1,p_2,a_3} = 1$
			\item \textcolor{blue}{(\ref{step:matchaprime})} Tight edges on $a_1$ are $\{(a_1, p_1), (a_1, p_2)\}$, $p' = p_1$, $M = \{(a_1,p_1), (a_2,p_0)\}$, tight edges on $a_3$ are $\{(a_3,p_1), (a_3,p_3)\}$
			\item \textcolor{blue}{(\ref{step:fp})} $thresh(p_1) = a_3$, $M = \{(a_1,p_1), (a_2,p_0), (a_3,p_1)\}$
			\item \textcolor{blue}{(\ref{step:recompBa})} $\BBB(a_3) = \emptyset$
		\end{itemize}
                  \end{minipage}
			\vspace{0.3cm}
			\caption{A $\CCQCC$ instance.
			An execution of Algorithm~\ref{algo:dualalgo_2} is illustrated by giving the state of the algorithm. The blue numbers in the bracket correspond to the labels of steps mentioned in the description.}
        \label{fig:exinst}
		\end{center}
	\end{figure}

\noindent {\bf Observations. }
We observe the following properties of the algorithm.

\noindent{\bf (P1)} At line~\ref{line:beforeloop}, no agent is assigned to any program with cost $c_2$ and 
    for every agent $a$ (matched or unmatched), every program $p \mpref_a M(a)$ has cost $c_2$. \qed

    Next we observe that whenever a matched agent $a$ changes its partner from $M(a)$ to program $p$, 
    we have $thresh(p) = a$. By the definition of the threshold agent, $p \mpref_a M(a)$,
    which implies the following.

\noindent{\bf (P2)}
    A matched agent never gets demoted. \qed

{\bf Fact 1} and {\bf Fact 2} together imply that during the execution of
the algorithm, the only edge that can become slack is the edge $(a,p)$ in line~\ref{line:za2}.
Note that $a$ is an unmatched agent. 
Therefore no tight edge incident on a matched agent 
can become slack, implying the following.

\noindent {\bf (P3)} A tight edge incident on a matched agent always remains tight. \qed

We also observe that only a matchable edge is matched throughout the algorithm.
This implies that the edge is tight when matched.
By {\bf (P3)}, a matched edge (being incident on a matched agent) always remain tight,
implying the following.

\noindent {\bf (P4)} All matched edges are tight at the end of the algorithm. \qed

\subsection{Proof of correctness}
We first prove that matching $M$ is envy-free.
\begin{lemma}\label{lem:mefm}
Matching $M$ is envy-free throughout the execution of the algorithm.
\end{lemma}
\begin{proof}
	Matching $M$ is trivially envy-free after line~\ref{line:init1}.
	Any two agents $a$ and $a'$ that are matched in line~\ref{line:c1p}
	are matched to a program with cost $c_1$ and by the choice made in line~\ref{line:c1p},
	it is clear that they do not form an envy-pair.
	By {\bf (P1)}, 
	every unmatched agent $a$ has only cost $c_2$ programs in $\mathcal{N}(a)$
	thus, no unmatched agent envies an agent matched in line~\ref{line:c1p}.
	Thus, $M$ is envy-free before entering the loop at line~\ref{line:loop1a2}.

	Suppose $M$ is envy-free before a modification in $M$ inside the loop.
	We show that it remains envy-free after the modification.
	Matching $M$ is modified either at line~\ref{line:lineif} or line~\ref{line:pro1a2}
	or inside the free-promotions routine. 
	In all these places, only a matchable edge $(a_i,p_j)$ is matched.
	Therefore no agent $a' \neq a_i$ envies $a_i$ after this modification.
	Before this modification $a_i$ did not envy $a' \neq a_i$
	and by {\bf (P2)} $a_i$ (if matched) is not demoted, therefore
	$a_i$ does not envy $a' \neq a_i$ 
	after the modification. Thus, $M$ remains envy-free.
\end{proof}

Next we make the following observation about the innermost {\bf while} loop (line~\ref{line:loop2a2}).

\begin{claim}\label{cl:boundp}
    Let $a$ be a fixed unmatched agent selected in line~\ref{line:picka}
	and consider an iteration of the loop at line~\ref{line:loop_a} during which
	the algorithm enters {\bf else} part.
	Suppose during an iteration of the loop at line~\ref{line:loop2a2},
	for some $p_k \in \mathcal{N}(a)$, $p = p_k$ is selected at line~\ref{line:pickp}.
	Then at the end of iteration, $slack(a,p_k) = c_2-c_1$ and 
	$p \neq p_k$ during subsequent iterations of the loop.
	Therefore, at most $\ell_a$ many distinct $z_{\times,p_k,a}$ variables are
	updated during the iteration of the loop at line~\ref{line:loop_a}.
\end{claim}
\begin{proof}
By the choice of $p_k$, 
the edge $(a,p_k)$ was tight before this iteration.
	By {\bf Fact 2}, the update on $z_{\times,p_k,a}$ reduces the lhs of the dual constraint of
the edge $(a,p_k)$ by $c_2-c_1$.
Thus, after this update, $slack(a,p_k) = c_2-c_1$.
Therefore, when $\BBB(a)$ is re-computed at line~\ref{line:bar}, $p_k \notin \BBB(a)$.
	Also observe that no other dual update in $z_{\times,p_j,a}$ inside the loop at line~\ref{line:loop2a2}
	for $p_j \neq p_k$
	affects the slack of edge $(a,p_k)$.
Thus, in a subsequent iteration of this loop, $p_k$ is never selected as $p$ again.

	For every $p_k$ selected as $p$ in line~\ref{line:pickp},
	a distinct $z_{\times,p_k,a}$ variable is updated.
	Thus, there are at most $\mid\hspace{-0.1cm}\BBB(a)\hspace{-0.1cm}\mid$ many distinct $z_{\times,p_k,a}$ variables are
	updated inside the loop at line~\ref{line:loop2a2} in an iteration of the loop at line~\ref{line:loop_a}.
	By observing that $\BBB(a) \subseteq \mathcal{N}(a)$, we get $\mid\hspace{-0.1cm}\BBB(a)\hspace{-0.1cm}\mid \leq \ell_a$, hence the claim follows.
\end{proof}

Now, we proceed to show that the dual setting is feasible and that
the algorithm terminates in polynomial time.
Recall that if edge $(\hat{a},\hat{p})$ is non-matchable then
either $(\hat{a},\hat{p})$ is slack or $thresh(\hat{p}) \neq \hat{a}$.
In our algorithm, we maintain a stronger invariant: 
for every agent $a$ and for every program $p$ higher-preferred 
over $M(a)$, we maintain that either {\em all} non-matchable edges $(a,p)$ are slack or 
	all non-matchable edges $(a,p)$ are tight and
for {\em every} edge,
$thresh(p) \neq a$.
Moreover, we also maintain a specific slack value when the edges are slack.
We categorize agents based on these two cases (see Fig.~\ref{fig:type1} and Fig.~\ref{fig:type2}).
\begin{definition}[type-1 and type-2 agents]
An agent $a$ is called a {\em type-1} agent if for every program $p \mpref_{a} M(a)$,
$slack(a,p) = c_2-c_1$. 
An agent $a$ is called a {\em type-2} agent if $a$ is matched and
for every program $p \mpref_{a} M(a)$,
$slack(a,p) = 0$ and $thresh(p) \neq a$.
\end{definition}

\begin{figure}[!ht]
	\centering
	\includegraphics[scale=0.9]{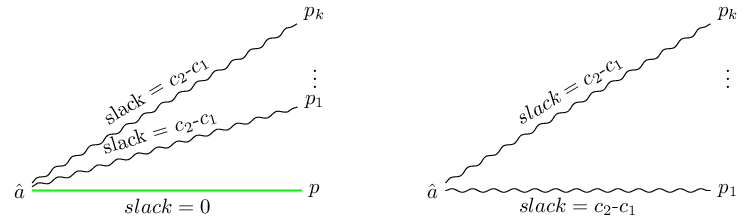}
		\caption{Type-1 agent $\hat{a}$: if matched, then $p = M(\hat{a})$ and $\forall p_j \mpref_{\hat{a}} p$ otherwise, $\forall p_j \in \mathcal{N}(\hat{a})$}
		\label{fig:type1}
	\end{figure}

\begin{figure}[!ht]
	\centering
	\includegraphics[scale=0.9]{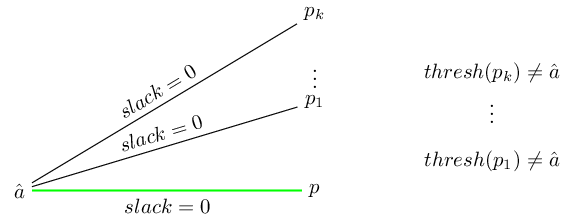}
		\caption{Type-2 agent $\hat{a}$: always matched, $p = M(\hat{a})$ and $\forall p_j \mpref_{\hat{a}} p$}
        \label{fig:type2}
\end{figure}

We remark that type-1 agent could be either matched or unmatched but type-2 agent is always matched.
Recall that if $a' = a_j$ is unmatched then $M(a_j) = \bot$ and therefore, every program $p_j \in \mathcal{N}(a_j)$
satisfies the condition that $p_j \mpref_{a_j} M(a_j) = \bot$.
We claim that a type-1 agent is selected as $a'$ at most once inside the loop at line~\ref{line:loop2a2}.

\begin{claim}\label{cl:boundaprime}
	Let $a_j$ be a type-1 agent such that $a' = a_j$ is selected in an arbitrary iteration of the loop
	at line~\ref{line:loop2a2}. Then, at the termination of the loop, $a_j$ is a type-2 agent
	and in subsequent iterations of the loop, $a' \neq a_j$.
\end{claim}
\begin{proof}
Since $a_j$ is a type-1 agent, for every program $p_j \mpref_{\hat{a}} M(\hat{a})$,
$slack(a_j, p_j) = c_2-c_1$.
Suppose $p = p_k$ is selected in line~\ref{line:pickp}.
Then by {\bf Fact 2}, for every $p_t$ such that $p_t = p_k$ or $p_t \mpref_{a_j} p_k$,
the dual update in line~\ref{line:za2} results in making all $(a_j,p_t)$ edges tight.
Also, since $thresh(p_k) = a_j$, at least one of these newly tight edges (specifically, $(a_j, p_k)$)
becomes matchable.
Therefore, $M(a_j)$ is modified inside the iteration (line~\ref{line:pro1a2}),
implying that $a_j$ is either matched or promoted.
	The choice of $M(a_j)$, that is, $p'$ in line~\ref{line:pro1a2}
	is such that for every $p_j \mpref_{a_j} M(a_j) = p'$, the edge $(a_j,p_j)$ is tight and $thresh(p_j) \neq a_j$.
	Thus, when the iteration ends, $a_j$ is a type-2 agent.

	By {\bf (P3)}, the tight edges incident on $a_j$ remain tight throughout the algorithm.
	In subsequent iterations, agent $a_j$ may further get promoted by the free-promotions routine 
such that for every $p_j \mpref_{a_j} M(a_j)$, $slack(a_j,p_j) = 0$ and $thresh(p_j) \neq a_j$.
Therefore, $a_j$ remains a type-2 agent in all subsequent iterations of the loop.
	This implies that $a_j$ is not the threshold for any program $p_j \mpref_{a_j} M(a_j)$,
	in particular for any program $p_j \in \mathcal{N}(a)$ for the chosen $a$.
Thus, during subsequent iterations of the loop,
$a' \neq a_j$.
\end{proof}

In Lemma~\ref{lem:c1_unm_cond},
we establish that at a specific step during the algorithm, every agent is either type-1 or type-2.
This property is crucial in showing dual feasibility and termination.

\begin{lemma}\label{lem:c1_unm_cond}
	Before every iteration of the loop starting at line~\ref{line:loop_a}, an agent $\hat{a}$ is either
	a type-1 agent or a type-2 agent.
\end{lemma}
\begin{proof}
We prove this by induction.
Before the first iteration of the loop at line~\ref{line:loop_a},
	suppose agent $\hat{a}$ is matched. Then {\bf (P1)} and the initial dual setting together imply that
	for every program $p_j \mpref_{\hat{a}} M(\hat{a})$, $slack(\hat{a},p_j) = c_2-c_1$.
	Therefore $\hat{a}$ is a matched type-1 agent.
	Suppose $\hat{a}$ is unmatched. Then, by {\bf (P1)}, every program $p_j \in \mathcal{N}(\hat{a})$,
	$c(p_j) = c_2$, therefore the initial dual setting implies that $slack(\hat{a}, p_j) = c_2-c_1$.
	This implies that $\hat{a}$ is an unmatched type-1 agent.

	Consider an arbitrary agent $\hat{a}$.
	Suppose that $\hat{a}$ is either type-1 or type-2
	before $l$-th iteration of the loop.
	It is clear that $a$ selected in line~\ref{line:picka} is different than $a'$ selected at line~\ref{line:rprimea2}.
	During the $l$-th iteration, either $a = \hat{a}$ in line~\ref{line:picka}
	or $a' = \hat{a}$ in line~\ref{line:rprimea2} or $\hat{a}$ is promoted inside the free-promotions routine.
	We show that in each of the cases, $\hat{a}$ is either type-1
	or type-2 before $(l+1)$-th iteration begins.

        \begin{enumerate}[label={(\roman*)}]
	\item {\bf $a = \hat{a}$ in line~\ref{line:picka}: }
		It implies that $\hat{a}$ is unmatched.
		By induction hypothesis, $\hat{a}$ is a type-1 agent, therefore
		for every $p_j \in \mathcal{N}(\hat{a})$,
			$slack(\hat{a}, p_j) = c_2-c_1$.
			Then, the update in line~\ref{line:yra2} results in making all edges incident on $\hat{a}$ tight.
			We consider the following two cases -- $\hat{a}$ remains unmatched during
			the $l$-th iteration or $\hat{a}$ gets matched.
			\begin{itemize}
				\item {\bf $\hat{a}$ remains unmatched during the $l$-th iteration: }
			Then the while loop at line~\ref{line:loop2a2} must have been executed.
			During an iteration of the loop at line~\ref{line:loop2a2},
			if $p = p_j$ then the slack of the edge $(\hat{a}, p_j)$ becomes $c_2-c_1$
					after the dual update in line~\ref{line:za2} (by {\bf Fact 2}).
			We show that for every $p_j \in \mathcal{N}(\hat{a})$, 
			there is some iteration of the loop at line~\ref{line:loop2a2}
			such that $p = p_j$ is selected, thereby implying that when the loop terminates,
			for every edge $(\hat{a}, p_j)$, slack becomes $c_2-c_1$.
					Once this is shown, it is clear that before the $(l+1)$-th iteration,
					$\hat{a}$ is a type-1 agent.

			Suppose for contradiction that for some program $p_j$, $p = p_j$ is never selected.
			Since the edge $(\hat{a}, p_j)$ is tight before the loop execution began, it must be the case
			that either $p_j \lpref_{\hat{a}} M(\hat{a})$ or $thresh(p_j) = \hat{a}$.
			The first case implies that $M(\hat{a}) \neq \bot$, a contradiction that $\hat{a}$ remains unmatched during
			the $l$-th iteration. In the second case, since $thresh(p_j) = \hat{a}$, the edge $(\hat{a},p_j)$
			was matchable inside the free-promotions routine, thus $\hat{a}$ must have been matched inside the free-promotions routine, leading to a contradiction again.
			Thus, for every $p_j \in \mathcal{N}(\hat{a})$, there is some iteration of the loop during which
					$p = p_j$. This implies that when the loop at line~\ref{line:loop2a2}
					terminates, for every $p_j \in \mathcal{N}(\hat{a})$,
					$slack(\hat{a}, p_j) = c_2-c_1$.

		\item {\bf $\hat{a}$ gets matched during the $l$-th iteration: }
			Recall that all edges incident on $\hat{a}$ are tight after the dual update in line~\ref{line:yra2}.
			If $\hat{a}$ is matched at line~\ref{line:lineif} then 
			the $l$-th iteration immediately terminates. Thus, before the $(l+1)$-th iteration, 
					for every $p_j \mpref_{\hat{a}} M(\hat{a})$, $slack(\hat{a}, p_j) = 0$
					and by the choice made in line~\ref{line:lineif}, $thresh(p_j) \neq \hat{a}$,
					implying that $\hat{a}$ is a type-2 agent.

			If $\hat{a}$ is matched inside the loop at line~\ref{line:loop2a2} then the free-promotions
			routine must have matched it.
					Consider the last iteration of the loop at line~\ref{line:loop2a2}
					during which the free-promotions routine matched or promoted $\hat{a}$ 
					and let $M(\hat{a}) = p_t$.
					We will show that for $p_j \mpref_{\hat{a}} p_t$, $slack(a_j,p_j) = c_2-c_1$,
					thereby implying that $\hat{a}$ is a matched type-1 agent
					before $(l+1)$-th iteration begins.

					By Claim~\ref{cl:boundp}, it is enough to show that for every $p_j \mpref_{\hat{a}} p_t$,
					$p = p_j$ is chosen is some iteration of the loop at line~\ref{line:loop2a2}.
					Suppose not. Then, there exists some $p_j$ such that $(\hat{a}, p_j)$ is tight
					after the loop at line~\ref{line:loop2a2} terminates.
					By the choice of $p_t$ inside the free-promotions routine, 
					$(\hat{a}, p_j)$ was non-matchable, implying that $thresh(p_j) \neq \hat{a}$.
					Hence during the last iteration of the loop, when $\BBB(\hat{a})$ was re-computed
					in line~\ref{line:bar}, $p_j \in \BBB(\hat{a})$, that is, $\BBB(\hat{a}) \neq \emptyset$.
					This contradicts that the loop terminated after this iteration.
					Therefore, for every $p_j \mpref_{\hat{a}} p_t$, $p_j$ was selected in some iteration of the loop at line~\ref{line:loop2a2}, thereby implying that before the $(l+1)$-th iteration of the loop at line~\ref{line:loop_a},
					$\hat{a}$ is a matched type-1 agent.
			\end{itemize}

		\item {\bf $a' = \hat{a}$ at line~\ref{line:rprimea2}}:
		Consider the first iteration of the loop at line~\ref{line:loop2a2} when this happens.
		Note that the dual update in line~\ref{line:yra2} does not affect the slack on edges incident on $\hat{a}$.
		Since $\hat{a}$ is a threshold for some program $p_j \mpref_{\hat{a}} M(\hat{a})$,
		by the induction hypothesis, $\hat{a}$ is a type-1 agent.
		Therefore, for every $p_j \mpref_{\hat{a}} M(\hat{a})$, $slack(\hat{a},p_j) = c_2-c_1$ 
		before this iteration of the loop at line~\ref{line:loop2a2}.
		By Claim~\ref{cl:boundaprime}, $\hat{a}$ is a type-2 agent when the loop terminates.
		Therefore when $(l+1)$-th iteration of the loop at line~\ref{line:loop_a} begins,
		$\hat{a}$ is a type-2 agent.

	\item {\bf $a \neq \hat{a}$ and $a' \neq \hat{a}$ but $\hat{a}$ is promoted inside the free-promotions routine}:
		First note that none of the dual updates in the $l$-th iteration affect any edge incident on $\hat{a}$.
		Thus, if $\hat{a}$ is promoted inside the free-promotions routine, 
		then by the induction hypothesis, $\hat{a}$ must be a type-2 agent.
		Thus, for every $p_j \mpref_{\hat{a}} M(\hat{a})$, $slack(\hat{a}, p_j) = 0$ and $thresh(p_j) \neq \hat{a}$
		and some update in the matching must have made one of these edges matchable, that is,
		for some tight edge $(\hat{a},p_j)$, $thresh(p_j) = \hat{a}$.
		Consider the last iteration of the loop at line~\ref{line:loop2a2} when the
		free-promotions routine promoted $\hat{a}$.
		Then, by the choice of $M(\hat{a})$ inside the routine, for every program $p_j \mpref_{\hat{a}} M(\hat{a})$,
		edge $(\hat{a}, p_j)$ is non-matchable. This implies that for every such $p_j$,
		$thresh(p_j) \neq \hat{a}$. Thus, $\hat{a}$ remains a type-2 agent
		when the $(l+1)$-th iteration begins.
\end{enumerate}
This completes the proof of the lemma.
\end{proof}

Next, we show that the dual setting is feasible.

\begin{lemma}\label{lem:dualfeas}
	The dual setting is feasible throughout the algorithm.
\end{lemma}
\begin{proof}
It is clear that the dual setting is feasible before entering the loop after line~\ref{line:loop_a}
	for the first time.
	We show that if the dual setting is feasible
	before an arbitrary dual update (either line~\ref{line:yra2} or line~\ref{line:za2})
	then it remains feasible after the update.

\begin{itemize}
	\item {\bf Update at line~\ref{line:yra2}:} 
		Since $a$ is unmatched, by Lemma~\ref{lem:c1_unm_cond},
		$a$ is a type-1 agent and therefore, the slack on every edge 
		incident on $a$ is $c_2-c_1$.
		By {\bf Fact 1}, this update increases the lhs of
		every edge incident on $a$ by $c_2-c_1$ and the iteration
		of the loop at line~\ref{line:loop_a} terminates.
		Therefore the dual setting is feasible.
	\item {\bf Update at line~\ref{line:za2}:}
		We note that the update in line~\ref{line:za2} increases
		the lhs of a subset of edges incident on agent $a'$ (by {\bf Fact 2}).
		Therefore we show that for an arbitrary agent $a_j$ selected
		as $a'$, the dual setting on the affected edges is feasible
		after the update.

		Consider the first iteration of the loop at line~\ref{line:loop2a2}
		wherein an arbitrary $a_j$ is selected as $a'$ in line~\ref{line:rprimea2}.
		Since $a \neq a' = a_j$, the type of $a_j$ before 
		execution of the loop at line~\ref{line:loop2a2} began is same
		as its type before entering the loop at line~\ref{line:loop_a}.
		Suppose $a_j$ is a type-2 agent then the fact that 
		$a_j$ is threshold at some program in $\BBB(a)$ contradicts 
		that for every program $p_j \mpref_{a_j} M(a_j)$, $thresh(p_j) \neq a_j$.
		Therefore, $a_j$ is a type-1 agent.
		This implies that for every $p_j \mpref_{a_j} M(a_j)$, the slack 
		of the edge $(a_j,p_j)$ is $c_2-c_1$, therefore the dual update in
		line~\ref{line:za2} maintains dual feasibility.
		By Claim~\ref{cl:boundaprime}, this is the only
		iteration of the loop at line~\ref{line:loop2a2}
		when $a' = a_j$.
		Therefore, when the execution of 
		loop at line~\ref{line:loop2a2} terminates (followed by immediate termination of 
		the loop at line~\ref{line:loop_a}), the dual setting remains feasible.
\end{itemize}
This completes the proof of the lemma.
\end{proof}

Now, we show that the algorithm terminates in polynomial time
and computes an $\AAA$-perfect matching $M$.
\begin{lemma}\label{lem:terminates}
	Algorithm~\ref{algo:dualalgo_2} terminates by computing an $\AAA$-perfect matching in polynomial time.
\end{lemma}
\begin{proof}
	We first show that in every iteration of the loop in line~\ref{line:loop_a},
	either an unmatched agent is matched or
	at least one agent is promoted:
	by Lemma~\ref{lem:c1_unm_cond} and {\bf Fact 1},
	after the dual update in line~\ref{line:yra2} all edges incident on $a$ become tight.
	Either $a$ gets matched in line~\ref{line:lineif}
	or the loop in line~\ref{line:loop2a2} executes at least once.
	Since $\BBB(a) \neq \emptyset$ every time the loop at line~\ref{line:loop2a2}
	is entered, an agent $a'$ is selected in line~\ref{line:rprimea2}.
	By the choice of $a'$, Lemma~\ref{lem:c1_unm_cond}, {\bf Fact 2}
	and the choice of $p$ in line~\ref{line:pickp},
	the dual update in line~\ref{line:za2} ensures that at least one edge $(a',p_j)$, for $p_j \mpref_{a'} M(a')$
	becomes matchable and $a'$ gets matched along that edge.
	By {\bf (P2)}, this modification does not demote $a'$ (if $a'$ was already matched).
	Therefore, either an unmatched agent (either $a$ in line~\ref{line:lineif}
	or $a'$ in line~\ref{line:pro1a2}) 
	is matched or at least one agent ($a'$ in line~\ref{line:pro1a2}) is promoted during an iteration.

	Thus after $O(m)$ iterations of the loop in line~\ref{line:loop_a},
	a fixed unmatched agent $a$ gets matched and the loop in line~\ref{line:loop_a} terminates.
	As mentioned earlier, the free-promotions routine takes $O(m^2)$ time.
	Thus, the loop in line~\ref{line:loop_a} terminates in $O(m^3)$ time for a fixed unmatched agent $a$
	and the loop in~\ref{line:loop1a2} terminates in $O(m^3$$\mid\hspace{-0.1cm}\AAA\hspace{-0.1cm}\mid)$ time. 
	By the termination condition of the loop, $M$ is an $\AAA$-perfect matching.
\end{proof}
	
\noindent{\bf Remark on the running time. }
We observe that the initial setting of dual variables takes 
$O(m$$\mid\hspace{-0.1cm}\AAA\hspace{-0.1cm}\mid)$ time because there are $O(m$$\mid\hspace{-0.1cm}\AAA\hspace{-0.1cm}\mid)$ valid triplets.
Since the algorithm guarantees {\bf (P2)}, with a careful implementation of free-promotions routine
and efficiently computing the threshold agents, the running time of algorithm can be improved.
\qed

Finally, we show that the matching $M$ computed by Algorithm~\ref{algo:dualalgo_2}
is an $\ell_a$-approximation (Lemma~\ref{lem:costana}). 
\begin{lemma}\label{lem:costana}
Matching $M$ computed by Algorithm~\ref{algo:dualalgo_2} is an $\ell_a$-approximation of $\SMFQUSET$.
\end{lemma}
\begin{proof}
	Let $\OPT$ be an optimal matching and $c(M)$ and $c(\OPT)$ denote the cost of $M$ and $\OPT$ respectively.
	By the LP duality, $c(\OPT) \geq \sum\limits_{a\in\AAA}{y_a}$.
	By {\bf (P4)}, $(a,p) \in M$ implies that the edge $(a,p)$ is tight.
	Thus, we have
\begin{align*}
c(M) = \sum\limits_{(a,p) \in M}{c(p)}
	&= \sum\limits_{(a,p) \in M} \Big(y_a + \sum_{\substack{p' = p\ \text{or}\\ p'\lpref_{a} p}}\ \ {\sum\limits_{a' \lpref_{p'} a}{z_{a,p',a'}}} - \sum\limits_{a \lpref_p a'}{z_{a',p,a}}\Big)\\
	&= \sum\limits_{a\in\AAA}{y_a} + \underbrace{\sum\limits_{(a,p) \in M} \Big(\sum_{\substack{p' = p\ \text{or}\\ p' \lpref_{a} p}}\ \ {\sum\limits_{a' \lpref_{p'} a}{z_{a,p',a'}}} - \sum\limits_{a\lpref_p a'}{z_{a',p,a}}\Big)}_{S(Z)}
\end{align*}
	where the first equality is from Eq.~\ref{eq:lp1}, the second equality is from Eq.~\ref{eq:lp5} and the third equality follows because $M$ is $\AAA$-perfect.
	Let $S(Z)$ denote the second summation in the above cost.
	Our goal is to show that $S(Z)$ is upper-bounded by $(\ell_a-1)\sum\limits_{a\in\AAA}{y_a}$ thereby implying that $c(M) \leq \ell_a\cdot\sum\limits_{a\in\AAA}{y_a}$.

We first note that all the $z$ variables are set to $0$ initially
	and they are updated only inside the loop at line~\ref{line:loop2a2}.
	We charge the update in every $z$ variable to a specific unmatched agent $a$ picked at line~\ref{line:picka} and upper-bound the total update in $z$ charged to $a$ in terms of $y_a$.
	Let $A'$ be the set of agents unmatched before the loop at line~\ref{line:loop1a2} is entered.
    During every iteration of the loop in line~\ref{line:loop1a2},
	an unmatched agent $a$ from $A'$ is picked and the loop in line~\ref{line:loop_a} executes until $a$ is matched.
	Suppose that after picking $a$ in line~\ref{line:picka}, the loop in line~\ref{line:loop_a} runs for $\kappa(a)$ iterations.
	Then, $y_a$ is incremented by $c_2-c_1$ for $\kappa(a)$ times and since $a$ is matched,
	it is not picked again at line~\ref{line:picka}.
	Thus, at the end of algorithm, $y_a = c_1 + \kappa(a) (c_2-c_1)$, that is $y_a \geq \kappa(a) (c_2-c_1)$.

	We first present a simpler analysis that proves an $(\ell_a+1)$-approximation.
	Recall that the $z$ variables are non-negative (Eq.~\ref{eq:lp7}).
	Thus, we upper-bound the total value of $z$ variables appearing in positive form in $S(Z)$.
	During the iterations $1$ to $\kappa(a)-1$, the algorithm must enter the {\bf else} part
	and in the $\kappa(a)\text{-}th$ iteration, the loop may or may not enter the {\bf else} part.
	Suppose the algorithm enters the {\bf else} part. Then by Claim~\ref{cl:boundp},
	for a fixed $a$ when the algorithm enters the {\bf else} part,
	at most $\ell_a$ many $z$ variables are set to $c_2-c_1$.
	Thus, at most $\kappa(a) \ell_a (c_2-c_1)$ total update in $S(Z)$ occurs during execution
	of the loop in line~\ref{line:loop_a} when agent $a$ is picked.
	We charge this cost to agent $a$, thus agent $a \in A'$ is charged at most $\ell_a y_a$.
	Thus,
\begin{align*}
	c(M) = \sum\limits_{a\in\AAA}{y_a} + S(Z)
	&\leq \sum\limits_{a\in\AAA\setminus A'}{y_a} + \sum\limits_{a\in A'}{y_a} + \sum\limits_{a\in A'}{\ell_a y_a}\\
	&\leq (\ell_a+1) \sum\limits_{a\in\AAA}{y_a}
	\leq (\ell_a+1) c(\OPT)
\end{align*}

    Now, we proceed to a better analysis that shows an $\ell_a$-approximation.
	Recall that if $(a',p,a)$ is a valid triplet then the variable $z_{a',p,a}$ appears in the dual constraint
    of possibly multiple edges incident on $a'$ in positive form and in the dual constraint of exactly one edge,
	that is, the edge $(a,p)$ in negative form. We show that there exist certain valid triplets such that
	the corresponding $z$ variable occurring in positive form in the dual constraint of a matched edge 
	also appears in negative form in the dual constraint of another matched edge, thereby canceling out 
	their contribution in $S(Z)$. Thus, it is enough to upper-bound the update in $z$ variables that
	are {\em not} cancelled. We prove that the total update in such
    $z$ variables that is charged to an agent $a \in A'$ can be upper-bounded by $(\ell_a-1) y_a$
    instead of $\ell_a y_a$ as done earlier.

	Let $a \in A'$ be an arbitrary agent.
	Suppose that after $a$ is selected at line~\ref{line:picka},
	$a$ is matched to some program $\overline{p}$ and that $M(a) = p_k$ at the end of the algorithm.
	By {\bf (P2)}, $p_k = \overline{p}$ or $p_k \mpref_{a} \overline{p}$.
	Also, during iterations $1$ to $\kappa(a)-1$,
	$thresh(p_k) \neq a$ and the loop in line~\ref{line:loop2a2} executes.
	It implies that in each of the iterations, there exists an agent $a_j$
	such that $thresh(p_k) = a_j$ and $z_{a_j,p_k,a}$ is updated.
	Also, $a_j$ was matched to $p'$ such that $p' = p_k$ or $p' \mpref_{a_j} p_k$.
	By {\bf (P2)}, at the end of the algorithm, $M(a_j) = p'$ or $M(a_j) \mpref_{a'} p'$.
	Thus, the variable $z_{a_j,p_k,a}$ appears in positive form in the dual constraint of 
	the edge $(a_j,M(a_j))$.
	Since $(a,p_k) \in M$ and the variable $z_{a_j,p_k,a}$ appears in negative form in the 
	dual constraint of edge $(a,p_k)$.
	Therefore, the variable $z_{a_j,p_k,a}$ cancels out in $S(Z)$.
	This implies that for each of the iterations $1$ to $\kappa(a)-1$,
	at most $\ell_a-1$ many $z$ variables are set to $c_2-c_1$ such that they may not cancel out.
	We charge the update in these variables to $a$.

	In the last $\kappa(a)$-th iteration, $a$ gets matched.
	If $a$ is matched at line~\ref{line:lineif} then no $z$ variable is updated during this iteration.
	Otherwise, $a$ is matched in one of the iterations of the loop in line~\ref{line:loop2a2}
	by the free-promotions routine.
	Recall that by our assumption, $a$ is matched to $\overline{p}$ in this step.
	By the choice of $\overline{p}$ in the free-promotions routine,
	the edge $(a,\overline{p})$ must have been matchable, that is, it is tight and $thresh(\overline{p}) = a$.
	The fact that edge $(a,\overline{p})$ was tight implies (by {\bf Fact 2}) that no variable
	of the form $z_{\times,\overline{p},a}$ was updated so far inside the loop at line~\ref{line:loop2a2}
	during the $\kappa(a)$-th iteration.
	When $\BBB(a)$ is re-computed, $\overline{p} \notin \BBB(a)$ because $M(a) = \overline{p}$ at this step.
	Thus, in the subsequent iterations of the loop in line~\ref{line:loop2a2},
	no agent $a'$ could have selected $\overline{p}$ in line~\ref{line:pickp}.
	This implies that no $z$ variable of the form $z_{\times,\overline{p},a}$ is updated
	during the rest of the execution of the loop at line~\ref{line:loop2a2} of the $\kappa(a)$-th iteration.
	This implies that during the $\kappa(a)$-th iteration,
	the $z$ variables that are set to $c_2-c_1$ are of the form
	$z_{\times,\hat{p},a}$ where 
	$\overline{p} \neq \hat{p}$. By the fact that $\overline{p} \in \mathcal{N}(a)$, $\hat{p} \in \mathcal{N}(a)$
	and $\mid\hspace{-0.1cm}\mathcal{N}(a)\hspace{-0.1cm}\mid$ $\leq \ell_a$, the number such $z$ variables is at most $\ell_a-1$.

	Thus, during $\kappa(a)$ many iterations for the agent $a \in A'$ at most $\kappa(a) (\ell_a-1) (c_2-c_1)$ total update in $S(Z)$ is charged to $a$.
	Recall that $y_a \geq \kappa(a) (c_2-c_1)$.
	Thus, agent $a \in A'$ contributes at most $(\ell_a-1) y_a$ in $S(Z)$.
This gives
\begin{align*}
	c(M) = \sum\limits_{a\in\AAA}{y_a} + S(Z)
	&\leq \sum\limits_{a\in\AAA\setminus A'}{y_a} + \sum\limits_{a\in A'}{y_a} + \sum\limits_{a\in A'}{(\ell_a-1)\cdot y_a}\\
	&\leq \ell_a \sum\limits_{a\in\AAA}{y_a} \leq \ell_a \cdot c(\OPT)
\end{align*}
	This completes the proof of the lemma.
\end{proof}

This establishes Theorem~\ref{thm:c1c2}.
Finally, we show that the analysis of our algorithm is tight by presenting
a family of $\CCQCC$ instances for a fixed $\ell_a$ in Fig.~\ref{fig:ella_tight}.
Let $k \geq 1$.
We have $\ell_a k+1$ agents and $\ell_a+1$ programs.
Program $p_0$ has cost 0 and rest of the programs have a unit cost.
Agent $a_0$ ranks program $p_1, \ldots, p_{\ell_a}$ is an arbitrary order.
Rest of the agents are partitioned into $\ell_a$ groups of $k$ each.
Each agent $a_{u,v}$ where $1 \leq u \leq \ell_a$ and $1 \leq v \leq k$
ranks program $p_u$ followed by program $p_0$.
There exists an arbitrary ordering on agents such that $a_0$ is the least-preferred agent.
Note that in an optimal solution $a_0$ will be matched to some program $p_t$ in its preference list
such that the $k$ agents of the form $a_{t,w}, 1 \leq w \leq k$
must be matched to $p_t$ and the rest of the agents are matched to $p_0$.
Therefore, the cost of an optimal solution is $k+1$.
Algorithm~\ref{algo:dualalgo_2} begins with matching every $a_{u,v}$ to $p_0$
and $a_0$ is unmatched. The algorithm picks $a= a_0$ in line~\ref{line:picka}.
In every iteration of the loop at line~\ref{line:loop_a},
all the edges incident on $a_0$ become tight, but none is matchable.
The algorithm will promote the threshold at every program $p_t \neq p_0$.
This continues for $k$ iteration of the loop at line~\ref{line:loop_a} resulting in promoting
$\ell_a k$ agents ($a_{u,v}, 1 \leq u \leq \ell_a, 1 \leq v \leq k$) to the respective program $p_u$.
Then the algorithm finally matches $a_0$ and terminates.
Thus, the algorithm computes a matching with cost $\ell_a k +1$. As $k$ increases, the approximation
guarantee reaches $\ell_a$.

\begin{figure}[!ht]
	\begin{minipage}{0.5\textwidth}
		\begin{align*}
			a_{u,v} &: p_u \mpref p_0\\
			a_0 &: p_1 \mpref \ldots \mpref p_{\ell_a}
		\end{align*}
	\end{minipage}
	\begin{minipage}{0.5\textwidth}
		\begin{align*}
			(0)\ p_0 &: a_{1,1} \mpref \ldots a_{\ell_a,k}\\
			(1)\ p_u &: a_{u,1} \mpref \ldots a_{u,k} \mpref a_0
		\end{align*}
	\end{minipage}
	\vspace{0.3cm}
	\caption{A family of instances that illustrate the tightness of analysis of Algorithm~\ref{algo:dualalgo_2}.
	Here, $1 \leq u \leq \ell_a$ and $1 \leq v \leq k$.
	}
	\label{fig:ella_tight}
\end{figure}

The restriction of two distinct costs is crucially used in the 
analysis of our approximation algorithm for $\SMFQUSET$ on $\CCQCC$ instances.
An interesting open question is to use the LP for general instances.

\section{$\SMFQUSET$ on general instances and $\minmax$}\label{sec:ucalgo}
In this section, we present our algorithmic results for the $\SMFQUSET$ problem on general $\SMFQ$ instances
and for the $\minmax$ problem.
\subsection{$\ell_p$-approximation for $\SMFQUSET$}\label{sec:lpapprox}
Let $H$ be a $\SMFQ$ instance and let $p^*_a$ denote the minimum cost program in the preference list of agent $a$. 
If there is more than one program with the same minimum cost, we let $p^*_a$ be the most-preferred such program.

\noindent  {\bf Description of the first algorithm: } 
Our algorithm (Algorithm~\ref{algo:algo_mincostset_arb}) 
starts by matching every agent $a$ to $p^*_a$. Note that such 
a matching is $\AAA$-perfect and min-cost but not necessarily envy-free. 
Now the algorithm considers programs in an arbitrary order. 
For program $\cc$, we consider agents in the reverse preference list ordering of $\cc$.
If there exists agent $a \notin M(\cc)$ such that $\cc \mpref_a M(a)$ and there exists $a' \in M(\cc)$ such that $a' \lpref_{\cc} a$, then $(a,a')$ form an envy-pair. We resolve this
by promoting $a$ from $M(a)$ to $p$.
The algorithm stops after considering every program. 

\begin{algorithm}
	\begin{algorithmic}[1]
		\State let $M = \{(\s, \cc)\ \mid \ a \in \AAA$ and $ p = p^*_a\}$\label{line:beforeloop1}
		\For {every program ${\cc}$ }\label{line:loop}
		\For {$\s$ in reverse preference list ordering of $\cc$}\label{line:a}
			\If {there exists  $\s' \in M(\cc)$ such that $a \mpref_{p} a'$ and $p \mpref_a M(a)$} 
				\State $M = M \setminus \{(\s, M(\s))\} \cup \{(\s, \cc)\}$\label{line:promote}
			\EndIf
		\EndFor
		\EndFor
		\State return $M$
	\end{algorithmic}
	\caption{An $\ell_p$-approximation algorithm for $\SMFQUSET$}
	\label{algo:algo_mincostset_arb}
\end{algorithm}

Note that in Algorithm~\ref{algo:algo_mincostset_arb}
an agent may get promoted in the loop (line~\ref{line:promote})
of the algorithm but never gets demoted.
Further, program $p$ is assigned agents in the {\bf for} loop (line~\ref{line:loop})
only when at least one agent is matched to $p$ in line~\ref{line:beforeloop1}.
Therefore, if program $p$ is assigned at least one agent
in the final output matching, then $p = p^*_a$ for some agent $a \in \AAA$.

\noindent {\bf Analysis:} 
It is clear that the matching computed by Algorithm~\ref{algo:algo_mincostset_arb} is $\AAA$-perfect.
Now we show that output of the algorithm is an envy-free matching
and an $\ell_p$-approximation to an optimal solution of the $\SMFQUSET$ problem.

\begin{lemma}\label{lem:corr2}
	The matching $M$ output by Algorithm~\ref{algo:algo_mincostset_arb} is envy-free. 
\end{lemma}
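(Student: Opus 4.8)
\textbf{Proof plan for Lemma~\ref{lem:corr2}.}
The plan is to show that no blocking pair survives in the output matching $M$ of Algorithm~\ref{algo:algo_mincostset_arb}. Suppose for contradiction that $(a,p)$ is a blocking pair w.r.t.\ $M$, so $p >_a M(a)$ and there is $a' \in M(p)$ with $a >_p a'$. The key observation I would establish first is a monotonicity invariant: throughout the execution, every agent's assigned partner can only improve (an agent is only ever \emph{promoted} in line~\ref{line:promote}), so $M(a)$ at the end is at least as good for $a$ as at any earlier point; symmetrically, once program $p$ has been fully processed in the for loop of line~\ref{line:loop}, the set $M(p)$ only changes by agents being \emph{removed} from $M(p)$ when they are promoted to a still-more-preferred program elsewhere, hence the worst agent in $M(p)$ w.r.t.\ $>_p$ can only improve after $p$'s iteration.

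Next I would focus on the iteration of the outer loop in which program $p$ was processed. Let $M_p$ denote the matching just before that iteration begins, and consider the moment in the inner loop (agents scanned in reverse preference order of $p$) when $a$ would be examined. Two cases. If at that moment $a$ was already matched to something it weakly prefers to $p$ (i.e.\ $M(a) \geq_a p$ at that point), then by the agent-monotonicity invariant $M(a) \geq_a p$ at the end as well, contradicting $p >_a M(a)$. Otherwise $p >_a M(a)$ held at that moment; then the algorithm checks whether some $a'' \in M(p)$ satisfies $a >_p a''$, and if so promotes $a$ to $p$. So either $a$ got promoted to $p$ during $p$'s iteration — and since scanning is in reverse preference order of $p$, after $a$ is added no agent worse than $a$ is ever added to $M(p)$ during this iteration, so at the end of $p$'s iteration $a$ is the $>_p$-worst agent in $M(p)$, and combined with program-side monotonicity $a$ stays weakly the worst, so $a \in M(p)$ at the end, contradicting $(a,p) \notin M$ — or no such $a''$ existed, meaning every agent then in $M(p)$ was preferred by $p$ to $a$. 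In the latter subcase I use that agents are scanned worst-first: any agent added to $M(p)$ later in this same iteration is better than $a$ in $>_p$, and by program-side monotonicity $M(p)$ only improves afterwards, so at the end every agent in $M(p)$ is $>_p a$, contradicting the existence of $a' \in M(p)$ with $a >_p a'$.

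The main obstacle I anticipate is handling the interaction between the two monotonicity directions carefully: an agent $a'$ that is in $M(p)$ at the end need not have been in $M(p)$ during $p$'s iteration (it could have been added to $p$ only... no — $p$ is never touched again after its iteration, so actually $M(p)$ at the end is a \emph{subset} of $M(p)$ at the end of $p$'s iteration). Pinning down exactly this fact — that after a program's outer-loop iteration its matched set only shrinks — is the crux, and it follows because the only line that assigns agents to a program is line~\ref{line:promote} inside that program's own iteration, while a later iteration for a different program $q$ can only \emph{remove} an agent from $M(p)$ (when that agent is promoted from $p$ to $q$). Once this subset fact is in hand, every case above closes cleanly, giving the contradiction and hence stability of $M$.
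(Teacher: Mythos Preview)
Your overall strategy matches the paper's: assume a blocking pair $(a,p)$, focus on the outer-loop iteration for $p$, and use the two monotonicity facts (agents only improve; after $p$'s iteration $M(p)$ only shrinks). However, two points deserve correction.

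First, in your promoted subcase you assert that after $a$ is added, ``at the end of $p$'s iteration $a$ is the $>_p$-worst agent in $M(p)$.'' This is false: the very witness $a''\in M(p)$ with $a>_p a''$ that triggered $a$'s promotion is still in $M(p)$ (agents are never removed from $M(p)$ during $p$'s own iteration), so $a$ is \emph{not} the worst. You do not need this claim at all. The moment $a$ is promoted to $p$ you have $M(a)=p$, and by your agent-side monotonicity $M(a)\geq_a p$ holds thereafter, directly contradicting $p>_a M(a)$ at the end. This is exactly how the paper closes the case.

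Second, your ``latter subcase'' (no $a''\in M(p)$ with $a>_p a''$ when $a$ is examined) is vacuous. The specific $a'$ from the blocking pair is already in $M(p)$ at that moment: since $a'\in M(p)$ at termination and $M(p)$ only shrinks after $p$'s iteration, $a'$ was in $M(p)$ by the end of $p$'s iteration; and because the inner loop scans in reverse order of $p$'s preferences and $a'<_p a$, $a'$ was processed (and hence already present in $M(p)$) before $a$ is reached. The paper uses precisely this observation to avoid your extra subcase. With these two fixes your argument coincides with the paper's proof.
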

\begin{proof}
	We show that no agent participates in an envy-pair w.r.t. $M$.
	Assume for contradiction, that $(\s, \s')$ form an envy-pair w.r.t. $M$. 
	Suppose $p = M(a')$.
	Then $\cc \mpref_{\s} M(\s)$ and $\s' \lpref_{\cc} \s$. 
	Consider the iteration of the {\bf for} loop in line~\ref{line:loop}
	when $\cc$ was considered.
	Agent $a'$ was either already matched to $p$ (before the iteration began)
	or is assigned to $p$ in this iteration.
	Note that $a \mpref_p a'$ and the agents are considered in reverse order of $p$'s preferences.
	Thus, in either case when $a$ was considered in line~\ref{line:a}, $a' \in M(p)$.
	If $M(a) = p $ or $M(a) \mpref_a p$ at this line then $M(a) = p $ or $M(a) \mpref_a p$ at the end of the algorithm,
	since an agent never gets demoted.
	Thus, we must have that $M(a) \lpref_a p$ at this line.
	This implies that the algorithm matched $a$ to $p$.
	Since $a$ can only get promoted during the subsequent iterations, $M(a) = p$ or $M(a) \mpref_a p$
	at the end of the algorithm. This contradicts the claimed envy.
\end{proof}

Next we show the approximation guarantee using a natural lower-bound on the optimal cost of $\SMFQUSET$.
Let $\OPT$ denote an optimal solution and $c(\OPT)$ denote the cost of $\OPT$.
Since $\OPT$ is $\AAA$-perfect, it is easy to observe that $c(\OPT) \geq \sum\limits_{\s \in \AAA}{c(p_{\s}^*)}$.
We denote this lower-bound as $lb_1$. Using $lb_1$ we show that the output matching $M$ is an $\ell_p$-approximation.

\begin{lemma}\label{lem:lp_app2}
The matching $M$ output by Algorithm~\ref{algo:algo_mincostset_arb} is an $\ell_p$-approximation.
\end{lemma}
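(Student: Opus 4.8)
The plan is to bound the cost of the output matching $M$ against the natural lower bound $L = \sum_{a \in \A} c(p^*_a)$, which is the cost of the initial assignment in line~\ref{line:beforeloop1} and is clearly a lower bound on $\OPT$ since every agent contributes at least $c(p^*_a)$ to any matching. So it suffices to show that the total cost of $M$ is at most $\ell_p \cdot L$. First I would observe that the algorithm only ever moves an agent via a promotion (line~\ref{line:promote}), and crucially that a program $p$ receives agents only during the single iteration of the outer loop (line~\ref{line:loop}) in which $p$ is processed; after that iteration, $M(p)$ can only shrink (agents may be promoted away to later-processed programs, never added back). Therefore, to bound $|M(p)|$ for a program $p$ that is nonempty at the end, it is enough to bound how many agents can be matched to $p$ during $p$'s own iteration.

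The key combinatorial step is the following charging argument. Fix a program $p$ that ends up nonempty, and let $a$ be the most-preferred agent (by $p$'s list) that is ever matched to $p$ — equivalently $a$ is in $M(p)$ at the end, since once the top agent of $p$ among those assigned is placed it is never removed by a later promotion to a strictly-later-processed program... more carefully, I would argue directly: each agent $a'$ that gets matched to $p$ during $p$'s iteration satisfies $p >_{a'} M(a')$ at that moment, hence $p$ appears on $a'$'s preference list and is \emph{strictly preferred} by $a'$ over the program $a'$ was matched to, which in particular is not $a'$'s last choice among its list — but more to the point, every such agent $a'$ has $p$ on its list. The number of agents having $p$ on their preference list, and thus the number that can ever be matched to $p$, is at most $\ell_p$ by definition of $\ell_p$ as the maximum length of a program's preference list. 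Hence $|M(p)| \le \ell_p$ for every program.

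Now I would finish the cost bound. For each program $p$ nonempty in $M$, we have $c(p) = c(p^*_a)$ where $a$ is any agent with $p = p^*_a$ — and by the last structural remark quoted before the analysis, such an agent $a$ exists (if $p$ is nonempty in the output then $p = p^*_a$ for some $a$). But I actually need to charge the cost $|M(p)| \cdot c(p)$ to distinct agents' lower bounds. The cleanest route: assign to each program $p$ nonempty in $M$ a distinct "witness" agent $w(p)$ with $p^*_{w(p)} = p$ (pick any such agent; distinctness across programs is immediate since $p^*_{w(p)}$ determines $p$). Then
\[
\sum_{p \in \B} |M(p)| \cdot c(p) \;\le\; \sum_{p \in \B : M(p) \neq \emptyset} \ell_p \cdot c(p) \;=\; \ell_p^{\max}\!\!\sum_{p : M(p)\neq\emptyset} c(p) \;\le\; \ell_p \sum_{p : M(p)\neq\emptyset} c(p^*_{w(p)}) \;\le\; \ell_p \sum_{a \in \A} c(p^*_a) \;=\; \ell_p \cdot L \;\le\; \ell_p \cdot \OPT,
\]
where the second-to-last inequality holds because the witnesses $w(p)$ are distinct agents so their $c(p^*_{w(p)})$ terms are a subset of the terms in $\sum_{a\in\A} c(p^*_a)$, and all costs are non-negative. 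This gives the claimed $\ell_p$-approximation.

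The main obstacle is nailing down the structural claim that a program $p$ accrues all of its eventual agents during its single outer-loop iteration and never loses the relevant ones afterwards, so that $|M(p)| \le \ell_p$ is valid for the \emph{final} matching — one must be careful that promotions of agents \emph{away} from $p$ in later iterations only decrease $|M(p)|$, which is fine, and that $p$ is never re-entered, which holds because the outer loop processes each program exactly once. A secondary subtlety is making sure the "witness agent" bookkeeping is airtight: we only need one witness per nonempty program and distinctness is free, but one should double-check that every nonempty-in-output program really does equal $p^*_a$ for some $a$, which is exactly the third bullet in the structural remarks preceding the lemma (a program is assigned an agent only if it is the min-cost choice $p^*_a$ of some agent, via the initial assignment, since promotions only move agents to programs already strictly preferred — actually one should verify the target of a promotion need not be some $p^*_{a'}$; the cleaner statement to use is simply that $M(p)\neq\emptyset$ in the output implies $p$ was $p^*_{a}$ for the agent initially placed there who is still there, which I would confirm from the dynamics).
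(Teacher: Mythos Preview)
Your argument is correct and follows essentially the same approach as the paper: both use the lower bound $L=\sum_a c(p^*_a)$, the structural fact that any program nonempty in the output equals $p^*_a$ for some agent, and a per-program charge of at most $\ell_p$ agents. Your program-side bound $|M(p)|\le \ell_p$ together with one distinct witness $w(p)$ per nonempty program is just a cleaner packaging of the paper's $\A_1/\A_2$ partition and ``charge at most $\ell_p-1$ times'' accounting, and your closing worries are unnecessary---the paper's remark that $M(p)\neq\emptyset$ implies $p=p^*_a$ for some $a$ already gives you the witnesses without needing that witness to remain in $M(p)$.
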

\begin{proof}
	Let $c(M)$ denote the cost of matching $M$.
In the matching $M$, some agents are matched to their least-cost program (call them $\AAA_1$), whereas some agents get promoted (call them
$\AAA_2$).  
	Recall that if a program $p$ is assigned agents in $M$ then 
	$p = p^*_a$ for some agent $a$.
	Thus for an agent $a \in \AAA_2$, 
	we charge the cost of some other least-cost program $p^*_{a'}$ such that $a' \neq a$.
	A program can be charged at most $\ell_p -1$ times for the agents in $\AAA_2$, thus
\begin{equation*}
	c(M) = \sum\limits_{\s \in \AAA_1}{c(p^*_a)} + \sum\limits_{\s \in \AAA_2}{c(M(a))} \le \sum\limits_{\s \in \AAA}{c(p^*_a)} + \sum\limits_{\s \in \AAA}{(\ell_p -1)\cdot c(p^*_a)} \le \ell_p \cdot c(\OPT)
\end{equation*}
This completes the proof of the lemma.
\end{proof}

Next, we present another algorithm 
with approximation guarantee $\ell_p$.

\noindent  {\bf Description of the second algorithm ({\sf ALG}): } Given a $\SMFQ$ instance $H$, we construct a subset $\BBB'$ of
$\BBB$ such that $p \in \BBB'$ iff $p = p^*_a$ for some agent $a$. Our algorithm now matches every agent $a$ to the most-preferred program in $\BBB'$. 

\noindent {\bf Analysis of {\sf ALG}:}
It is clear that the matching computed by {\sf ALG} is $\AAA$-perfect.
Let $M$ be the output of {\sf ALG} and $\OPT$ be an optimal matching. 
Let $c(\OPT)$ and $c(M)$ be the cost of matching $\OPT$ and $M$ respectively.
The lower-bound $lb_1$ on $c(\OPT)$ is exactly the same.
We show the correctness and the approximation guarantee of {\sf ALG} via Lemma~\ref{lem:corr1} and 
Lemma~\ref{lem:basic_lp}.

\begin{lemma}\label{lem:corr1}
The output $M$ of {\sf ALG} is envy-free.
\end{lemma}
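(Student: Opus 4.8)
The plan is to show that no agent participates in a blocking pair with respect to the matching $M$ output by {\sf ALG}. Recall that {\sf ALG} matches each agent $a$ to the most-preferred program in $\B'$, where $\B' = \{p^*_a : a \in \A\}$. So $M(a)$ is the highest-ranked program on $a$'s list that happens to be some agent's minimum-cost program. First I would suppose, for contradiction, that $(a, p)$ blocks $M$: that is, $p >_a M(a)$, and there exists $a' \in M(p)$ with $a >_p a'$. The key observation is that since $a' \in M(p)$, the program $p$ is nonempty in $M$, and by definition of {\sf ALG} a program receives agents only if it lies in $\B'$; hence $p \in \B'$.

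The crux of the argument is then to derive a contradiction from $p \in \B'$ together with $p >_a M(a)$. Since $p \in \B'$ and $p$ appears on $a$'s preference list (it must, since $(a,p)$ is an acceptable pair that blocks), $p$ is a candidate program for $a$ in the assignment step of {\sf ALG}. But {\sf ALG} assigns $a$ to the \emph{most-preferred} program of $\B'$ on $a$'s list, so $M(a) \geq_a p$. This directly contradicts $p >_a M(a)$. Hence no blocking pair exists and $M$ is stable.

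The main thing to be careful about — really the only subtlety — is justifying that $p$ lies on $a$'s preference list: this follows because a blocking pair $(a,p)$ by Definition~\ref{def:stable} requires $(a,p) \in E \setminus M$, so $a$ and $p$ are mutually acceptable, and in particular $p$ is on $a$'s list. One should also note that the agent $a'$ witnessing the envy is what forces $p$ to be nonempty, which is what lets us conclude $p \in \B'$; without the second condition in the blocking-pair definition this step would fail, but Definition~\ref{def:stable} (stability in the $\SMFQ$ setting) does require such an $a'$. I do not expect any real obstacle here — unlike the approximation-guarantee lemma, the stability proof for {\sf ALG} is essentially immediate from the ``most-preferred in $\B'$'' selection rule.
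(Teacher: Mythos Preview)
Your proof is correct and follows essentially the same approach as the paper's own proof: assume a blocking pair $(a,p)$, observe that $p \in \B'$ because some agent is matched to $p$, and derive a contradiction from the fact that $M(a)$ is $a$'s most-preferred program in $\B'$. Your additional remarks about why $p$ must appear on $a$'s preference list and why the existence of $a'$ is needed are valid clarifications, but the core argument is identical.
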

\begin{proof}
	We show that no agent participates in an envy-pair w.r.t. $M$.
	Suppose that there exists an envy-pair $(a,a')$
	that is, there exists program $p$ such that $\cc \mpref_{\s} M(\s)$ and there exists 
	an agent $\s' \in M(\cc)$ such that $\s' \lpref_{\cc} \s$. Since
	{\sf ALG} assigns agents to programs in $\BBB'$ only, it implies that $\cc \in \BBB'$.
	However, $M(a)$ is the most-preferred program in $\BBB'$ and hence 
	$M(\s) \mpref_{\s} \cc$ or $M(\s) = \cc$. Thus, the claimed envy-pair does not exist.
\end{proof}

\begin{lemma}\label{lem:basic_lp}
The output $M$ of {\sf ALG} is an $\ell_p$-approximation.
\end{lemma}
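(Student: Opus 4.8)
The plan is to bound the cost of the matching $M$ output by {\sf ALG} against the lower-bound $c(\OPT) \geq \sum_{a \in \A} c(p^*_a)$ from Eq.~\ref{eq:lb}, exactly as in the proof of Lemma~\ref{lem:lp_app2}. First I would observe the key structural fact: {\sf ALG} matches every agent $a$ to the most-preferred program in $\B'$, and by construction every program in $\B'$ is of the form $p^*_{a'}$ for some agent $a'$. Hence every program $p$ that receives at least one agent in $M$ satisfies $c(p) = c(p^*_{a'})$ for the agent $a'$ that ``created'' it in $\B'$, and moreover $c(p) \leq c(p^*_a)$ for each agent $a$ matched to $p$, since $p^*_a$ is the global minimum-cost program on $a$'s list and $p$ lies on $a$'s list.

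Next I would set up the charging argument. Partition $\A$ into $\A_1$, the agents $a$ with $M(a) = p^*_a$, and $\A_2$, the agents that are matched to some program other than their own minimum. For $a \in \A_1$ the contribution $c(M(a)) = c(p^*_a)$ is charged to $a$ itself. For $a \in \A_2$, I charge $c(M(a))$ to the agent $a'$ whose minimum-cost program equals $M(a)$ (i.e. $p^*_{a'} = M(a)$); this is legitimate precisely because $M(a) \in \B'$. Since a single program $p = p^*_{a'}$ can appear on at most $\ell_p$ preference lists and hence be the assigned program of at most $\ell_p$ agents, it can be charged at most $\ell_p - 1$ times by agents in $\A_2$ (the agent $a'$ itself accounts for one of the at most $\ell_p$ slots, though even charging it $\ell_p$ times would suffice for the bound). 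Summing, I get
\[
c(M) = \sum_{a \in \A_1} c(p^*_a) + \sum_{a \in \A_2} c(M(a)) \leq \sum_{a \in \A} c(p^*_a) + (\ell_p - 1) \sum_{a \in \A} c(p^*_a) = \ell_p \sum_{a \in \A} c(p^*_a) \leq \ell_p \cdot c(\OPT),
\]
which gives the claimed $\ell_p$-approximation.

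The only delicate point, and the step I would be most careful about, is justifying the ``at most $\ell_p - 1$ charges per program'' claim cleanly: one must argue that the number of agents $a$ with $M(a) = p$ is at most the number of preference lists on which $p$ appears, which is at most $\ell_p$ by the definition of $\ell_p$ as the maximum length of a program's preference list. This is essentially identical to the bookkeeping in Lemma~\ref{lem:lp_app2}, so I would phrase the proof in parallel to that one, noting that the lower bound (Eq.~\ref{eq:lb}) and the set of programs available for charging (those equal to some $p^*_{a'}$) are the same in both algorithms; the only difference is that {\sf ALG} assigns $a$ its best program in $\B'$ rather than running the promotion loop, but this does not affect the cost accounting since we only ever use that $M(a)$ lies in $\B'$ and that $c(M(a)) \leq c(p^*_a)$ would even follow, though here we directly charge against $\B'$ membership.
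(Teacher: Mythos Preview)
Your proposal is correct and takes essentially the same approach as the paper: partition $\A$ into $\A_1$ and $\A_2$, write $c(M)$ accordingly, and invoke the charging argument of Lemma~\ref{lem:lp_app2} together with the lower bound in Eq.~\ref{eq:lb}. One minor slip worth fixing: you write ``$c(p) \leq c(p^*_a)$'' (and later ``$c(M(a)) \leq c(p^*_a)$''), but the inequality goes the other way since $p^*_a$ is the \emph{minimum}-cost program on $a$'s list; as you yourself note, this side remark is not actually used in the charging argument, so the proof stands.
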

\begin{proof}
In the matching $M$,  agent $a$ is either matched to $p^*_a$ 
or $p^*_{a'}$ for some other agent $a'$. This is determined by the relative 
ordering of $p^*_a$ and $p^*_{a'}$ in the  preference list of $a$.
We partition the agents as $\AAA = \AAA_1 \cup \AAA_2$, where $\AAA_1$ is the set of agents matched
to their own least cost program, that is, $a \in \AAA_1$ iff $M(a) = p^*_a$. We define $\AAA_2 = \AAA \setminus \AAA_1$.
We can write the cost of $M$ as follows:
\begin{equation*}
	c(M) = \sum\limits_{\s \in \AAA_1}{c(p^*_a)} + \sum\limits_{\s \in \AAA_2}{c(M(a))}
\end{equation*}
	By similar arguments as in Lemma~\ref{lem:lp_app2}, we get the $\ell_p$-approximation guarantee.
\end{proof}

This establishes Theorem~\ref{thm:smfq_approx}.\ref{part3}.

\subsubsection{Comparing the two $\ell_p$-approximation algroithms for $\SMFQUSET$ } \label{sec:compare}
We present following instances which illustrate that neither of the two algorithms (Algorithm~\ref{algo:algo_mincostset_arb}
and $\sf ALG$)
is strictly better than the other.

\begin{example}\label{ex:ex1}
	Let $\AAA = \{a_1, a_2, \ldots, a_n\}$,
	$\BBB = \{p_1, p_2\}$, $c(p_1) = 1, c(p_2) = \alpha$ 
	where $\alpha$ is some large positive constant.
The agents $a_1, \ldots, a_{n-1}$ have the same preference list $p_2$ followed by $p_1$. 
Whereas agent $a_n$ has only $p_2$ in its preference list. The preferences of the programs are
as given below.

	\begin{minipage}{0.9\textwidth}
		\begin{align*}
			\cc_1 &: \s_1,\ \s_2,\ \ldots,\ \s_{n-1}\\
			\cc_2 &: \s_n,\ \s_{n-1},\ \s_{n-2},\ \ldots,\ \s_1
		\end{align*}
	\end{minipage}

\vspace{0.1in}
\noindent Here, {\sf ALG} outputs  $M_1$  of cost $n \cdot \alpha$ 
where  $M_1 = \{(\s_1, \cc_2), \ldots, (\s_n, \cc_2)\}$.
In contrast, Algorithm~\ref{algo:algo_mincostset_arb} outputs $M_2 = \{(\s_1, \cc_1), \ldots, (\s_{n-1},\cc_1), (\s_n,\cc_2)\}$ whose cost 
	is $n-1 +  \alpha$. Clearly, Algorithm~\ref{algo:algo_mincostset_arb} outperforms {\sf ALG} in this case and in fact  $M_2$ is optimal for the instance.

\end{example}
\begin{example}\label{ex:ex3}
Let $\AAA = \{a_1, a_2, \ldots, a_n\}$, 
	$\BBB = \{p_1, p_2, p_3\}$, $c(p_1) = 1, c(p_2) = 2, c(p_3) = \alpha$ 
	where $\alpha$ is some large positive constant. The preferences of
agents $a_1, \ldots, a_{n-2}$ are $p_2$ followed by $p_3$ followed by $p_1$. The preference list of $a_{n-1}$ contains only
$p_2$ and the preference list of $a_n$ contains only $p_3$. 
	The preferences of programs are as shown below.

	\begin{minipage}{0.9\textwidth}
		\begin{align*}
			\cc_1 &: \s_1,\ \s_2,\ \ldots, \s_{n-2}\\
			\cc_2 &: \s_{n-1},\ \s_1,\ \s_2,\ \ldots, \s_{n-2}\\
			\cc_3 &: \s_1,\ \ldots,\ \s_{n-2},\ \s_n
		\end{align*}
	\end{minipage}

\vspace{0.1in}
\noindent Here, {\sf ALG} outputs $M_1 =  \{(\s_1, \cc_2), \ldots, 
	(\s_{n-1}, \cc_2), (\s_n, \cc_3)\}$ whose
	cost is $2 \cdot (n-1) + \alpha$. In contrast, Algorithm~\ref{algo:algo_mincostset_arb} outputs $M_2$ of cost $2+ (n-1) \cdot \alpha$ where
$M_2 = \{(\s_1, \cc_3), \ldots, (\s_{n-2},\cc_3), (\s_{n-1}, \cc_2), (\s_n, \cc_3)\}$. In this instance
	{\sf ALG} outperforms Algorithm~\ref{algo:algo_mincostset_arb} and it can be verified that $M_1$ is the optimal matching.
\end{example}

\subsubsection{Discussion on $lb_1$}
Although $lb_1$ is a natural lower-bound, we show that the best approximation guarantee
using $lb_1$ is $\ell_p$, thereby showing that the analysis of our algorithms is tight.
In Fig.~\ref{fig:weak_lb}, 
we present a family of instances with $n$ agents and $3$
programs where $c(p_0) = 0$, $c(p_1) = 1$, $c(p_2) = n$ and $\ell_p=n$.
The lower-bound $lb_1 = 1$ and an optimal matching in this instance has cost $n = \ell_p \cdot lb_1$.
We remark that this holds even under master list ordering on agents and programs.
\begin{figure}[!ht]
	\begin{minipage}{0.5\textwidth}
		\begin{align*}
    			1 \leq i \leq n-1, \ \ \ \s_i &: \cc_1 \mpref \cc_0\\
			\s_n &: \cc_1 \mpref \cc_2
		\end{align*}
	\end{minipage}\hfill
	\begin{minipage}{0.5\textwidth}
		\begin{align*}
			(0)\ \cc_0 &: \s_1 \mpref \s_2 \mpref \ldots \mpref \s_{n-1}\\
			(1)\ \cc_1 &: \s_1 \mpref \s_2 \mpref \ldots \mpref \s_{n-1} \mpref \s_n\\
			(n)\ \cc_2 &: \s_n
		\end{align*}
	\end{minipage}
	\vspace{0.3cm}
	\caption{A family of instances with optimal cost exactly $\ell_p$ times the lower-bound $lb_1$. There are two optimal matchings of cost $n$:
	$\OPT_1 = \{(a_1,p_0), \ldots, (a_{n-1},p_0), (a_n,p_2)\}$ and $\OPT_2 = \{(a_1,p_1), \ldots, (a_n,p_1)\}$.}
	\label{fig:weak_lb}
\end{figure}

\subsection{$\minmax$ problem and a $\mid\hspace{-0.1cm}\BBB\hspace{-0.1cm}\mid$-approximation for $\SMFQUSET$ }\label{sec:papprox}
Let $H$ be a $\SMFQ$ instance.
Our algorithm for the $\minmax$ problem is based on the following observations:
let $M^*$ be an optimal solution for the $\minmax$ problem on $H$. 
\begin{itemize}
	\item 	Let $t^* = \max\limits_{p \in \BBB}\ \{c(p)\cdot \mid\hspace{-0.1cm}M^*(p)\hspace{-0.1cm}\mid\}$.
		Consider an $\HR$ instance $G_{t^*}$ where the preference lists are borrowed from $H$ and 
		for each $p \in \BBB$, $q(p) = \left \lfloor \frac{t^*} {c(p)} \right \rfloor$.
		Note that $M^*$ is an $\AAA$-perfect envy-free matching in $H$ such that
		$\mid\hspace{-0.1cm}M^*(p)\hspace{-0.1cm}\mid \leq q(p)$ for every $p \in \BBB$.
		Thus the instance $G_{t^*}$ admits an $\AAA$-perfect stable matching.
	\item 
		For any $t < t^*$, consider an $\HR$ instance $G_t$
		with $q(p) = \left \lfloor \frac{t} {c(p)} \right \rfloor$. The instance $G_t$ does 
		not admit an $\AAA$-perfect stable matching. 
		Otherwise this contradicts the optimality of $M^*$. 
	\item The optimal value $t^*$ lies in the range $\min\limits_{p\in\BBB}\ \{c(p)\}$ to $\max\limits_{p\in\BBB}\ \{(len(p) \cdot c(p))\}$, where $len(p)$ denotes the length of the preference list of program $p$.
\item For any $t' > t^*$, $G_{t'}$ admits an $\AAA$-perfect envy-free matching.
\end{itemize}

Our algorithm begins by constructing a sorted array $\hat{c}_p$ for each program $p \in \BBB$
such that for $1 \leq i \leq len(p)$, we have $\hat{c}_p[i] = i \cdot c(p)$.
There are $\mid\hspace{-0.1cm}\BBB\hspace{-0.1cm}\mid$ many such sorted arrays and the total number of elements in these arrays is $\sum\limits_{p \in \BBB}{len(p)} = \mid\hspace{-0.1cm}E\hspace{-0.1cm}\mid = m$.
We merge these arrays to construct a sorted array $\hat{c}$ of distinct costs.
Then we perform a binary search for the optimal value of $t^*$ in the sorted array $\hat{c}$:
for a particular value $t = \hat{c}[k]$ we construct the $\HR$
instance $G_t$ by setting appropriate quotas.
If the stable matching in $G_t$ is not $\AAA$-perfect, then
we search in the {\em upper-range}.
Otherwise, we check if $G_{t'}$ admits
an $\AAA$-perfect stable matching for $t' = \hat{c}[k-1]$. If not, we return $t$ otherwise,
we search for the optimal in the {\em lower-range}. 

\noindent{\bf Running time. }
The sorted array $\hat{c}$ is constructed by merging $\mid\hspace{-0.1cm}\BBB\hspace{-0.1cm}\mid$ sorted arrays containing $m$ total entries.
Thus, this step takes $O(m \log{\mid\hspace{-0.1cm}\BBB\hspace{-0.1cm}\mid})$ time.
The algorithm requires $O(\log{m})$ iterations because there are at most $m$ distinct costs in $\hat{c}$.
Each iteration computes at most two stable matchings using the linear time algorithm~\cite{GS62}. 
Thus the overall running time is $O(m \log {m})$.

This establishes Theorem~\ref{thm:minmaxInP}.
We now prove that the optimal matching $M^*$ for the $\minmax$ problem is a $\mid\hspace{-0.1cm}\BBB\hspace{-0.1cm}\mid$-approximation for the $\SMFQUSET$ problem.
\begin{lemma}
The optimal solution for the $\minmax$ problem is a $\mid\hspace{-0.1cm}\BBB\hspace{-0.1cm}\mid$-approximation for the $\SMFQUSET$ problem.
\end{lemma}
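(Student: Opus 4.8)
The plan is to chain together three elementary inequalities relating the total cost and the maximum cost of a matching. Write $\OPT$ for an optimal matching for the $\SMFQUSET$ problem and $M^*$ for an optimal matching for the $\minmax$ problem, and let $t^* = \max_{p\in\B}\{c(p)\cdot|M^*(p)|\}$ be the optimal $\minmax$ value. I want to show $c(M^*) \le |\B|\cdot c(\OPT)$, where $c(\cdot)$ denotes the total cost.

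First I would bound the total cost of $M^*$ from above by its maximum cost: since $c(M^*) = \sum_{p\in\B} |M^*(p)|\cdot c(p)$ is a sum of at most $|\B|$ nonnegative terms, each of which is at most $t^*$, we get $c(M^*) \le |\B|\cdot t^*$. Next I would observe that $\OPT$ is itself an $\A$-perfect stable matching, so by optimality of $M^*$ for the $\minmax$ objective we have $t^* \le \max_{p\in\B}\{c(p)\cdot|\OPT(p)|\}$. Finally, the maximum cost of $\OPT$ is at most its total cost, since $\max_{p\in\B}\{c(p)\cdot|\OPT(p)|\} \le \sum_{p\in\B} c(p)\cdot|\OPT(p)| = c(\OPT)$ (again all terms are nonnegative). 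Composing these three steps gives $c(M^*) \le |\B|\cdot t^* \le |\B|\cdot \max_{p\in\B}\{c(p)\cdot|\OPT(p)|\} \le |\B|\cdot c(\OPT)$, which is the claim.

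There is no real obstacle here; the only thing to be careful about is that $M^*$ and $\OPT$ are both $\A$-perfect stable matchings living in the same $\SMFQ$ instance, so the comparison in the middle step is legitimate — $\OPT$ is a feasible solution for the $\minmax$ problem, hence its $\minmax$ value cannot beat the optimum $t^*$. One could also remark that the factor $|\B|$ is loose in practice because programs with no matched agent contribute nothing and because the $\minmax$ optimum is typically concentrated on few programs, which is exactly why the earlier sections phrase this as a "practically better" guarantee than the $\ell_p$-approximation.
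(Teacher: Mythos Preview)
Your proof is correct and follows essentially the same argument as the paper: both bound the total cost of $M^*$ by $|\B|\cdot t^*$, then use that the $\SMFQUSET$ optimum is a feasible $\minmax$ solution to get $t^* \le c(\OPT)$. The only cosmetic difference is that the paper phrases the middle inequality as a contradiction to the optimality of $M^*$, whereas you state it directly.
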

\begin{proof}
Let $H$ be a $\SMFQ$ instance and let $M^*$ be the optimal matching for the $\minmax$ problem on $H$. For the same instance $H$,
let $N^*$ be the optimal matching for the $\SMFQUSET$ problem. Let us define $t^*$ and $y^*$ as follows:
\begin{eqnarray*}
	t^* = \max_{p \in \BBB} \{c(p) \cdot \mid\hspace{-0.1cm}M^*(p)\hspace{-0.1cm}\mid \}  \hspace{1.8in}
	y^* = \sum_{p \in \BBB} \left (c(p) \cdot \mid\hspace{-0.1cm}N^*(p)\hspace{-0.1cm}\mid \right) 
\end{eqnarray*}
We first observe that $y^* \ge t^*$. This is true because $N^*$ is an $\AAA$-perfect envy-free matching in $H$. Furthermore, 
	since costs of all programs are non-negative, $\max_{p \in \BBB} \{c(p) \cdot \mid\hspace{-0.1cm}N^*(p)\hspace{-0.1cm}\mid \} \le y^*$. Therefore if $y^* < t^*$,
it contradicts the optimality of $M^*$ for the $\minmax$ problem.
To prove the approximation guarantee, we note that since $t^*$ denotes the maximum cost incurred at any program in $M^*$, the total cost of $M^*$
is upper bounded by $\mid\hspace{-0.1cm}\BBB\hspace{-0.1cm}\mid$$\cdot  t^* \le $$\mid\hspace{-0.1cm}\BBB\hspace{-0.1cm}\mid$$\cdot   y^*$. 
\end{proof}
This establishes Theorem~\ref{thm:smfq_approx}.\ref{part2}.

\section{Hardness results}\label{sec:uchardness}

In this section, we prove our $\NP$-hardness and inapproximability result
for the $\SMFQUSET$ problem
(Theorem~\ref{thm:smfq_hardness}).

\subsection{$\NP$-hardness of $\SMFQUSET$}\label{apsec:hardness}
We show that the $\SMFQUSET$ problem is $\NP$-hard even under severe restrictions on the instance. 
In particular we show that the hardness holds even when all agents have a preference list
of a constant length $f \geq 2$ 
and there is a master list ordering on agents and programs.
To show the $\NP$-hardness of $\SMFQUSET$ we use the Set Cover ($\SC$) instance where every element occurs in exactly $f$ sets.
Minimum vertex cover on $f$-uniform hypergraphs is known to be $\NP$-complete and
$\SC$ problem where every element occurs in exactly $f$ sets is equivalent to it~\cite{CARDINAL201267}.

\noindent {\bf Reduction:}
Let $\langle S,E,k \rangle$ be an instance of $\SC$ such that every element in $E$ occurs
in exactly $f$ sets.
Let $m = \mid\hspace{-0.1cm}S\hspace{-0.1cm}\mid, n = \mid\hspace{-0.1cm}E\hspace{-0.1cm}\mid$.
We construct an instance $H$ of $\SMFQ$ as follows.
For every set $s_i \in S$, we have a set-agent $\s_i$ and a set-program $\cc_i$.
For every element $e_h \in E$, we have an element-agent $\s_h'$.
We also have a program $p$ and $f-2$ programs $\cc^1, \ldots, \cc^{f-2}$.
Thus in the instance $H$ we have $m+n$ agents and $m+f-1$ programs.
Let $E_i$ denote the set of elements in the $s_i$. 
The element-agents corresponding to $E_i$ in $H$ are denoted by $\AAA_i'$.

\noindent {\bf Preferences:} 
The preference lists 
are shown in Fig.~\ref{fig:mincostset_hardness_sc}.
Every set-agent $\s_i$ has $f$ programs in its preference list --
the set-program $\cc_i$ followed by the program $\cc$ followed by the programs $\cc^1, \ldots, \cc^{f-2}$
in that order. 
Every element-agent $\s'_j$ has the $f$ set-programs corresponding to the sets that contain it in an arbitrary fixed order.
Every set-program $\cc_i$ has its set-agent $\s_i$ as its top-preferred agent
followed by the agents in $\AAA_i'$ in an arbitrary fixed order. The program $\cc$ 
and each program $\cc^1, \ldots, \cc^{f-2}$ has 
the set-agents as $\s_1, \ldots, \s_m$ in an arbitrary fixed order. 
It is clear that every agent has a preference list of length $f$.

\noindent{\bf Costs:} The costs for program $p_i$ for $1 \leq i \leq m$ and $p^j$ for $1 \leq j \leq f-2$ is $1$ and that for the program $\cc$ is $0$.
Thus, the instance has two distinct costs.

\begin{figure}[!ht]
        \begin{minipage}{0.45\textwidth}
	\begin{align*}
		\s_i &: \cc_i \mpref \cc \mpref \cc^1 \mpref \ldots \mpref \cc^{f-2}\\
		\s_j' &: \cc_{j1}\mpref \ldots\mpref \cc_{jf} 
	\end{align*}
	\end{minipage}\hfill
        \begin{minipage}{0.45\textwidth}
	\begin{align*}
		\cc_i &: \s_i\mpref \AAA_i'\\
		\cc &: \s_1\mpref \ldots\mpref \s_m\\
		\cc^t &: \s_1\mpref \ldots\mpref \s_m
	\end{align*}
	\end{minipage}%
	\vspace{0.3cm}
    \caption{Preference lists in the reduced instance $H$ of $\SMFQ$ from instance $\langle S,E,k \rangle$ of $\SC$.
	Here, $1 \leq i \leq m,	1 \leq j \leq n$ and $1\leq t \leq f\text{-}2$}
    \label{fig:mincostset_hardness_sc}
\end{figure}

\begin{claim}\label{cl:smfquset_cl1}
	If $\langle S,E,k \rangle$ is a yes instance, then $H$ admits
an $\AAA$-perfect envy-free matching of cost at most $n+k$.
\end{claim}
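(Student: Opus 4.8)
The plan is to take a set cover $S' \subseteq S$ with $|S'| \le k$ and turn it into an $\A$-perfect stable matching $M$ in $H$ of cost $n + |S'| \le n + k$. The matching $M$ is defined as follows: for every $s_i \in S'$, match the set-agent $\s_i$ to its set-program $\cc_i$ (which is the top choice of both $\s_i$ and $\cc_i$); for every $s_i \notin S'$, match $\s_i$ to the program $\cc$; and for every element $e_j \in E$, among the sets of $S'$ that contain $e_j$ (at least one exists since $S'$ is a cover), pick the one whose set-program is most preferred in $\s'_j$'s list and match $\s'_j$ to that set-program. Every agent is matched, so $M$ is $\A$-perfect. The crucial structural observation is that in $M$ the only non-empty (non-closed) programs are $\cc$ and the set-programs $\cc_i$ with $s_i \in S'$; all programs $\cc^1, \ldots, \cc^{f-2}$ and all set-programs $\cc_i$ with $s_i \notin S'$ are closed.

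Next I would verify stability. Here the key point — and the place where Definition~\ref{def:stable} differs from the {\sf SM} notion — is that a closed program can never be part of a blocking pair, since there is no $\s' \in M(\cc')$ to envy. So it suffices to rule out blocking pairs involving $\cc$ or a set-program $\cc_i$ with $s_i \in S'$. A set-agent $\s_i$ with $s_i \in S'$ is matched to its first choice and hence blocks nothing. A set-agent $\s_i$ with $s_i \notin S'$ is matched to $\cc$; the only program it strictly prefers to $\cc$ is $\cc_i$, which is closed. An element-agent $\s'_j$ has only set-programs of sets containing $e_j$ on its list; among those lying in $S'$ it is matched to the most preferred one, so it does not strictly prefer any set-program of $S'$ to $M(\s'_j)$, while any set-program outside $S'$ is closed. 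Hence no agent participates in a blocking pair and $M$ is stable.

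Finally I would compute the cost. Program $\cc$ has cost $0$ and contributes nothing; closed programs contribute nothing; each set-program $\cc_i$ with $s_i \in S'$ has cost $1$ and is matched to $\s_i$ together with the element-agents assigned to it, so its contribution to the total cost is $|M(\cc_i)|$. Summing over $s_i \in S'$ and using that each of the $n$ element-agents is assigned to exactly one such set-program yields a total cost of $|S'| + n \le k + n$, as required.

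I expect the stability argument for element-agents to be the only delicate point: assigning each element-agent to the \emph{most preferred} cover-set-program containing its element is exactly what prevents an element-agent from envying another cover-set-program, and once this choice is made, the rest of the verification is routine, relying on the fact (peculiar to the $\SMFQ$ setting) that closed programs never block.
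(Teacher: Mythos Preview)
Your proposal is correct and follows essentially the same approach as the paper: build the matching from a set cover $S'$ by sending set-agents in $S'$ to their own set-programs, the remaining set-agents to the zero-cost program $\cc$, and each element-agent to the most preferred set-program in $S'$ covering it, then verify stability using that all non-cover set-programs (and all $\cc^t$) are closed and hence cannot block. Your stability check for element-agents and your cost computation match the paper's argument, and your explicit appeal to Definition~\ref{def:stable} for why closed programs never block is a nice clarification.
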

\begin{proof}
Let $X \subseteq S$ be the set cover of size at most $k$.
	Using $X$, we construct an $\AAA$-perfect matching $M$ in $H$ and show that $M$ is envy-free. 
We then show that the cost of $M$ is bounded.
For every set $s_i \notin X$, we match the corresponding set-agent $\s_i$ to the program $\cc$, that is, $M(\s_i) = p$.
For every set $s_i \in X$, we match the corresponding set-agent $\s_i$ to the program $\cc_i$, that is, $M(\s_i) = \cc_i$.
Since $X$ is a set cover, for every element $e_j \in E$, at least one of the sets it occurs in is in $X$.
Thus, for every element-agent $\s'_j$, we match it to the program $\cc_{jt}$ corresponding to the set in $X$ (in case more than one set containing $e_j$ 
are in $X$, then we match it to the program which is highest-preferred among them).
It is clear that $M$ is $\AAA$-perfect. 

To prove that $M$ is envy-free, we show that no agent participates in an envy-pair. First observe that,
an agent $\s_i$ corresponding to $s_i \in X$ is matched to its top-choice program. Hence such agents do not participate in envy pairs.
Now, for all agents $\s_i$ such that $s_i \notin X$, these agents are matched to $\cc$. However, the corresponding program $p_i$ remains
	closed.  Finally, every element-agent that is not matched to its top-choice programs has all such top-choice programs closed.
Thus the matching is envy-free.

	We compute the cost of matching $M$ from the agent side. Each element-agent is matched to some program $\cc_i$ and costs $1$ each. At most $k$ set-agents matched to their corresponding set-program $\cc_i$ each costs $1$ and at least $m-k$ set-agents are matched to program $\cc$ that incur the cost $0$ each. Hence the cost of the matching is at most $n+k$.
\end{proof}

\begin{claim}\label{cl:smfquset_cl2}
	If $H$ admits an $\AAA$-perfect envy-free matching with
cost at most $n+k$, then $\langle S,E,k \rangle$ is a yes instance.
\end{claim}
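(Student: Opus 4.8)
The plan is to start from an $\A$-perfect stable matching $M$ in $H$ with $c(M) \le n+k$ and extract a set cover of $\langle S, E, k\rangle$ of size at most $k$. First I would pin down where the element-agents go: each element-agent $\s'_j$ has only the $f$ set-programs $\cc_{j1},\ldots,\cc_{jf}$ in its list, so $M(\s'_j) = \cc_{jt}$ for some set $s_{jt}$ containing $e_j$. Hence the $n$ element-agents are distributed among the set-programs, and each contributes cost $1$ (every $\cc_i$ has cost $1$). I would then define $X = \{\, s_i \in S : M(\cc_i) \ne \emptyset\,\}$, i.e.\ the sets whose set-program is used in $M$. Since every element-agent is matched to some $\cc_i$ with $s_i \in X$, and that $\cc_i$ contains the corresponding element, $X$ is indeed a set cover. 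It remains to bound $|X|$.

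The heart of the argument is a stability-forcing observation. Consider a set-agent $\s_i$ whose set-program $\cc_i$ is \emph{nonempty} in $M$ but $\s_i$ is \emph{not} matched to $\cc_i$ (so some element-agent occupies $\cc_i$). Since $\s_i$ is $\cc_i$'s top-preferred agent and $\cc_i$ is $\s_i$'s top choice, the pair $(\s_i, \cc_i)$ would block $M$ — contradiction. Therefore, for every $s_i \in X$, the set-agent $\s_i$ itself is matched to $\cc_i$. Consequently the cost of $M$ splits as: exactly $n$ (from the element-agents, each on some cost-$1$ program $\cc_i$) plus $|X|$ (from the set-agents matched to their own cost-$1$ set-programs) plus $0$ (from the remaining $m - |X|$ set-agents, which must therefore be matched to $\cc$, the only cost-$0$ program in their lists apart from the $\cc^t$'s — more precisely, any set-agent not matched to its $\cc_i$ must be matched to $\cc$ or some $\cc^t$; I would note that matching to a $\cc^t$ costs $1$ and only ever hurts, so WLOG, or by a direct accounting inequality, these contribute at least $0$ and the ones on $\cc$ contribute exactly $0$). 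This gives $c(M) \ge n + |X|$, so $n + |X| \le n + k$, i.e.\ $|X| \le k$, completing the proof.

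One technical subtlety to handle carefully: a set-agent $\s_i$ with $\cc_i$ nonempty could \emph{a priori} be matched to $\cc$ or to a $\cc^t$ rather than to $\cc_i$, which is exactly the case ruled out by the blocking-pair argument above; I should make sure that argument is airtight (it only uses that $\s_i >_{\cc_i} a'$ for every other $a' \in M(\cc_i)$, which holds since $\s_i$ is $\cc_i$'s top agent, and that $\cc_i >_{\s_i} M(\s_i)$, which holds since $\cc_i$ is $\s_i$'s top program and $M(\s_i) \ne \cc_i$). A second subtlety: set-agents matched to some $\cc^t$ contribute cost $1$ each and do \emph{not} enlarge $X$, so to get the clean bound $c(M) \ge n + |X|$ I would simply observe that the total cost is at least (cost of element-agents) $+$ (cost of set-agents matched to their own set-program) $= n + |X|$, dropping the nonnegative contribution of any set-agent on a $\cc^t$.

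**Main obstacle.** The only real difficulty is the bookkeeping that separates the contribution of element-agents from set-agents to the cost and shows these two contributions do not "share" a program in a way that would let $|X|$ exceed $k$ — but the blocking-pair argument forces $\s_i \in M(\cc_i)$ whenever $\cc_i$ is used, so the $|X|$ set-programs in $X$ each carry their own set-agent \emph{on top of} whatever element-agents they carry, making the cost accounting additive and clean. I expect no genuine obstruction beyond stating this carefully.
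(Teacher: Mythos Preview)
Your proposal is correct and uses essentially the same ingredients as the paper's proof: a cost-accounting argument showing that the element-agents contribute $n$ and the ``active'' set-programs contribute at least $|X|$, together with the blocking-pair observation that any nonempty $\cc_i$ must contain its own set-agent $\s_i$. The only cosmetic difference is the order and the definition of $X$: the paper sets $X=\{s_i : M(\s_i)=\cc_i\}$, first bounds $|X|\le k$ by cost, and then uses stability to show every $\cc_i$ with $s_i\notin X$ is closed (hence $X$ is a set cover); you set $X=\{s_i : M(\cc_i)\ne\emptyset\}$, get the set-cover property for free, and then use stability plus cost to bound $|X|$. The two definitions coincide by the blocking argument, so the proofs are equivalent.
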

\begin{proof}
	Let $M$ be an $\AAA$-perfect envy-free matching in $H$ with cost at most $n+k$.
First we prove that program $p$ must take at least $m-k$ set-agents --
since the matching is $\AAA$-perfect,
	every matched element-agent contributes a cost of at least $1$ each, every set-agent not matched to $\cc$
	contributes a cost of $1$ each thus if program $\cc$ takes less than $m-k$ set-agents then the cost of any such $\AAA$-perfect envy-free matching
	is at least $n+k+1$. This leads to a contradiction.
Thus, at least $m-k$ set-agents must be matched to program $\cc$.

Let $X$ be the set of sets $s_i$ such that $\s_i$ is matched to $\cc_i$.
	Then, $\mid\hspace{-0.1cm}X\hspace{-0.1cm}\mid$ $\leq k$.
Since, the matching is envy-free,
every program $\cc_i$ such that
	$s_i \notin X$ must be closed  (since $\s_i$ is either matched to $p$ or $p^j$).
We now prove that $X$ is a set cover.
Suppose not, then there exists at least one element, say $e_j$ such that no set containing
$e_j$ is in $X$, implying that all the programs $\cc_{jt}$ are closed.
This implies that element-agent $\s_t'$ is unmatched, that is, the matching $M$ is not $\AAA$-perfect.
	This leads to a contradiction.
	Hence $X$ must be a set cover. Since $\mid\hspace{-0.1cm}X\hspace{-0.1cm}\mid$ $\leq k$, thus $\langle S, E, k \rangle$ is a yes instance.
\end{proof}

We note that the costs in the reduced instance of $\SMFQ$ are constant, thus $\SMFQUSET$ is strongly $\NP$-hard.
We remark that the $\NP$-hardness result holds even when 
there is a master list ordering on agents and programs as follows: 
$(\s_1 \mpref \ldots \mpref \s_m \mpref \s_1' \mpref \ldots \mpref \s_n')$, 
$(\cc_1 \mpref \ldots \mpref \cc_m \mpref \cc \mpref \cc^1 \mpref \ldots \mpref \cc^{f-2})$.

\noindent This establishes Theorem~\ref{thm:smfq_hardness}.\ref{hpart1}.

\subsection{Inapproximability of $\SMFQUSET$}\label{apsec:inapprox}
In this section, we present a reduction
from the Minimum Vertex Cover ($\MVC$) problem
to prove the inapproximability result (Theorem~\ref{thm:smfq_hardness}.\ref{hpart2}) for the $\SMFQUSET$ problem.
Let $G=(V,E)$ be an instance of $\MVC$ problem. 
We construct an instance of $\SMFQ$ as follows.

\noindent{\bf Reduction: }
Let $n = \mid\hspace{-0.1cm}V\hspace{-0.1cm}\mid$ and $m = \mid\hspace{-0.1cm}E\hspace{-0.1cm}\mid$.
For every vertex $u_i$, 
we have $m$ vertex-agents $\s_i^1, \ldots \s_i^m$ and $2$ vertex-programs $\cc_i$ and $\cc_i'$.
For every edge $e_j$, we have one edge-agent $\s'_j$.
We have an additional program $p$.
Thus, we have $m+mn$ agents and $2n+1$ programs.

\noindent {\bf Preferences: } 
Let $\BBB_j'$ denote the set of vertex-programs $\cc_{j1}'$ and $\cc_{j2}'$ corresponding to the end-points $u_{j1}$ and $u_{j2}$
of edge $e_j$ 
and $\AAA'_i$ denote the set of edge-agents $\s_j'$ corresponding to the edges 
incident on vertex $u_i$. 
Each vertex-agent $\s_i^t, 1 \leq t \leq m$ has the programs $\cc_i, \cc_i', p$ in that order. 
Each edge-agent $\s_j'$ has the two programs in $\BBB_j'$ in an arbitrary but fixed order.
Each vertex-program $\cc_i$ has 
$m$ vertex-agents $\s_i^1, \ldots, \s_i^m$ in that order.
Each vertex-program $\cc_i'$ has 
$m$ vertex-agents $\s_i^1, \ldots, \s_i^m$ in that order, followed by the edge-agents $\AAA'_i$.
Program $p$ has 
the $mn$ vertex-agents in an arbitrary, fixed order.

\noindent{\bf Costs: }
The cost of each vertex-program $\cc_i$ is $3$, that of each vertex-program $\cc_i'$ is $2n$ and that of the program $p$ is $0$.
Note that $2n > 3$ since $n \geq 2$.

\begin{figure}[!ht]
        \begin{minipage}{0.4\textwidth}
	\begin{align*}
			\s_i^t &: \cc_i \mpref \cc_i' \mpref \cc \\
			\s_j'&: \BBB_j'
	\end{align*}
        \end{minipage}\hfill
        \begin{minipage}{0.45\textwidth}
                \begin{align*}
			\cc_i &: \s_i^1 \mpref \ldots \mpref \s_i^m \\
			\cc_i' &: \s_i^1 \mpref \ldots \mpref \s_i^m \mpref \AAA_i'\\
			p &: \s_1^1 \ldots \s_n^m
                \end{align*}
        \end{minipage}
	\vspace{0.3cm}
    \caption{Preference lists in the reduced instance $H$ of $\SMFQ$ from instance $G$ of $\MVC$.
	Here, $1\leq i \leq n,\ 1 \leq j \leq m$ and $1 \leq t \leq m$}
    \label{fig:mincostset_inapprox}
\end{figure}

We prove the guarantees obtained from the reduction in the following two lemmas.

\begin{lemma}\label{lem:lem1hardness}
	If an optimal vertex cover in $G$ is of size at most $(\frac{2}{3}+\epsilon)\cdot n$ then in $H$, there exists an $\AAA$-perfect envy-free matching $M$ with cost
	at most $(4+3\epsilon)mn$.
\end{lemma}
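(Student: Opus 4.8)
The plan is to take an optimal vertex cover $C \subseteq V$ with $|C| \le (\tfrac{2}{3}+\epsilon)n$ and explicitly build an $\A$-perfect matching $M$ in $H$, verify its stability against Definition~\ref{def:stable}, and then bound its total cost by $(4+3\epsilon)mn$. The natural construction mirrors the set-cover reduction: for a vertex $u_i \in C$, route all $m$ vertex-agents $\s_i^1,\ldots,\s_i^m$ to the (more expensive) program $\cc_i'$, which also leaves $\cc_i'$ available to absorb the edge-agents in $\A_i'$; for a vertex $u_i \notin C$, route all $m$ vertex-agents $\s_i^t$ to the zero-cost program $\cc$. Since $C$ is a vertex cover, every edge $e_j$ has at least one endpoint $u_i \in C$, so the edge-agent $\s_j'$ can be matched to some $\cc_i' \in \B_j'$ with $u_i \in C$ (breaking ties arbitrarily if both endpoints lie in $C$). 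This makes $M$ $\A$-perfect by construction. First I would write out this assignment formally and confirm every agent is covered.

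Next I would check stability. There are three kinds of agents to consider. A vertex-agent $\s_i^t$ with $u_i \in C$ is matched to $\cc_i'$; the only program it prefers is $\cc_i$, and since $\cc_i$ is closed in $M$ (no agent is assigned to any $\cc_i$ at all), $(\s_i^t,\cc_i)$ cannot block. A vertex-agent $\s_i^t$ with $u_i \notin C$ is matched to $\cc$, and it prefers $\cc_i$ and $\cc_i'$; both of these are closed for such $i$ (all of $u_i$'s vertex-agents went to $\cc$, and no edge-agent is routed to a $\cc_i'$ with $u_i \notin C$ by our tie-breaking choice), so again no blocking pair arises. Finally, an edge-agent $\s_j'$ is matched to one program of $\B_j'$; if it prefers the other program $\cc_{i'}'$ in $\B_j'$, then either $u_{i'} \notin C$, in which case $\cc_{i'}'$ is closed, or $u_{i'} \in C$, in which case $\cc_{i'}'$ is matched only to agents $\s_{i'}^1,\ldots,\s_{i'}^m$ plus possibly some edge-agents, all of whom are preferred by $\cc_{i'}'$ over any edge-agent is false — here I must be careful: $\cc_{i'}'$ ranks its own vertex-agents above all edge-agents, so $\s_j'$ has nobody below it to displace at $\cc_{i'}'$ unless $\cc_{i'}'$ holds an edge-agent ranked below $\s_j'$, and by choosing a consistent rule (e.g.\ assign each $\cc_{i'}'$ the edge-agents of $\A_{i'}'$ it receives in $\cc_{i'}'$'s preference order, or simply observe every edge-agent of $\A_{i'}'$ could in principle go there) one argues no envy pair survives. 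This stability verification for the edge-agents is the step I expect to be the main obstacle, since it is the only place the structure of the preference lists (vertex-agents ranked above edge-agents at $\cc_i'$) and the tie-breaking in the construction interact; I would state the assignment rule precisely enough that the claim "every program an agent prefers to its match is either closed or full of strictly-preferred agents" is immediate.

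Finally I would tally the cost. Only programs $\cc_i$ (cost $3$) and $\cc_i'$ (cost $2n$) are non-free, and no agent is ever matched to any $\cc_i$, so the entire cost comes from the $\cc_i'$ with $u_i \in C$. Each such $\cc_i'$ carries its $m$ vertex-agents plus at most $\deg(u_i)$ edge-agents, so $|M(\cc_i')| \le m + \deg(u_i) \le 2m$ (using $\deg(u_i) \le m$), giving cost at most $2n \cdot 2m = 4mn$ per program if we were naive — so I would instead sum carefully: $\sum_{u_i \in C} 2n\,|M(\cc_i')| = 2n\big(m|C| + \sum_{u_i\in C}(\text{edge-agents at }\cc_i')\big)$. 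The total number of edge-agents is exactly $m$ (one per edge), so $\sum_{u_i\in C}(\text{edge-agents at }\cc_i') \le m$, yielding total cost $\le 2n(m|C| + m) = 2nm(|C|+1) \le 2nm\big((\tfrac{2}{3}+\epsilon)n + 1\big)$. This bound has a factor $n^2 m$, which is far larger than $(4+3\epsilon)mn$, so the intended construction must be different: evidently the expensive program $\cc_i'$ is used \emph{only} for edge-agents, while vertex-agents of a covered vertex go to $\cc_i$ (cost $3$), not $\cc_i'$. I would therefore revise the construction — for $u_i \in C$, match $\s_i^1,\ldots,\s_i^m$ to $\cc_i$ (total cost $3m$ each) and match the edge-agents $\A_i'$ to $\cc_i'$; for $u_i \notin C$, match all $\s_i^t$ to $\cc$ — re-verify stability (now $\cc_i$ closed-or-full argument changes: for $u_i \notin C$, $\cc_i$ and $\cc_i'$ are both closed; for $u_i \in C$, $\s_i^t$ is matched to its top choice $\cc_i$, so no blocking; edge-agents as before), and then the cost is $\sum_{u_i\in C} 3m + \sum_{\cc_i' : u_i\in C} 2n\cdot(\#\text{edge-agents there})$. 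Bounding the first sum by $3m|C| \le 3m(\tfrac23+\epsilon)n = (2+3\epsilon)mn$ and the second — here again the $2n$ cost is prohibitive unless each $\cc_i'$ holds at most one edge-agent, which forces yet another refinement where edge-agents are spread out. I would reconcile these constraints against the target $(4+3\epsilon)mn$ by working backward from the cost bound to pin down exactly how many edge-agents each $\cc_i'$ may absorb, then fix the construction accordingly; getting this bookkeeping to land on $(4+3\epsilon)mn$ is the real content, and I would present the final construction and cost computation cleanly once the arithmetic dictates it.
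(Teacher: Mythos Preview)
Your revised construction (vertex-agents of $u_i\in C$ go to $\cc_i$, vertex-agents of $u_i\notin C$ go to $\cc$, edge-agents go to some $\cc_i'$ with $u_i\in C$) is exactly the paper's construction, and your vertex-agent cost bound $3m|C|\le (2+3\epsilon)mn$ is correct. The gap is a pure arithmetic slip in the last paragraph: the edge-agent cost is \emph{not} prohibitive. There are exactly $m$ edge-agents in total, each matched to a single program of cost $2n$, so
\[
\sum_{u_i\in C} 2n\cdot\bigl|\{\text{edge-agents at }\cc_i'\}\bigr|
\;=\; 2n\cdot m \;=\; 2mn,
\]
independently of how the edge-agents are distributed among the $\cc_i'$. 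You even made this observation yourself in the first attempt (``The total number of edge-agents is exactly $m$''). Adding the two pieces gives $(2+3\epsilon)mn + 2mn = (4+3\epsilon)mn$ on the nose; no further refinement of the construction is needed.

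On stability for edge-agents, your worry is legitimate but your suggested fixes are vaguer than necessary. The clean rule (and the one the paper uses) is: prune each edge-agent's list by deleting the programs $\cc_i'$ with $u_i\notin C$, then match each edge-agent to its \emph{top-preferred remaining} program. Then if $\s_j'$ prefers some $\cc_{i'}'$ over $M(\s_j')$, it must be that $\cc_{i'}'$ was pruned, i.e.\ $u_{i'}\notin C$; but then $\cc_{i'}'$ is closed (its vertex-agents all went to $\cc$, and no edge-agent was assigned there), so no blocking pair arises. This single sentence replaces your entire tie-breaking discussion.
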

\begin{proof}
	Let $V'$ be the vertex cover of $G$ of size at most $(\frac{2}{3}+\epsilon)\cdot n$.
	If $u_i \notin V'$, we match all the $m$ agents $\s_i^t$\ to the program $p$ at cost $0$.
	If $u_i \in V'$, we match all the $m$ agents $\s_i^t$ to the program $\cc_i$.
	This contributes the cost of at most $m\cdot (\frac{2}{3}+\epsilon) \cdot n \cdot 3 = mn(2 + 3\epsilon)$.
We then prune the preference list of every edge-agent $\s_j'$ by deleting the programs $\cc_i'$ corresponding to the end-point $u_i$ such that $u_i \notin V'$.
Since $V'$ is a vertex cover, it is guaranteed that every edge is covered and hence every edge-agent $\s_j'$ has a non-empty list $\BBB_j'$ after pruning.
Every edge-agent $\s_j'$ is then matched to the top-preferred program $\cc_i'$ in the pruned list $\BBB_j'$.
This contributes a cost of $2mn$ incurred by the edge-agents.
	Thus, the total cost of this matching is at most $(2+2+3\epsilon)mn = (4+3\epsilon)mn$.
	It is easy to see that $M$ is $\AAA$-perfect and is envy-free since no agent participates in an envy-pair.
\end{proof}

\begin{lemma}\label{lem:lem2hardness}
	If the optimal vertex cover in $G$ has size greater than $(\frac{8}{9}-\epsilon)\cdot n$ then in $H$, any $\AAA$-perfect 
	envy-free matching $M$
	has cost greater than $(\frac{14}{3}-3\epsilon)mn$.
\end{lemma}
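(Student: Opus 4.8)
The plan is to prove the contrapositive-style bound: I assume $M$ is an $\A$-perfect stable matching in $H$ with cost at most $(\frac{14}{3}-3\epsilon)mn$ and derive a vertex cover of size at most $(\frac{8}{9}-\epsilon)n$. The starting point is to classify the vertices $u_i$ by the behaviour of their $m$ vertex-agents $\s_i^1,\ldots,\s_i^m$ in $M$. Because each $\s_i^t$ has preference list $\cc_i, \cc_i', \cc$ and each vertex-program $\cc_i$ (resp.\ $\cc_i'$) prefers these agents in the order $\s_i^1,\ldots,\s_i^m$, stability forces a \emph{monotone threshold} structure: there is an index so that a prefix of the $\s_i^t$'s sits at $\cc_i$, then a block at $\cc_i'$, then the rest at $\cc$ (any inversion would create a blocking pair, since a higher-indexed agent at $\cc_i$ while a lower-indexed one is at $\cc_i'$ or $\cc$ is impossible by $\cc_i$'s preferences, and symmetrically for $\cc_i'$). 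I would first record this structural claim cleanly, since everything downstream rests on it.

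**Cost accounting per vertex.** For each vertex $u_i$, let $x_i$ be the number of its vertex-agents matched to $\cc_i$ and $y_i$ the number matched to $\cc_i'$ (so $m-x_i-y_i$ go to $\cc$, contributing $0$). The cost charged to $u_i$'s vertex-agents is $3x_i + 2n\, y_i$; additionally every edge-agent contributes $2n$ (it must be matched to some $\cc_i'$, the only programs on its list), so the edge-agents contribute exactly $2nm$ total. The key point is that $y_i \ge 1$ is extremely expensive ($2n$), so the cheap vertices are those with $y_i = 0$. Moreover, if $u_i$ has $y_i = 0$ and $x_i < m$, then some $\s_i^t$ is at $\cc$ while $\cc_i'$ is \emph{not} used by any vertex-agent — so $\cc_i'$ is available for the edge-agents incident to $u_i$, i.e.\ such a $u_i$ can "cover" its edges. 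The plan is to show: the set $V' = \{u_i : \cc_i' \text{ receives at least one edge-agent in } M\}$ together with a small correction set is a vertex cover, then bound $|V'|$ via the cost budget. Every edge-agent is matched to a $\cc_i'$ with $u_i$ an endpoint, so $V'$ is automatically a vertex cover; the real work is bounding $|V'|$.

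**Bounding the vertex cover size.** Partition vertices into three types: type-A ($\cc_i'$ receives an edge-agent, so $u_i \in V'$ — and stability then forces $x_i = m$, all vertex-agents at $\cc_i$, since if some $\s_i^t$ were at $\cc_i'$ or below while an edge-agent sits at $\cc_i'$... actually $\cc_i'$ prefers all $\s_i^t$ over edge-agents, so an edge-agent at $\cc_i'$ is only stable if no $\s_i^t$ wants $\cc_i'$, forcing every $\s_i^t$ to be at $\cc_i$, hence $x_i = m$, cost $3m$); type-B ($y_i \ge 1$ but $\cc_i'$ gets no edge-agent), cost $\ge 2nm$; type-C ($y_i = 0$ and $\cc_i'$ gets no edge-agent), cost $3x_i \le 3m$. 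Let $a, b, c$ be the counts. The total cost is at least $2nm + 3m\cdot a + 2nm\cdot b + 0\cdot c$; with the budget $(\frac{14}{3}-3\epsilon)mn$ and $n$ large this forces $b = 0$ and $3m\cdot a \le (\frac{14}{3}-3\epsilon)mn - 2nm - (\text{edge-agent term already counted})$. Carefully: edge-agents contribute $2nm$, so $3m a \le (\frac{14}{3}-3\epsilon)mn - 2nm = (\frac{8}{3}-3\epsilon)mn$, giving $a \le (\frac{8}{9}-\epsilon)n$. Since $b=0$ and $V' = \{\text{type-A vertices}\}$ is already a vertex cover (every edge-agent lands on a type-A $\cc_i'$), we get a vertex cover of size $a \le (\frac{8}{9}-\epsilon)n$, completing the contrapositive.

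**Main obstacle.** The delicate point is the exact cost bookkeeping — in particular verifying that the edge-agent contribution is \emph{exactly} $2nm$ (each of the $m$ edge-agents matched to a $\cc_i'$ at cost $2n$) and that this does not double-count with the type-A vertex-agent cost, plus confirming that type-B is genuinely ruled out by the budget (one needs $2nm \cdot (b+1) + 3ma > (\frac{14}{3}-3\epsilon)mn$ once $b \ge 1$, which holds for $n$ large since $4nm > \frac{14}{3}mn$ fails — so actually one must be careful: $b=1$ gives cost $\ge 4nm$, and $4nm > (\frac{14}{3}-3\epsilon)mn$ iff $4 > \frac{14}{3}$, which is \emph{false}). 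I would therefore need to handle type-B more carefully: a type-B vertex costs at least $2nm + 0$ but its existence doesn't immediately blow the budget, so instead I fold type-B into the vertex cover as well (a type-B vertex has $y_i \ge 1$) — but then I must recheck the arithmetic so that $a + b$ still satisfies the $(\frac{8}{9}-\epsilon)n$ bound, which requires the coefficient comparison $3 \cdot (a+b) \cdot m \lesssim$ budget, using that type-B also costs $\ge 2nm \gg 3m$. Nailing this case analysis so the constant works out to exactly $\frac{8}{9}-\epsilon$ is the crux, and together with Lemma~\ref{lem:lem1hardness} it yields the $\frac{7}{6}-\epsilon$ inapproximability via the known UGC-free hardness gap $[\frac{2}{3}, \frac{8}{9}]$ for vertex cover on the relevant graph family, since $\frac{14/3}{4} = \frac{7}{6}$.
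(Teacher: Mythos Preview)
Your overall strategy matches the paper's: prove the contrapositive, observe that the $m$ edge-agents contribute exactly $2mn$, and extract a vertex cover from the vertices whose program $\cc_i'$ receives an edge-agent. Your key inequality $3m\cdot a \le (\tfrac{14}{3}-3\epsilon)mn - 2mn = (\tfrac{8}{3}-3\epsilon)mn$, giving $a \le (\tfrac{8}{9}-\epsilon)n$, is correct and already finishes the argument. The paper defines its candidate cover a bit more broadly (vertices with at least one vertex-agent at $\cc_i$ or $\cc_i'$) and then prunes down to those with all $m$ vertex-agents there; your direct choice $V'=\{\text{type-A vertices}\}$ is what remains after that pruning and is slightly cleaner.

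The difficulty you flag in your ``Main obstacle'' paragraph is a red herring: you do not need $b=0$, and you do not need to fold type-B vertices into the cover. Every edge-agent is matched to some $\cc_i'$, so the set of type-A vertices is \emph{already} a vertex cover regardless of what type-B and type-C vertices do; their nonnegative cost only strengthens the inequality $2mn + 3m\cdot a \le c(M)$. Two small corrections along the way, neither of which damages the conclusion: (i) for a type-A vertex you do not actually get $x_i=m$ --- it is stable to have some $\s_i^t$ sitting at $\cc_i'$ alongside the edge-agent (that agent prefers $\cc_i$, but if $\cc_i$ is closed or holds only higher-ranked copies there is no blocking pair under Definition~\ref{def:stable}); what you do get is $x_i+y_i=m$, hence cost $\ge 3m$, which is the bound you use anyway. (ii) A type-B vertex costs at least $2n$, not $2nm$: take $y_i=1$ with $\s_i^1$ at $\cc_i'$ and $\s_i^2,\ldots,\s_i^m$ at $\cc$; this is stable since $\cc_i$ is closed and $\cc_i'$ holds its top-ranked agent. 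That is precisely why the budget does not force $b=0$ --- but, as noted, the proof never needed it to.
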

\begin{proof} 
	We prove the contra-positive i.e. if there exists an $\AAA$-perfect 
	envy-free matching
	with cost at most $mn(\frac{14}{3}-3\epsilon)$ then optimal vertex cover in $G$ has size at most $(\frac{8}{9}-\epsilon)\cdot n$.
	Given that the matching is $\AAA$-perfect, 
	all the $\s_j'$ type agents ($m$ in total) must be matched to some program in $\BBB_j'$. 
	Note that each such agent must contribute a cost of $2n$ and hence they together contribute a cost of $2mn$.

	Suppose the edge-agent $\s_j'$ is matched to a program $\cc_i'$ such that $u_i$ is one of the end-points of edge $e_j$.
	Then, $m$ $\s_i^t$ vertex-agents for the vertex $u_i$ must be matched to either $\cc_i$ or $\cc_i'$, otherwise they form an envy pair with $\s_j'$. 
Let $V'$ be the set of vertices $u_i$ such that at least one $\s_i^t$ is matched to $\cc_i$ or $\cc_i'$.
If $u_i \in V'$ is such that less than m $\s_i^t$ type agents are matched to $\cc_i$ or $\cc_i'$, then it implies that no edge-agent corresponding to the edge incident on $u_i$ could have been matched to $\cc_i'$ but is matched to $\cc_k'$ where $u_k$ is its other end-point.
Thus, we can remove such $u_i$ from $V'$. 
It is clear that $V'$ is a vertex cover of $G$ since the matching is $\AAA$-perfect.
	For every $u_i \in V'$, all $m$ $\s_i^t$ vertex-agents are matched to $\cc_i$ or $\cc_i'$. 
	Since they together contribute the cost at most $mn(\frac{8}{3}-3\epsilon)$ and each $u_i \in V'$ has $m$ copies matched, each contributing a cost of at least $3$, it implies that $\mid\hspace{-0.1cm}V'\hspace{-0.1cm}\mid$ $\leq (\frac{8}{9}-\epsilon)\cdot n$, implying that an optimal vertex cover in $G$ has size at most $(\frac{8}{9}-\epsilon)\cdot n$.
\end{proof}

Now we show the inapproximability for $\SMFQUSET$.

\begin{lemma}\label{lem:rperfect_7by6}
	The $\SMFQUSET$ problem is $\NP$-hard to approximate within a factor of $\frac{7}{6}-\delta$ for any constant $\delta > 0$, unless $\Poly = \NP$.
\end{lemma}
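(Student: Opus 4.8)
The plan is to combine Lemma~\ref{lem:lem1hardness} and Lemma~\ref{lem:lem2hardness} with the known hardness-of-approximation for minimum vertex cover. Specifically, I would invoke the result that it is $\NP$-hard to distinguish whether an instance $G$ of $\MVC$ has an optimal vertex cover of size at most $(\frac{2}{3}+\epsilon)\cdot n$ or greater than $(\frac{8}{9}-\epsilon)\cdot n$ (this is exactly the gap produced by the standard hardness reductions for vertex cover on which the $\frac{17}{16}$-style bounds rest; in the form stated here it matches the gap used by Dinur--Safra-type constructions restricted to the relevant parameters). Given such a $G$, I apply the reduction of Section~\ref{sec:ucinapprox} to obtain an $\SMFQ$ instance $H$.

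The key steps, in order: first, note that in the YES case Lemma~\ref{lem:lem1hardness} gives an $\A$-perfect stable matching of cost at most $(4+3\epsilon)mn$, so $\OPT$ for $\SMFQUSET$ on $H$ is at most $(4+3\epsilon)mn$. Second, in the NO case, the contrapositive of Lemma~\ref{lem:lem2hardness} says that every $\A$-perfect stable matching has cost greater than $(\frac{14}{3}-3\epsilon)mn$. Third, observe that a hypothetical $(\frac{7}{6}-\delta)$-approximation algorithm for $\SMFQUSET$, run on $H$, would in the YES case return a matching of cost at most $(\frac{7}{6}-\delta)(4+3\epsilon)mn$; for $\epsilon,\delta$ chosen small enough this quantity is strictly less than $(\frac{14}{3}-3\epsilon)mn$, since $\frac{7}{6}\cdot 4 = \frac{14}{3}$ exactly. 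Hence the approximation algorithm would distinguish the two cases, contradicting the $\NP$-hardness of the gap problem for $\MVC$ unless $\Poly=\NP$.

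The main obstacle is the bookkeeping with $\epsilon$ and $\delta$: because the threshold ratio is exactly $\frac{14}{3}/4 = \frac{7}{6}$, the argument only goes through if the additive error terms $3\epsilon mn$ on both sides are made small relative to the $\delta$-gap. I would make this precise by fixing $\delta>0$, then choosing $\epsilon = \epsilon(\delta)$ small enough that $(\frac{7}{6}-\delta)(4+3\epsilon) < \frac{14}{3}-3\epsilon$; solving this linear inequality in $\epsilon$ shows such a choice exists for every $\delta>0$. One should also double-check that the reduction is polynomial-time (it is: $|H|$ is polynomial in $n$ and $m$) and that Lemmas~\ref{lem:lem1hardness} and~\ref{lem:lem2hardness} are being applied with consistent constants. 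With these pieces in place the proof of Lemma~\ref{lem:rperfect_7by6}, and hence of Theorem~\ref{thm:smfq_inapprox} (whose $3$ distinct costs $\{0,3,2n\}$ match the reduction), follows.
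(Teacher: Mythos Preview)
Your proposal is correct and follows essentially the same approach as the paper: both invoke the Dinur--Safra vertex-cover gap at $p=\tfrac{1}{3}$ (yielding the $(\tfrac{2}{3}+\epsilon)$ vs.\ $(\tfrac{8}{9}-\epsilon)$ thresholds), apply Lemmas~\ref{lem:lem1hardness} and~\ref{lem:lem2hardness}, and then choose $\epsilon$ small enough in terms of $\delta$ so that $(\tfrac{7}{6}-\delta)(4+3\epsilon) < \tfrac{14}{3}-3\epsilon$. The paper makes the bookkeeping explicit by taking $\epsilon < \tfrac{8\delta}{13-6\delta}$, which is exactly what solving your linear inequality yields.
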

\begin{proof}
We use the following proposition and the results proved in Lemma~\ref{lem:lem1hardness} and Lemma~\ref{lem:lem2hardness}.

\noindent {\bf Proposition~\cite{DS02}.} For any $\epsilon > 0$ and $p < \frac{3-\sqrt{5}}{2}$,
the following holds: If there is a polynomial-time algorithm that, given a graph
$G = (V, E)$, distinguishes between the following two cases, then $\Poly = \NP$.

	\hspace{-0.2cm} $(1) \mid\hspace{-0.1cm}VC(G)\hspace{-0.1cm}\mid$ $\leq (1 - p + \epsilon)\mid\hspace{-0.1cm}V\hspace{-0.1cm}\mid$

	\hspace{-0.2cm} $(2) \mid\hspace{-0.1cm}VC(G)\hspace{-0.1cm}\mid$ $> (1 - max\{p^2, 4p^3 - 3p^4\} - \epsilon)\mid\hspace{-0.1cm}V\hspace{-0.1cm}\mid$
\qed

	By letting $p = \frac{1}{3}$ in Proposition above, we know that the existence of
a polynomial-time algorithm that distinguishes between the following two cases
implies $\Poly = \NP$ for an arbitrary small positive constant $\epsilon$:

	\hspace{-0.2cm} $(1) \mid\hspace{-0.1cm}VC(G)\hspace{-0.1cm}\mid$ $\leq (\frac{2}{3} + \epsilon)\mid\hspace{-0.1cm}V\hspace{-0.1cm}\mid$ i.e. $cost(M) \leq (4+3\epsilon)mn$

	\hspace{-0.2cm} $(2) \mid\hspace{-0.1cm}VC(G)\hspace{-0.1cm}\mid$ $> (\frac{8}{9} - \epsilon)\mid\hspace{-0.1cm}V\hspace{-0.1cm}\mid$ i.e. $cost(M) > (\frac{14}{3}-3\epsilon)mn$

Now, suppose that there is a polynomial-time approximation algorithm $A$ for $\SMFQUSET$
whose approximation ratio is at most $\frac{7}{6}-\delta$ for some $\delta$. 
Then, consider 
	a fixed constant $\epsilon$ such that $\epsilon < \frac{8\delta}{13-6\delta}$.
	If an instance of case $(1)$ is given to $A$, it outputs a solution with cost at most $(4+3\epsilon)mn(\frac{7}{6}-\delta) < mn\frac{26(7-6\delta)}{3(13-6\delta)}$.
	If an instance of case $(2)$ is given to $A$, it outputs a solution with cost greater than $mn(\frac{14}{3}-3\epsilon) > mn\frac{26(7-6\delta)}{3(13-6\delta)}$.
Hence, using $A$, we can distinguish between cases $(1)$ and $(2)$, which implies
that $\Poly = \NP$. 
\end{proof}

This establishes Theorem~\ref{thm:smfq_hardness}.\ref{hpart2}.

\section{Experiments}\label{sec:exp}
We present empirical results for the algorithms presented in our paper
on data sets constructed from real world data as well as on synthetically generated data sets.
The evaluations are performed on a 64-bit machine running 5.13.0-30-generic ubuntu kernel.
The machine has 32GB of RAM and Intel(R) Xeon(R) Silver 4210R 2.40GHz CPU with 20 cores.
Our algorithms are implemented and executed using Python 3.8.
We use IBM CPLEX solver academic version 
for solving the Integer Linear Program.

\subsection{Data sets}
Since we introduce the cost-based allocation under two-sided preference setting,
we do not have $\SMFQ$ instances. 
We generate $\SMFQ$ instances from the $\HR$ data sets using {\em cost functions} (see Section~\ref{sec:costfun}).
We use real $\HR$ data sets ({\bf R1, R2} and {\bf R3}) that contain 
the preferences of students (agents) and courses (programs) 
collected at an educational institute
over a period of three semesters. Each student ranks a subset of courses and
every course ranks a subset of students such that they are mutually acceptable.
The courses in these data sets have a fixed input quota associated with them. 

We generate synthetic $\HR$ data sets ({\bf S1, S2} and {\bf S3}) 
as follows.
The generator takes as input the number of students (agents), the number of programs (courses) and the length of preference list of
every student ($\ell_a$).
The quotas for the courses are assigned using a normalized uniform distribution $(0.0,1.0)$
such that the total quota is at least $\mid\hspace{-0.1cm}\AAA\hspace{-0.1cm}\mid$ and at most $2\mid\hspace{-0.1cm}\AAA\hspace{-0.1cm}\mid$.
The course popularity is also decided using a normalized uniform distribution $(0.0,1.0)$.
The courses for a fixed student are selected at random 
based on popularity and the ranks are assigned in the descending order of popularity.
The preference list of a course is generated using a random shuffling of the students who rank that course.

\subsection{Cost functions}\label{sec:costfun}
We convert the $\HR$ instances (both real and synthetic) using the cost functions that we have designed
and obtain the $\SMFQ$ instances. 
Recall that $len(p)$ denotes the length of preference list of program $p$
and $q(p)$ denotes the upper-quota of program $p$.
Our cost functions assign a cost to program $p$ that is directly proportional to $len(p)$
and inversely proportional to $q(p)$.
In practice, a program with high demand, that is, large value of $len(p)$ and
a small quota indicates that it is a {\em popular course} and hence is likely to have a high cost of matching an agent.
We have three cost functions as described below.
The {\bf \em median} cost function always generates an instance with two distinct costs.
Each of {\bf \em linear} and {\bf \em exponential} cost functions generate an instance with an arbitrary number
of distinct costs. Both the cost functions generate the same number of distinct costs. In the {\bf \em linear}
cost function, the absolute cost values are consecutive whereas in the {\bf \em exponential} cost function,
the absolute cost values span a larger range.
Let $ratio(p) = \frac{len(p)}{q(p)}$.
\begin{itemize}
	\item {\bf \em median(c). } 
		The cost function {\em median} is parameterized by a constant $c \geq 1$.
		Let $m$ be the median of the set of ratios $\{ratio(p) \mid p \in \BBB\}$. Set $c(p) = 0$ if $ratio(p) \leq m$, otherwise $c(p) = c$.
	\item {\bf \em linear. } Order programs using the distinct ratios $ratio(p)$ in increasing order and assign indices $0, 1, \ldots$ to the programs in that order. Then assign $c(p) = j$ if index of $p$ is $j$.
	\item {\bf \em exponential(c). } 
		The cost function {\em exponential} is parameterized by a constant $c \geq 2$.
		 Order programs using the distinct ratios $ratio(p)$ in increasing order and assign indices $0, 1, \ldots$ to the programs in that order. Then assign $c(p) = c^j$ if index of $p$ is $j$.
\end{itemize}

In our experiments, we have selected $c=10$ for the {\bf \em median} cost function and $c=2$ for the {\bf \em exponential} cost function.
{Additionally, we have evaluated our algorithms for the instances generated by setting   $c = 5$ and $c = 100$ for the {\bf \em median} cost function and $c = 4$ and $c=5$ for the {\bf \em exponential} cost function. 
We omit the detailed results for these settings since the results follow a similar pattern as the results we have presented for our selected choice of $c$.}

\subsection{Properties of the generated $\SMFQ$ instances}
Table~\ref{tab:datasets} shows the properties of the generated $\SMFQ$ instances.
For an $\HR$ instance, $M_s$ denotes a stable matching in that instance.
For each input $\HR$ instance, we instantiate a $\SMFQ$ instance using three cost functions described in Section~\ref{sec:costfun}
to get a total of eighteen $\SMFQ$ instances.
For a fixed $\HR$ instance, the number of agents ($\mid\hspace{-0.1cm}\AAA\hspace{-0.1cm}\mid$), the number of programs ($\mid\hspace{-0.1cm}\BBB\hspace{-0.1cm}\mid$),
the length of the longest preference list of an agent and a program (respectively, $\ell_a$ and $\ell_p$)
and the size of a stable matching are indicated.

For each $\SMFQ$ instantiation, the column {\bf tuple} denotes three values --
the number of distinct costs, the minimum cost value and the maximum cost value in that order.
	The optimal solution for $\SMFQUSET$ problem on the particular instance is obtained using the 
$\SMFQUSET$ Integer Linear Program presented 
in Section~\ref{sec:c1c2} that is solved using IBM CPLEX solver.
For each instance, we report the time (in seconds) required for the CPLEX solver 
and the cost of the optimal matching. Recall that the optimal matchings are $\AAA$-perfect.

\begin{table}[!ht]
{
	\fontsize{0.0001}{0.0001}\selectfont
	\def\arraystretch{30}
	\setlength\tabcolsep{1.4pt}
	\begin{tabular}{ccccccP{0.08\textwidth}P{0.07\textwidth}P{0.05\textwidth}P{0.08\textwidth}P{0.06\textwidth}P{0.06\textwidth}P{0.15\textwidth}P{0.06\textwidth}P{0.08\textwidth}}\toprule
		\multirow{3}{*}{} & \multirow{3}{*}[-8pt]{$\mid\hspace{-0.08cm}\AAA\hspace{-0.08cm}\mid$} & \multirow{3}{*}[-8pt]{$\mid\hspace{-0.08cm}\BBB\hspace{-0.08cm}\mid$} & \multirow{3}{*}[-8pt]{$\ell_a$} & \multirow{3}{*}[-8pt]{$\ell_p$} &\multirow{3}{*}[-8pt]{$\mid\hspace{-0.08cm}M_s\hspace{-0.08cm}\mid$} & \multicolumn{3}{c}{\bf median} & \multicolumn{3}{c}{\bf linear} & \multicolumn{3}{c}{\bf exponential} \\\cmidrule(lr){7-15}
		& & & & & & \multirow{2}{*}{\bf tuple} & \multicolumn{2}{c}{\bf LP $\OPT$} & \multirow{2}{*}{\bf tuple} & \multicolumn{2}{c}{\bf LP $\OPT$} & \multirow{2}{*}{\bf tuple} & \multicolumn{2}{c}{\bf LP $\OPT$}\\\cmidrule(lr){8-9}\cmidrule(lr){11-12}\cmidrule(lr){14-15}
		& & & & & & & {\bf time (sec)} & {\bf cost} & & {\bf time (sec)} & {\bf cost} & & {\bf time (sec)} & {\bf cost} \\\midrule
		{\bf R1} & 655 & 14 & 14 & 386 & 487 & (2,0,10) & 2.92 & 5640 & (13,0,12) & 2.79 & 6331 & (13,1,2.44$\times10^8$) & 2.81 & 1.26$\times10^6$ \\\midrule
		{\bf R2} & 729 & 16 & 16 & 513 & 675 & (2,0,10) & 22.23 & 6230 & (16,0,15) & 14.10 & 8442 & (16,1,3.05$\times10^{10}$) & 13.71 & 8.19$\times10^6$ \\\midrule
		{\bf R3} & 483 & 18 & 18 & 412 & 482 & (2,0,10) & 264.84 & 2380 & (18,0,17) & 2550 & 4452 & (18,1,7.62$\times10^{11}$) & 53.42 & 4.24$\times10^6$ \\\midrule
		{\bf S1} & 500 & 20 & 5 & 203 & 430 & (2,0,10) & 5.64 & 220 & (20,0,19) & 19.86 & 2968 & (20,1,5.24$\times10^5$) & 6.13 & 2.44$\times10^5$ \\\midrule
		{\bf S2} & 750 & 35 & 5 & 207 & 604 & (2,0,10) & 8155 & 4550 & (35,0,34) & 3827 & 14472 & (35,1,1.71$\times10^{10}$) & 9.85 & 5.28$\times10^9$ \\\midrule
		{\bf S3} & 1000 & 50 & 5 & 193 & 840 & (2,0,10) & 2257 & 3740 & (49,0,48) & 11372 & 24797 & (49,1,2.81$\times10^{14}$) & 18.16 & 3.82$\times10^{12}$ \\\bottomrule
\end{tabular}
}
\caption{Properties of the $\SMFQ$ instances}
\label{tab:datasets}
\end{table}

\subsection{Evaluation Parameters}
We implement the $\ell_p$-approximation algorithms (Algorithm~\ref{algo:algo_mincostset_arb} and $\sf ALG$)
and the $\mid\hspace{-0.1cm}\BBB\hspace{-0.1cm}\mid$-approximation algorithm in Section~\ref{sec:papprox}.
We also implement the stable matching algorithms~\cite{GS62} to
compute the unique agent-optimal (denoted as {\em A-opt}) and program-optimal (denoted as
{\em P-opt}) stable matching.
In the implementation of the $\mid\hspace{-0.1cm}\BBB\hspace{-0.1cm}\mid$-approximation algorithm we compute the {\em A-opt} stable matching.
Table~\ref{tab:expres} and Table~\ref{tab:expressynth} report our results on the $\SMFQ$ instances
generated from the real and synthetic data sets respectively.
The row LP $\OPT$ denotes the $\SMFQUSET\ \OPT$.
The next three rows 
respectively denote the $\ell_p$-approximation algorithms $\sf ALG$ and
Algorithm~\ref{algo:algo_mincostset_arb} and the $\mid\hspace{-0.1cm}\BBB\hspace{-0.1cm}\mid$-approximation algorithm in Section~\ref{sec:papprox}.
For each $\SMFQ$ instance, and for every approximation algorithm, we evaluate various parameters which can be classified
into the following three sets.

\noindent{\bf Parameters specific to the  algorithms. }
Recall that the absolute time and cost values for the optimal solution computed using CPLEX 
solver are reported in Table~\ref{tab:datasets}. We report the following parameters
specific to the algorithms.
\begin{enumerate}
	\item {\bf \em time ratio}: the relative time required for execution. The relative time is
		reported as the ratio {(truncated to three decimal places)} of the absolute time value for the algorithm and the absolute time value for 
		obtaining the optimal solution.
	\item {\bf \em approx. ratio}: the approximation guarantee. The approximation guarantee is the ratio
		{(truncated to three decimal places)}
		of the absolute cost of the output matching obtained using the algorithm and the 
		absolute cost of the optimal solution.
\end{enumerate}
{For instance, $\sf ALG$ executed on the $\SMFQ$ instance generated from the real data set {\bf R1} using {\bf \em median} cost function
takes absolute time equal to 0.295 times the time taken by the CPLEX solver (that is, 2.92 seconds as seen from Table~\ref{tab:datasets}) and the absolute cost of the solution is 1.005 times the optimal cost (that is, 5640 as seen from Table~\ref{tab:datasets}).}

\noindent{\bf Parameters indicating the quality of the output matching. }
We measure the following parameters that capture the quality of the output matching in terms of agent satisfaction.
\begin{enumerate}
	\item {\bf \em avg-rank}: the average rank at which an agent is matched.
	\item {\bf \em rank-1\%} or {\bf \em top-3\%}: the percentage of agents matched to their rank-1 program for 
		instances generated from the synthetic data sets or within their top-3 ranks for instances
		generated from the real data sets.
		We choose to measure {\bf \em rank-1\%} for 
the synthetic data sets
		because by design $\ell_a = 5$ for these data sets.
\setcounter{savedctr}{\value{enumi}}
\end{enumerate}
It is well-known~\cite{GI89} that for any $\HR$ instance every stable matching matches the same
set of agents. Furthermore, for any agent $a$,
the matched program $M_s(a)$ for any stable matching $M_s$ is never 
worse than the program matched to it in the unique {\em P-opt} stable matching
and is never better than the program matched to it in the unique {\em A-opt} stable matching.
Since the $\SMFQ$ instances do not have quotas, an agent may be matched to a program
that falls outside this {\em range}.
We measure the following two parameters that capture this. To make the comparison fair, the agents considered for these parameters are only the ones that
are matched in a stable matching. This is because the output for the $\SMFQUSET$ is always $\AAA$-perfect. Thus, an agent unmatched in a stable matching 
is trivially matched to a better-preferred program in any output matching.
\begin{enumerate}
\setcounter{enumi}{\value{savedctr}}
\item {\bf \em $\lpref$ {\em P-opt} stable\%}:	the percentage of agents matched in a stable matching that prefer their {\em P-opt}
	partner over the program matched to them in the output matching.
\item {\bf \em $\mpref$ {\em A-opt} stable\%}:
		the percentage of agents matched in a stable
		matching that prefer the program matched to them in the output matching over 
		their {\em A-opt} partner.
\end{enumerate}

If $M_{\BBB}$ and $M_{\AAA}$ are the {\em P-opt} and {\em A-opt} stable matchings respectively and $M$ is the output matching then
\begin{eqnarray*}
	\lpref \text{{\em P-opt} stable\%} = \frac{100\cdot\mid\hspace{-0.1cm}\{a \mid M_{\BBB}(a) \neq \bot, M_{\BBB}(a) \mpref_a M(a)\}\hspace{-0.1cm}\mid}{\mid\hspace{-0.1cm}M_{\BBB}\hspace{-0.1cm}\mid}\\\\
	\mpref \text{{\em A-opt} stable\%} = \frac{100\cdot\mid\hspace{-0.1cm}\{a \mid M_{\AAA}(a) \neq \bot, M(a) \mpref_a M_{\AAA}(a)\}\hspace{-0.1cm}\mid}{\mid\hspace{-0.1cm}M_{\AAA}\hspace{-0.1cm}\mid}
\end{eqnarray*}

\begin{table}[!ht]
	{
	\fontsize{0.0001}{0.0001}\selectfont
	\def\arraystretch{0}
	\setlength\tabcolsep{1pt}
	\begin{tabular}{P{0.12\textwidth}P{0.12\textwidth}P{0.055\textwidth}P{0.065\textwidth}P{0.07\textwidth}P{0.07\textwidth}P{0.075\textwidth}P{0.075\textwidth}P{0.075\textwidth}P{0.075\textwidth}P{0.075\textwidth}}\toprule
		& \multirow{2}{*}{\bf Algo} & \multicolumn{2}{P{0.12\textwidth}}{\bf Algorithm parameters} & \multicolumn{4}{P{0.29\textwidth}}{\bf Quality of output matching} & \multicolumn{3}{P{0.225\textwidth}}{\bf Comparison to $\HR$ model} \\\cmidrule(lr){3-4}\cmidrule(lr){5-8}\cmidrule(lr){9-11}
		& & {\bf time ratio ($\downarrow$)} & {\bf approx. ratio ($\downarrow$)} & {\bf avg-rank ($\downarrow$)} & {\bf top-3\% ($\uparrow$)} & {\bf $\lpref$ {\em P-opt} stable\% ($\downarrow$)} & {\bf $\mpref$ {\em A-opt} stable\% ($\uparrow$)} & {\bf bp\% ($\downarrow$)} & {\bf ba\% ($\downarrow$)} & {\bf vio\% ($\downarrow$)} \\\midrule

		\multicolumn{11}{c}{} \\
		\multicolumn{11}{c}{\multirow{2}{*}[3pt]{\bf Data set R1}}\\\midrule
		\multirow{4}{*}[-12pt]{\bf median} & LP $\OPT$ & 1 & 1 & 1.056 & 99.847 & 0.205 & 60.78 & 1.672 & 3.969 & 230.714\\\cmidrule(lr){2-11}
		& $\sf ALG$ & 0.295 & 1.005 & 1 & 100 & 0 & 61.602 & 0 & 0 & 221.429\\\cmidrule(lr){2-11}
		& Algo.~\ref{algo:algo_mincostset_arb} & 0.414 & 1.002 & 1.005 & 100 & 0 & 61.396 & 0.049 & 0.153 & 220\\\cmidrule(lr){2-11}
		& $\mid\hspace{-0.08cm}\BBB\hspace{-0.08cm}\mid$-approx & 0.385 & 1.004 & 1.002 & 100 & 0 & 61.602 & 0 & 0 & 220.714\\\midrule

		\multirow{4}{*}[-12pt]{\bf linear} &  LP $\OPT$ & 1 & 1 & 1.055 & 100 & 0 & 60.78 & 1.721 & 3.969 & 232.143\\\cmidrule(lr){2-11}
		& $\sf ALG$ & 0.3 & 1.008 & 1 & 100 & 0 & 61.602 & 0 & 0 & 221.429\\\cmidrule(lr){2-11}
		& Algo.~\ref{algo:algo_mincostset_arb} & 0.432 & 1.002 & 1.043 & 100 & 0 & 60.986 & 1.327 & 4.122 & 226.429\\\cmidrule(lr){2-11}
		& $\mid\hspace{-0.08cm}\BBB\hspace{-0.08cm}\mid$-approx & 0.392 & 1.007 & 1.002 & 100 & 0 & 61.602 & 0 & 0 & 220.714\\\midrule

		\multirow{4}{*}[-12pt]{\bf exponential} & LP $\OPT$ & 1 & 1 & 1.044 & 100 & 0 & 60.78 & 1.377 & 2.901 & 230\\\cmidrule(lr){2-11}
		& $\sf ALG$ & 0.3 & 1.012 & 1 & 100 & 0 & 61.602 & 0 & 0 & 221.429\\\cmidrule(lr){2-11}
		& Algo.~\ref{algo:algo_mincostset_arb} & 0.428 & 1.003 & 1.043 & 100 & 0 & 60.986 & 1.327 & 4.122 & 226.429\\\cmidrule(lr){2-11}
		& $\mid\hspace{-0.08cm}\BBB\hspace{-0.08cm}\mid$-approx & 0.364 & 1.009 & 1.002 & 100 & 0 & 61.602 & 0 & 0 & 220.714\\\midrule

		\multicolumn{11}{c}{} \\
		\multicolumn{11}{c}{\multirow{2}{*}[3pt]{\bf Data set R2}}\\\midrule
		\multirow{4}{*}[-12pt]{\bf median} & LP $\OPT$ & 1 & 1 & 1.418 & 97.119 & 2.519 & 49.778 & 3.443 & 14.815 & 160\\\cmidrule(lr){2-11}
		& $\sf ALG$ & 0.049 & 1.048 & 1 & 100 & 0 & 61.926 & 0 & 0 & 168.333\\\cmidrule(lr){2-11}
		& Algo.~\ref{algo:algo_mincostset_arb} & 0.077 & 1.042 & 1.049 & 100 & 0.593 & 61.185 & 0.499 & 2.606 & 172.778\\\cmidrule(lr){2-11}
		& $\mid\hspace{-0.08cm}\BBB\hspace{-0.08cm}\mid$-approx & 0.071 & 1.043 & 1.134 & 100 & 0 & 58.667 & 0 & 0 & 120.833\\\midrule

		\multirow{4}{*}[-12pt]{\bf linear} & LP $\OPT$ & 1 & 1 & 1.431 & 93.553 & 0.444 & 52 & 5.02 & 18.519 & 163.889\\\cmidrule(lr){2-11}
		& $\sf ALG$ & 0.077 & 1.052 & 1 & 100 & 0 & 61.926 & 0 & 0 & 168.333\\\cmidrule(lr){2-11}
		& Algo.~\ref{algo:algo_mincostset_arb} & 0.115 & 1.006 & 1.443 & 93.416 & 0.444 & 51.852 & 5.072 & 18.93 & 165\\\cmidrule(lr){2-11}
		& $\mid\hspace{-0.08cm}\BBB\hspace{-0.08cm}\mid$-approx & 0.122 & 1.013 & 1.134 & 100 & 0 & 58.667 & 0 & 0 & 120.833\\\midrule

		\multirow{4}{*}[-12pt]{\bf exponential} & LP $\OPT$ & 1 & 1 & 1.163 & 100 & 0.444 & 57.926 & 0.447 & 2.195 & 125.417\\\cmidrule(lr){2-11}
		& $\sf ALG$ & 0.08 & 1.348 & 1 & 100 & 0 & 61.926 & 0 & 0 & 168.333\\\cmidrule(lr){2-11}
		& Algo.~\ref{algo:algo_mincostset_arb} & 0.12 & 1.051 & 1.443 & 93.416 & 0.444 & 51.852 & 5.072 & 18.93 & 165\\\cmidrule(lr){2-11}
		& $\mid\hspace{-0.08cm}\BBB\hspace{-0.08cm}\mid$-approx & 0.125 & 1.006 & 1.134 & 100 & 0 & 58.667 & 0 & 0 & 120.833\\\midrule

		\multicolumn{11}{c}{} \\
		\multicolumn{11}{c}{\multirow{2}{*}[3pt]{\bf Data set R3}}\\\midrule
		\multirow{4}{*}[-12pt]{\bf median} & LP $\OPT$ & 1 & 1 & 2.427 & 78.054 & 34.232 & 20.332 & 12.671 & 37.474 & 130.476\\\cmidrule(lr){2-11}
		& $\sf ALG$ & 0.002 & 1.689 & 1.112 & 100 & 10.788 & 29.046 & 1.118 & 10.766 & 96.364\\\cmidrule(lr){2-11}
		& Algo.~\ref{algo:algo_mincostset_arb} & 0.003 & 1 & 2.41 & 78.054 & 34.232 & 20.332 & 12.505 & 37.474 & 127.619\\\cmidrule(lr){2-11}
		& $\mid\hspace{-0.08cm}\BBB\hspace{-0.08cm}\mid$-approx & 0.004 & 1.723 & 1.389 & 96.273 & 0 & 11.203 & 0 & 0 & 16.364\\\midrule

		\multirow{4}{*}[-12pt]{\bf linear} & LP $\OPT$ & 1 & 1 & 3.211 & 54.244 & 54.149 & 1.867 & 19.648 & 65.839 & 137.273\\\cmidrule(lr){2-11}
		& $\sf ALG$ & 0 & 1.117 & 2.503 & 61.491 & 48.34 & 6.224 & 15.031 & 58.592 & 137.576\\\cmidrule(lr){2-11}
		& Algo.~\ref{algo:algo_mincostset_arb} & 0 & 1.033 & 2.754 & 59.213 & 49.378 & 5.394 & 16.625 & 60.041 & 162.727\\\cmidrule(lr){2-11}
		& $\mid\hspace{-0.08cm}\BBB\hspace{-0.08cm}\mid$-approx & 0 & 1.308 & 1.28 & 98.965 & 0 & 16.183 & 0 & 0 & 29.091\\\midrule

		\multirow{4}{*}[-12pt]{\bf exponential} & LP $\OPT$ & 1 & 1 & 2.87 & 59.006 & 49.793 & 5.394 & 17.557 & 60.041 & 155.455\\\cmidrule(lr){2-11}
& $\sf ALG$ & 0.009 & 1.506 & 2.503 & 61.491 & 48.34 & 6.224 & 15.031 & 58.592 & 137.576\\\cmidrule(lr){2-11}
& Algo.~\ref{algo:algo_mincostset_arb} & 0.013 & 1.001 & 2.754 & 59.213 & 49.378 & 5.394 & 16.625 & 60.041 & 162.727\\\cmidrule(lr){2-11}
& $\mid\hspace{-0.08cm}\BBB\hspace{-0.08cm}\mid$-approx & 0.02 & 2.105 & 1.174 & 100 & 0.83 & 20.954 & 0.228 & 2.277 & 52.727\\\bottomrule
		
	\end{tabular}
	}
	\caption{Evaluation of the instances generated from real data sets.
	}
	\label{tab:expres}
\end{table}

\begin{table}[!ht]
	{
	\fontsize{0.0001}{0.0001}\selectfont
	\def\arraystretch{0}
	\setlength\tabcolsep{1pt}
	\begin{tabular}{P{0.12\textwidth}P{0.12\textwidth}P{0.055\textwidth}P{0.065\textwidth}P{0.07\textwidth}P{0.07\textwidth}P{0.075\textwidth}P{0.075\textwidth}P{0.075\textwidth}P{0.075\textwidth}P{0.075\textwidth}}\toprule
		& \multirow{2}{*}{\bf Algo} & \multicolumn{2}{P{0.12\textwidth}}{\bf Algorithm parameters} & \multicolumn{4}{P{0.29\textwidth}}{\bf Quality of output matching} & \multicolumn{3}{P{0.225\textwidth}}{\bf Comparison to $\HR$ model} \\\cmidrule(lr){3-4}\cmidrule(lr){5-8}\cmidrule(lr){9-11}

		& & {\bf time ratio ($\downarrow$)} & {\bf approx. ratio ($\downarrow$)} & {\bf avg-rank ($\downarrow$)} & {\bf rank-1\% ($\uparrow$)} & {\bf $\lpref$ {\em P-opt} stable\% ($\downarrow$)} & {\bf $\mpref$ {\em A-opt} stable\% ($\uparrow$)} & {\bf bp\% ($\downarrow$)} & {\bf ba\% ($\downarrow$)} & {\bf vio\% ($\downarrow$)} \\\midrule

		\multicolumn{11}{c}{} \\
		\multicolumn{11}{c}{\multirow{2}{*}[3pt]{\bf Data set S1}}\\\midrule
		\multirow{4}{*}[-12pt]{\bf median} & LP $\OPT$ & 1 & 1 & 2.594 & 14.8 & 18.372 & 48.14 & 39.2 & 85 & 136.932\\\cmidrule(lr){2-11}
		& $\sf ALG$ & 0.091 & 19 & 1.018 & 98.4 & 0.233 & 86.977 & 0.45 & 1.6 & 311.304\\\cmidrule(lr){2-11}
		& Algo.~\ref{algo:algo_mincostset_arb} & 0.114 & 16.545 & 1.204 & 84.2 & 0.698 & 81.395 & 1.1 & 3.8 & 275.652\\\cmidrule(lr){2-11}
		& $\mid\hspace{-0.08cm}\BBB\hspace{-0.08cm}\mid$-approx & 0.154 & 2.318 & 2.456 & 18 & 12.093 & 49.07 & 22.85 & 67.8 & 173.485\\\midrule

		\multirow{4}{*}[-12pt]{\bf linear} & LP $\OPT$ & 1 & 1 & 2.724 & 14.4 & 22.326 & 41.86 & 41.45 & 85.4 & 172.662\\\cmidrule(lr){2-11}
		& $\sf ALG$ & 0.025 & 1.641 & 1.62 & 50.4 & 3.721 & 72.326 & 15.5 & 49.6 & 204.225\\\cmidrule(lr){2-11}
		& Algo.~\ref{algo:algo_mincostset_arb} & 0.031 & 1.348 & 2.048 & 31.4 & 5.349 & 58.14 & 18.65 & 54.4 & 173.944\\\cmidrule(lr){2-11}
		& $\mid\hspace{-0.08cm}\BBB\hspace{-0.08cm}\mid$-approx & 0.05 & 1.464 & 2.616 & 18.8 & 0 & 36.744 & 0 & 0 & 54.58\\\midrule

		\multirow{4}{*}[-12pt]{\bf exponential} & LP $\OPT$ & 1 & 1 & 2.576 & 14.6 & 17.907 & 48.14 & 39.2 & 85.2 & 136.932\\\cmidrule(lr){2-11}
		& $\sf ALG$ & 0.083 & 7.147 & 1.62 & 50.4 & 3.721 & 72.326 & 15.5 & 49.6 & 204.225\\\cmidrule(lr){2-11}
		& Algo.~\ref{algo:algo_mincostset_arb} & 0.102 & 3.097 & 2.048 & 31.4 & 5.349 & 58.14 & 18.65 & 54.4 & 173.944\\\cmidrule(lr){2-11}
		& $\mid\hspace{-0.08cm}\BBB\hspace{-0.08cm}\mid$-approx & 0.138 & 2.282 & 2.334 & 21 & 6.744 & 52.093 & 28.15 & 72.4 & 137.58\\\midrule

		\multicolumn{11}{c}{} \\
		\multicolumn{11}{c}{\multirow{2}{*}[3pt]{\bf Data set S2}}\\\midrule
		\multirow{4}{*}[-12pt]{\bf median} & LP $\OPT$ & 1 & 1 & 2.747 & 21.467 & 17.219 & 49.007 & 30.233 & 70.667 & 185.022\\\cmidrule(lr){2-11}
		& $\sf ALG$ & 0 & 1.514 & 1.092 & 96.133 & 1.821 & 84.272 & 2.3 & 3.867 & 375.177\\\cmidrule(lr){2-11}
		& Algo.~\ref{algo:algo_mincostset_arb} & 0 & 1.459 & 1.199 & 92 & 2.98 & 80.795 & 4.8 & 7.467 & 439.831\\\cmidrule(lr){2-11}
		& $\mid\hspace{-0.08cm}\BBB\hspace{-0.08cm}\mid$-approx & 0 & 1.411 & 1.799 & 45.2 & 0 & 70.861 & 0 & 0 & 161.338\\\midrule

		\multirow{4}{*}[-12pt]{\bf linear} & LP $\OPT$ & 1 & 1 & 2.285 & 35.733 & 9.934 & 55.629 & 29.433 & 63.333 & 194.286\\\cmidrule(lr){2-11}
		& $\sf ALG$ & 0 & 1.086 & 1.647 & 52.533 & 4.305 & 71.358 & 16.167 & 47.467 & 243.814\\\cmidrule(lr){2-11}
		& Algo.~\ref{algo:algo_mincostset_arb} & 0 & 1.051 & 1.803 & 48.267 & 5.298 & 66.887 & 18.367 & 49.333 & 197.788\\\cmidrule(lr){2-11}
		& $\mid\hspace{-0.08cm}\BBB\hspace{-0.08cm}\mid$-approx & 0.001 & 1.248 & 1.809 & 44.267 & 0 & 71.192 & 0 & 0 & 162.082\\\midrule

		\multirow{4}{*}[-12pt]{\bf exponential} & LP $\OPT$ & 1 & 1 & 1.816 & 48.8 & 5.795 & 66.06 & 18.867 & 49.2 & 218.627\\\cmidrule(lr){2-11}
		& $\sf ALG$ & 0.117 & 1.411 & 1.647 & 52.533 & 4.305 & 71.358 & 16.167 & 47.467 & 243.814\\\cmidrule(lr){2-11}
		& Algo.~\ref{algo:algo_mincostset_arb} & 0.142 & 1.006 & 1.803 & 48.267 & 5.298 & 66.887 & 18.367 & 49.333 & 197.788\\\cmidrule(lr){2-11}
		& $\mid\hspace{-0.08cm}\BBB\hspace{-0.08cm}\mid$-approx & 0.181 & 2.309 & 1.792 & 47.2 & 3.974 & 68.709 & 16.867 & 49.6 & 184.711\\\midrule

		\multicolumn{11}{c}{} \\
		\multicolumn{11}{c}{\multirow{2}{*}[3pt]{\bf Data set S3}}\\\midrule
		\multirow{4}{*}[-12pt]{\bf median} & LP $\OPT$ & 1 & 1 & 2.853 & 11.9 & 18.929 & 37.857 & 40.3 & 86.5 & 165.677\\\cmidrule(lr){2-11}
		& $\sf ALG$ & 0.001 & 2.447 & 1.011 & 99.4 & 0.238 & 79.762 & 0.275 & 0.6 & 388.136\\\cmidrule(lr){2-11}
		& Algo.~\ref{algo:algo_mincostset_arb} & 0.001 & 2.313 & 1.124 & 92.1 & 0.714 & 75.714 & 2.15 & 5.6 & 319.712\\\cmidrule(lr){2-11}
		& $\mid\hspace{-0.08cm}\BBB\hspace{-0.08cm}\mid$-approx & 0.001 & 1.62 & 2.216 & 29.3 & 1.31 & 50 & 9.4 & 33.2 & 133.143\\\midrule

		\multirow{4}{*}[-12pt]{\bf linear} & LP $\OPT$ & 1 & 1 & 2.409 & 31.4 & 10.595 & 43.571 & 26.775 & 61.8 & 110.462\\\cmidrule(lr){2-11}
		& $\sf ALG$ & 0 & 1.137 & 1.363 & 75.2 & 2.619 & 69.762 & 9.075 & 24.8 & 229.688\\\cmidrule(lr){2-11}
		& Algo.~\ref{algo:algo_mincostset_arb} & 0 & 1.111 & 1.529 & 65.7 & 2.857 & 63.929 & 10.475 & 27.6 & 198.175\\\cmidrule(lr){2-11}
		& $\mid\hspace{-0.08cm}\BBB\hspace{-0.08cm}\mid$-approx & 0 & 1.243 & 2 & 38 & 0 & 56.429 & 0 & 0 & 105.519\\\midrule

		\multirow{4}{*}[-12pt]{\bf exponential} & LP $\OPT$ & 1 & 1 & 1.664 & 60.4 & 5.238 & 61.786 & 14.95 & 37.3 & 234.667\\\cmidrule(lr){2-11}
		& $\sf ALG$ & 0.111 & 1.591 & 1.363 & 75.2 & 2.619 & 69.762 & 9.075 & 24.8 & 229.688\\\cmidrule(lr){2-11}
		& Algo.~\ref{algo:algo_mincostset_arb} & 0.124 & 1.284 & 1.529 & 65.7 & 2.857 & 63.929 & 10.475 & 27.6 & 198.175\\\cmidrule(lr){2-11}
		& $\mid\hspace{-0.08cm}\BBB\hspace{-0.08cm}\mid$-approx & 0.134 & 1.832 & 1.43 & 70.4 & 2.024 & 67.738 & 9.4 & 25.8 & 228.049\\\bottomrule

	\end{tabular}
	}
	\caption{Evaluation of the instances generated from synthetic data sets. 
	}
	\label{tab:expressynth}
\end{table}

\noindent{\bf Parameters related to the corresponding $\HR$ instance. }
Since we derive our $\SMFQ$ instances from $\HR$ instances, we evaluate the output matching with respect to
the original $\HR$ instance.
Recall that our output matchings are envy-free.
However, an agent may prefer program $p$ over its matched partner in the output matching
such that $p$ is under-subscribed w.r.t. the input quota in the corresponding $\HR$ instance.
Note that such blocking pairs are not relevant in the $\SMFQ$ instance since input quotas are absent.
The following two parameters capture this.

\begin{enumerate}
	\item {\bf \em bp\%}: the percentage of the number of blocking pairs. Let $bp$ denote the number of blocking pairs
		w.r.t. the output matching $M$ and the input quotas and $m$ denote the number of edges in the instance. Then
		$bp\% = \frac{bp\cdot 100}{m-\mid M\mid} = \frac{bp \cdot 100}{m-\mid\AAA\mid}$ since $M$ is $\AAA$-perfect.
	\item {\bf \em ba\%}: the number of blocking agents. Let $ba$ denote the number of blocking agents
		w.r.t. the output matching $M$ and the input quotas. Then
		$ba\% = \frac{ba\cdot 100}{\mid\AAA\mid}$.
\setcounter{savedctr}{\value{enumi}}
\end{enumerate}
In order to achieve $\AAA$-perfectness, the output matching in the $\SMFQ$ instance
derived from the $\HR$ instance is expected to violate the input quotas, 
that is, $\mid\hspace{-0.1cm}M(p)\hspace{-0.1cm}\mid > q(p)$ for
some program $p$.
This is the price that the $\SMFQ$ setting pays in order to achieve
$\AAA$-perfectness.
We compute the percentage of total {\em violation} of the output matching.
\begin{enumerate}
\setcounter{enumi}{\value{savedctr}}
\item {\bf \em vio\%}: Let {\em violation} of program $p$ be
	the difference between $\mid\hspace{-0.1cm} M(p)\hspace{-0.1cm}\mid$ and $q(p)$ if positive, undefined otherwise.
The total violation is the sum of individual violation of the programs whose violation is defined.
The total input quota is the sum of input quotas of programs whose violation is defined.
Then $vio\% = \frac{\text{total violation}\cdot 100}{\text{total input quota}}$.
\end{enumerate}

\subsection{Summary of the empirical evaluations}
As mentioned earlier, Table~\ref{tab:expres} and Table~\ref{tab:expressynth} report our results.
In each column, the symbols $(\uparrow)$ and $(\downarrow)$ respectively indicate whether the larger or smaller value
of the corresponding parameter is better.

We make the following observations from our results.

\begin{itemize}
	\item The approximation ratio for all three algorithms 
		($\sf ALG$, Algorithm~\ref{algo:algo_mincostset_arb} and the 
		$\mid\hspace{-0.1cm}\BBB\hspace{-0.1cm}\mid$-approximation algorithm) is upper bounded by 2.5
		except for the data set {\bf S1}.
		Although, for {\bf S1} and the {\bf \em median} cost function, the absolute value of the ratio is large (around 16 for 
Algorithm~\ref{algo:algo_mincostset_arb} and around 19 for $\sf ALG$) we note
that the performance is significantly better than the theoretical guarantee. Furthermore even on this instance, the
$\mid\hspace{-0.14cm}\BBB\hspace{-0.14cm}\mid$-approx ratio is bounded by 2.5.
In all our data sets, among the two $\ell_p$-approximation algorithms, 
Algorithm~\ref{algo:algo_mincostset_arb} always performs better than $\sf ALG$ in terms of the approximation ratio. 
However, as seen in Section~\ref{sec:compare} neither of them is strictly better than the other.

\item		All our algorithms are simple to implement and the maximum speedup is of the order of $10^3$. The speedup is inverse of
column {\bf time ratio} in Table~\ref{tab:expres} and Table~\ref{tab:expressynth}.
It is seen that there is a variation in the speedups obtained on different instances -- this is attributed to the 
variation in the absolute time the LP \OPT\ takes on specific data sets.
For example in the data set
		 {\bf S3} and {\bf \em linear} cost function, the CPLEX
		solver takes a prohibitively large time (more than 3 hours).
	\item The average rank and the percentage of agents matched to their top choices are parameters of practical interest
in real-world applications like course allocation, or job assignment. Since the output matching in the $\SMFQ$ setting is $\AAA$-perfect, it 
is expected that average rank is small (smaller value is better) and percentage of agents matched to top choices is large. 
We observe that for real data sets all the approximation algorithms perform very well with respect to these two parameters.
For synthetic data sets, there is variation across different algorithms. We note that for a fixed instance
		the LP $\OPT$ and $\mid\hspace{-0.1cm}\BBB\hspace{-0.1cm}\mid$-approx have similar values and $\sf ALG$ and Algorithm~\ref{algo:algo_mincostset_arb} outperform them on most instances with respect to the two parameters.
		\item The parameters  {\bf \em $\mpref$ A-opt stable\%}  and  {\bf \em
		$\lpref$ P-opt stable\%}  allow us to compare our output with any stable matching in the original $\HR$ instance. 
For all data sets except {\bf R3} we observe that we consistently have large values for the first parameter and small values for the second.
For the data sets {\bf R3}, the underlying $\HR$ instance admits a close to $\AAA$-perfect stable matching. This possibly explains the 
unusual values for the two parameters for all cost functions corresponding to {\bf R3}.
	\item The percentage of blocking pairs and blocking agents are parameters of interest when translating the
		solution from $\SMFQ$ setting back to the underlying $\HR$ instance. 
		We note that for real instances (except {\bf R3}), the percentage of 
blocking pairs is bounded by 5\% whereas percentage blocking agents is bounded by 20\%. We see a variation of these parameters
across different algorithms (especially in synthetic data sets -- see for instance ${\bf S1}$ where $\sf ALG$, Algorithm~\ref{algo:algo_mincostset_arb} perform significantly better as compared to LP $\OPT$ and $\mid\hspace{-0.1cm}\BBB\hspace{-0.1cm}\mid$-approx). 
		As mentioned earlier, the percentage of violation is the price the $\SMFQ$ 
		setting pays to achieve $\AAA$-perfectness. It is noted that LP $\OPT$ also suffers large violation of the quotas
to achieve $\AAA$-perfectness. We observe that in almost all instances $\mid\hspace{-0.1cm}\BBB\hspace{-0.1cm}\mid$-approximation algorithm
achieves the minimum value of percentage total violation.
\end{itemize}

From our empirical evaluations, we conclude that in terms of approximation guarantee as well as other parameters of interest
$\mid\hspace{-0.1cm}\BBB\hspace{-0.1cm}\mid$-approximation algorithm seems a practically appealing algorithm. 
An added advantage of the algorithm is that the same algorithm provides an exact solution for the
$\minmax$ problem on the same $\SMFQ$ instance.

\section{Concluding Remarks}\label{sec:disc}
In this work we propose cost-based allocation for the bipartite matching problem under the two-sided preference lists setting.
We propose and investigate the problem of computing $\AAA$-perfect envy-free matchings under two optimization criteria, namely $\minmax$ and $\SMFQUSET$ problems and prove a sharp contrast in their complexity.
We present an efficient algorithm for $\minmax$ problem and 
three approximation algorithms for $\SMFQUSET$ on general instances.
Our empirical evaluations indicate that the approximation algorithms perform reasonably well even on the parameters like percentage of blocking agents
and average rank, for which the optimization is not inherent in the problem. 
We present a novel LP for $\SMFQUSET$ and a primal dual approximation algorithm for a special hard case.
We remark that our $\NP$-hardness result implies that the $\SMFQUSET$ problem is para-$\NP$-hard when
parameterized on the length of an agent's preference list or the number of distinct costs in the instance.

We note that for the $\SMFQUSET$ problem, there is a gap between the upper bound and the lower bound
shown in this work. A specific open direction is to show that $\SMFQUSET$ does not admit a constant factor approximation.
It will be also interesting to see if the $\NP$-hardness holds even when the instance has distinct costs.
Another open direction is to design cost functions that capture the logistic requirements in real-world.

\bibliography{refs}

\end{document}